\definecolor{orange}{rgb}{1,.5,0}
\newtheorem{theorem}{Theorem}[section] 
\journal{Computational Statistics \& Data Analysis}
\begin{document}

\begin{frontmatter}

\title{Transdimensional Approximate Bayesian Computation for Inference on Invasive Species Models with Latent Variables of Unknown Dimension}

\author{Oksana A. Chkrebtii\fnref{myemail}}
\address{Department of Statistics, The Ohio State University, Columbus, OH, USA}
\fntext[myemail]{email: oksana@stat.osu.edu}

\author{Erin K. Cameron\fnref{currentaddress}}
\address{Department of Biological Sciences, University of Alberta, Edmonton, AB, Canada}
\fntext[currentaddress]{Current address: Metapopulation Research Group, Department of Biological and Environmental Sciences, 
PO Box 65 (Viikinkaari 1), 00014 University of Helsinki, Finland}

\author{David A. Campbell}
\address{Department of Statistics \& Actuarial Science, Simon Fraser University, Surrey, Canada}

\author{Erin M. Bayne}
\address{Department of Biological Sciences, University of Alberta, Edmonton, AB, Canada}


\begin{abstract}
Accurate information on patterns of introduction and spread of non-native species is essential for making predictions and management decisions. In many cases, estimating unknown rates of introduction and spread from observed data requires evaluating intractable variable-dimensional integrals. In general, inference on the large class of models containing latent variables of large or variable dimension precludes exact sampling techniques. Approximate Bayesian computation (ABC) methods provide an alternative to exact sampling but rely on inefficient conditional simulation of the latent variables.  To accomplish this task efficiently, a new transdimensional Monte Carlo sampler is developed for approximate Bayesian model inference and used to estimate rates of introduction and spread for the non-native earthworm species \textit{Dendrobaena octaedra} (Savigny) along roads in the boreal forest of northern Alberta.  Using low and high estimates of introduction and spread rates, the extent of earthworm invasions  in northeastern Alberta was simulated to project the proportion of suitable habitat  invaded in the year following data collection. 
\end{abstract}

\begin{keyword}
 likelihood-free inference \sep Markov chain Monte Carlo \sep non-native earthworms \sep reversible jump.
\end{keyword}

\end{frontmatter}

\linenumbers


\section{Introduction}

Biological invasions are occurring at unprecedented rates worldwide \citep{Ricciardi2007} but often remain undetected until invading species have spread extensively.  Detailed records documenting the time of initial introduction and subsequent changes in distribution are therefore not available for many invasions \citep{Fang2005}.  This information is critical for projection of future expansion and for development of appropriate management strategies \citep{Abbott2006}.  However, in cases where long-term temporal data is unavailable, current distribution patterns of non-native species can be used to infer how the invasion process may have occurred \citep{Fang2005,McIntireFajardo2009}.

Invasions that occur below-ground, such as earthworm invasions, can be particularly difficult to track over time and may initially proceed unnoticed.   Non-native earthworms are currently spreading in many forests across North America \citep{FrelichEtAl2006, HendrixEtAl2008} but have limited ability to disperse actively \citep{MarinissenBosch1992}.  Therefore, passive jump dispersal via abandonment of bait by anglers and transport along roads via vehicle traffic is thought to be important in their spread \citep{GundaleEtAl2005, FrelichEtAl2006, CameronEtAl2007}.  In northern hardwood and boreal forests, which are devoid of native earthworm species, earthworm invasions are causing significant changes to nutrient cycling and soil structure \citep{BohlenEtAl2004, FrelichEtAl2006,HaleEtAl2006}.  These changes have led to cascading effects on songbird \citep{LossBlair2011} and plant communities \citep{ScheuParkinson1994, HaleEtAl2006, NuzzoEtAl2009}.   
Because earthworms can affect other species directly, as well as indirectly via changes in the physical environment, their invasions may cause substantial changes over a large area of the Canadian boreal forest in the future. Accurate estimates of earthworm introduction and spread rates in this region are thus critical for the development of appropriate management strategies.

For many invasive species, spread occurs via a combination of diffusive spread around invaded sites and jump dispersal to new locations (i.e., stratified diffusion; \cite{ShigesadaEtAl1995}).  Because even rare long-distance jump dispersal events result in faster spread than would be expected with diffusive spread alone \citep{ShigesadaEtAl1995}, estimates of both introduction and spread rates are typically needed to predict the future spatial extent of an invasive species. Introductions of invasive species can be described using point process models, while models for diffusive spread following an introduction are application-dependent and can be complex (e.g., \cite{IllianEtAl2009}). As a result, data likelihoods for such models are not often available in closed form.

Approximate Bayesian computation (ABC, \cite{DiggleGratton1984,BeaumontEtAl2002,MarjoramEtAl2003}) has proven to be a useful approach for approximate inference on intractable likelihood problems, including distinguishing among introduction scenarios and invasion routes of non-native species (e.g., \cite{MillerEtAl2005, LombaertEtAl2010}).  It relies on repeated model simulation in the absence of an explicit likelihood function.  Proposed parameter values are accepted or rejected based on the distance between low-dimensional summaries of a conditional model realization and the observed data.  

ABC algorithms are generally computationally intensive, and can result in unreliable estimates when too few proposals are accepted (e.g., \cite{FearnheadPrangle2012}, lemma 1). Existing implementations typically construct a Markov chain with a dependent proposal mechanism defined on a fixed probability space \citep{MarjoramEtAl2003}. However, in many models, the dimensionality of the parameter space varies \citep{Green1995,RichardsonGreen1997}.  
In this paper we develop a transdimensional ABC algorithm that allows efficient exploration of parameter subspaces of variable dimension.  This approach is applied to estimate introduction and spread rates of non-native earthworms in the boreal forest. 

The paper is organized as follows.  
Section \ref{sec:methods} describes a general point-process model for introduction of an invasive species and its subsequent spread.  As the likelihood function is an intractable integral of variable dimension, a basic ABC algorithm for obtaining a sample from an approximate posterior distribution associated with this model is described.  The novel transdimensional ABC approach is then presented as an efficient alternative to existing sampling methods.
Section \ref{sec:wormstudy} introduces the motivating problem of estimating the rates of introduction and spread of the earthworm species \textit{Dendrobaena octaedra} (Savigny) in northern Alberta by combining information from two datasets.  The hierarchical model of the population dynamics is described, and algorithmic implementation details are provided.  
Inference results are summarized in Section \ref{sec:results}, and are then utilized in a spatio-temporal simulation model to project the extent of invasion in a sample region in Section \ref{sec:map}.  Concluding remarks are presented in Section \ref{sec:conclusion}.


\section{Approximate Bayesian Computation}\label{sec:methods}

Let us consider a model for the invasion of a non-native species over a given region with the vector of unknown model parameters, $\boldsymbol\theta$, such as rates of introduction, birth, predation, or spread.  A stochastic mechanism generates $k\in \mathbb{N}$ introduction events on the spatio-temporal horizon $\mathcal{H}\subset\mathbb{R}^d$.  Denote by $\mbox{x}^{(i)}\in \mathcal{H}$ the location of $i$th, $1\leq i \leq k$, latent introduction event, and define the vector concatenation,
\[
{\bf x}_k = [\mbox{x}^{(1)}, \cdots , \mbox{x}^{(k)}]^\top \in \mathcal{H}^{k} \subset \mathbb{R}^{dk}.
\]

\noindent  The spatio-temporal spread resulting from the introductions ${\bf x}_k$ follows a deterministic or stochastic model, generating the data $Y\in\mathcal{Y}$. 
The dependence structure of model components on the parameters $\boldsymbol\theta$ is illustrated in  Figure \ref{fig:general}.

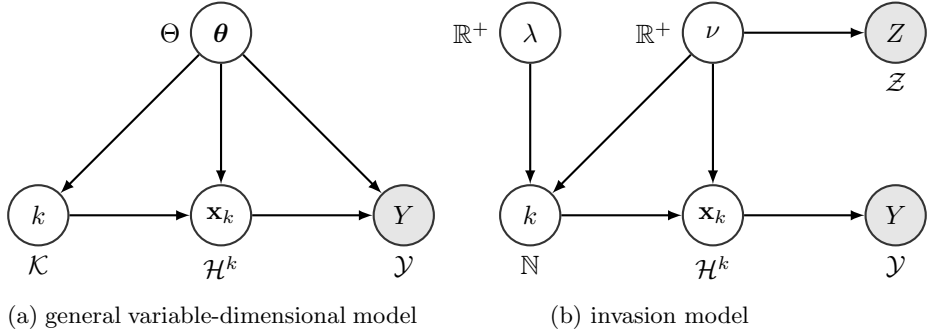
\begin{figure}
        \centering
        \begin{subfigure}[b]{0.45\textwidth}
                \centering
               \begin{tikzpicture}
			\tikzstyle{main}=[circle, minimum size = 8mm, thick, draw =black!80, node distance = 16mm]
			\tikzstyle{connect}=[-latex, thick]
			\tikzstyle{box}=[rectangle, draw=black!100]
 		 	\node[main, fill = white!100]  (k) [label=below:$\mathcal{K}$] {$k$};
  			\node[main] (x) [right=of k,label=below:$\mathcal{H}^k$] {${\bf x}_k$};
 		 	\node[main] (theta) [above=of x,label=left:$\Theta$] {$\boldsymbol\theta$};
 			\node[main, fill = black!10] (Y) [right=of x,label=below:$\mathcal{Y}$] {$Y$};
  			\path 	(theta) edge [connect] (k)
					(k) edge [connect] (x)
        				(theta) edge [connect] (x)
        				(x) edge [connect] (Y)
        				(theta) edge [connect] (Y);
		\end{tikzpicture}
                \caption{general variable-dimensional model}
                \label{fig:general}
        \end{subfigure}
        ~ 
        \begin{subfigure}[b]{0.45\textwidth}
                \centering
                \begin{tikzpicture}
			\tikzstyle{main}=[circle, minimum size = 8mm, thick, draw =black!80, node distance = 16mm]
			\tikzstyle{connect}=[-latex, thick]
			\tikzstyle{box}=[rectangle, draw=black!100]
  			\node[main, fill = white!100]  (k) [label=below:$\mathbb{N}$] {$k$};
  			\node[main] (x) [right=of k,label=below:$\mathcal{H}^k$] {${\bf x}_k$};
 	 		\node[main] (v) [above=of x,label=left:$\mathbb{R}^+$] {$\nu$};
  			\node[main] (lambda) [above=of k,label=left:$\mathbb{R}^+$] {$\lambda$};
  			\node[main, fill = black!10] (Y) [right=of x,label=below:$\mathcal{Y}$] {$Y$};
   			\node[main, fill = black!10] (Z) [above=of Y,label=below:$\mathcal{Z}$] {$Z$};
   			\path 	(v) edge [connect] (k)
					(k) edge [connect] (x)
        				(v) edge [connect] (x)
        				(x) edge [connect] (Y)
        				(lambda) edge [connect] (k)
					(v) edge [connect] (Z);
		\end{tikzpicture}
                \caption{invasion model}
                \label{fig:worms}
        \end{subfigure}
        \caption{Directed acyclic graph diagram for (a) a general variable-dimensional invasion model, (b) the specific invasion model studied in our analysis;  arrows denote conditional dependence of parameters; labels next to each node denote the parameter spaces. }\label{fig:DAG}
\end{figure}

Exact inference for model parameters $\boldsymbol\theta$ is based on the posterior probability density,
\begin{eqnarray}
\pi(\boldsymbol\theta \mid Y) \; \propto \;  p(Y \mid \boldsymbol\theta) \; \pi(\boldsymbol\theta), 
\label{eqn:posterior}
\end{eqnarray}

\noindent where the first factor on the right hand side is the likelihood of the data given $\boldsymbol\theta$, and the second is its prior. The model parameters impact the data indirectly through the number and configuration of introduction locations, ${\bf x}_k$, which are not themselves of interest.  
Therefore the likelihood  in expression (\ref{eqn:posterior}) is obtained by integration with respect to the number and location of the latent introduction events:

\begin{equation}
p(Y \mid \boldsymbol\theta)  = \sum_{k=0}^\infty \int_{\mathcal{H}^k} p(Y \mid {\bf x}_k,\boldsymbol\theta) \;  p({\bf x}_k \mid k,\boldsymbol\theta) \; p(k \mid \boldsymbol\theta) \; \mbox{d}{\bf x}_k.\label{eq:int4Like}
\end{equation}

Estimation for this class of models poses analytical and computational challenges.  In practice, evaluating $p(Y \mid {\bf x}_k,\boldsymbol\theta)$ is often infeasible, especially under complex models for interaction and spread, or when the data is only partially observed in space.  Furthermore, the domain of integration in (\ref{eq:int4Like}) 
 changes dimension with the dimension of the latent variable, leading to further difficulties in obtaining a closed form likelihood.  
For the motivating invasive species model in Section \ref{sec:wormstudy}, even under simple deterministic linear spread in one dimension, evaluating the likelihood  (\ref{eq:int4Like}) presents a geometric and combinatorial challenge whose complexity renders exact inference practically limited.  As a result, approximate simulation-based inference, such as ABC, often becomes the only tool available to approach such problems.




\subsection{Approximate Bayesian inference for latent variable models}

Assume that the model can be forward-simulated to generate a synthetic dataset $D\in\mathcal{Y}$ from $p(D \mid {\bf x}_k,\boldsymbol\theta)$.  Under deterministic spread, $p(D \mid {\bf x}_k,\boldsymbol\theta)$ is a point mass density around a data simulation function.  

ABC replaces (\ref{eqn:posterior}) with an approximation of the $D$-augmented posterior density marginalized over the synthetic data, $D$, and latent variables, $(k,{\bf x}_k)$.  The approximation is based on the use of a low-dimensional summary statistic function $s(\cdot)$, as follows:
\begin{align*}
\pi(\boldsymbol\theta  \mid Y) & \propto  
\sum_{k}^{\infty} \int_{\mathcal{H}^{k}}
\pi(\boldsymbol\theta) 
\; p(k \mid \boldsymbol\theta) 
 \; p({\bf x}_k \mid k, \boldsymbol\theta)
\; p(Y \mid {\bf x}_k,\boldsymbol\theta)
\; \mbox{d} {\bf x}_k  \\
&\propto  
\sum_{k=0}^{\infty}\int_{\mathcal{Y}} \int_{\mathcal{H}^{k}}
\pi(\boldsymbol\theta)  
\; p(k \mid \boldsymbol\theta) 
\;  p({\bf x}_k \mid k, \boldsymbol\theta)
\; p(D \mid {\bf x}_k,\boldsymbol\theta)
\; p(Y \mid s(Y),{\bf x}_k,\boldsymbol\theta)
 \; p(s(Y) \mid D) 
\; \mbox{d} {\bf x}_k
\; \mbox{d}  D\\
& \approx 
\sum_{k=0}^{\infty} \int_{\mathcal{Y}}\int_{\mathcal{H}^{k}}
\pi(\boldsymbol\theta) 
\; p(k \mid \boldsymbol\theta) 
\;  p({\bf x}_k \mid k, \boldsymbol\theta)
\; p(D \mid {\bf x}_k,\boldsymbol\theta)
\; p(s(Y) \mid D) 
\; \mbox{d} {\bf x}_k
\;  \mbox{d}  D\\
& \equiv \pi_{ABC}(\boldsymbol\theta \mid Y).
\end{align*}

 The accuracy of the ABC approximation depends on the degree of sufficiency of the data summary $s(Y)$ for $(\boldsymbol\theta, k, {\bf x}_k)$, and on the discrepancy  term $p(s(Y) \mid D)$ relating the summarized synthetic data to the observed data by a kernel $K_\epsilon$ with bandwidth $\epsilon\geq 0$: 
\begin{align*}p(s(Y)\mid D) &= K_\epsilon\left[s(Y),s(D)\right].\end{align*}

\noindent    
When $s(Y)$ is sufficient for $(\boldsymbol\theta, k, {\bf x}_k)$ and $\epsilon = 0$, the ABC posterior density is exact.  In other words, when $p(Y \mid s(Y),{\bf x}_k, \boldsymbol\theta) =p(Y \mid s(Y)) $ and $p(s(Y) \mid D)$ is a  point mass function centered at $s(D)$, then $\pi(\boldsymbol\theta \mid Y)= \pi_{ABC}(\boldsymbol\theta \mid Y)$.  
However, low-dimensional sufficient statistics cannot in general be obtained when the likelihood is unknown, so the data summaries employed for ABC are chosen subjectively, leading to an approximate posterior density \citep{FearnheadPrangle2012}.   

The bandwidth $\epsilon$ controls the tolerance for the discrepancy between the summarized real and synthetic data. Effectively, $\epsilon$ controls the trade-off between the dimension of the summary statistic and Monte Carlo error from too few data matches (e.g., \cite{FearnheadPrangle2012}, lemma 1).

Under the data simulation model, the ABC posterior for the general invasion model (Figure \ref{fig:general}) is:
\begin{align}
 \pi_{ABC} \left(\boldsymbol\theta  \mid  Y \right)  & \; \propto
\pi\left(\boldsymbol\theta\right)
\;
\int_{\mathcal{Y}} 
\underbrace
{\sum_{k=0}^\infty
\int_{\mathcal{H}^k}   
p\left(D  \mid  {\bf x}_k, \boldsymbol\theta \right)
\; p \left({\bf x}_k \mid k, \boldsymbol\theta \right)
\; p\left(k  \mid  \boldsymbol\theta\right)
\; \mbox{d} {\bf x}_k}
_{p(D \mid \boldsymbol\theta)} 
 \; p\left( s(Y) \mid D\right)
\; \mbox{d}D.
\label{eqn:ABCposterior}
\end{align}

\noindent Estimates of the mean, mode, or quantiles of $\pi_{ABC} \left(\boldsymbol\theta  \mid  Y \right)$ can now be obtained from a Monte Carlo sample.  One simple method to produce such a sample is rejection ABC  \citep{SissonFan2010}, shown in Algorithm \ref{alg:rejectionABC}. More efficient ABC-MCMC sampling strategies rely on dependent proposals for parameters. However, our introduction model involves the parameter  ${\bf x}_k$ whose dimension changes with $k$, and therefore we develop a transdimensional sampling approach to avoid further approximation (such as the approximation of  \cite{congdon2006}).

\begin{algorithm}
\caption{rejection ABC}
\label{alg:rejectionABC}
\begin{spacing}{1.5}
\begin{algorithmic} 
\STATE at iteration $\ell = 0$ initialize $(\boldsymbol\theta,k,{\bf x}_{k}, D)^{(\ell)}$
\FOR{iteration $\ell = 1$ {\bf to} $L$}
\STATE propose $\boldsymbol\theta^\star \sim \pi(\boldsymbol\theta^\star)$ 
\STATE propose $k^\star \sim \pi(k)$
\STATE conditionally simulate ${\bf x}_{k^\star}^\star\sim p({\bf x}_{k^\star}^\star\mid k^\star, \boldsymbol\theta^\star)$
\STATE generate $D^\star$ from $p(D \mid {\bf x}_{k^\star}^\star, \boldsymbol\theta^\star)$ 
\STATE with probability $ K_\epsilon\left[s(Y), s(D^\star))\right]$ set $(\boldsymbol\theta,k,{\bf x}_{k}, D)^{(\ell)}\leftarrow(\boldsymbol\theta^\star,k^\star,{\bf x}_{k^\star}^\star,D^\star)$, \\ otherwise set $(\boldsymbol\theta,k,{\bf x}_{k}, D)^{(\ell)}\leftarrow(\boldsymbol\theta,k,{\bf x}_{k}, D)$
\ENDFOR
\end{algorithmic}
\end{spacing}
\end{algorithm}

\subsection{Transdimensional Approximate Bayesian Computation}
\label{sec:transABC}

Addressing the problem of low acceptance rates and unreliable estimates from the approximation in (\ref{eqn:ABCposterior}), we develop an  efficient algorithm for obtaining samples.  Instead of relying on conditional simulation on the variable-dimensional model subspace,  a transdimensional approach \citep{Green1995} is adopted allowing proposals between probability spaces of different dimensions.

In order to construct transitions between all model spaces of different dimension, it is sufficient to define  pair-wise transitions between all model index pairs $(i,j) \in\mathcal{K}\times\mathcal{K}$, with associated model-specific parameter vectors ${\bf x}_i \in \mathbb{R}^{n_i}$ and ${\bf x}_j\in\mathbb{R}^{n_j}$.  
Model-specific vectors are augmented by auxiliary random variables, ${\bf u}_i\in\mathbb{R}^{m_i}$ and ${\bf u}_j\in\mathbb{R}^{m_j}$ respectively, chosen by convenience under the constraint $n_i + m_i = n_j + m_j$, which ensures that both model spaces have the same dimension.

The diffeomorphic transformation $\phi_{ij}: \mathbb{R}^{n_i}  \times \mathbb{R}^{m_i} \to \mathbb{R}^{n_j} \times  \mathbb{R}^{m_j}$ is defined corresponding to the mapping $\left({\bf x}_i, {\bf u}_i\right) \to \left({\bf x}_j, {\bf u}_j\right)$.  For a given transformation $\phi_{ij}$ with Jacobian $|J_{ij}| = | \partial \phi_{ij}\left({\bf x}_i, {\bf u}_i\right)/\partial ({\bf x}_i, {\bf u}_i) |$, a proposed move from model $i$ to $j$ is constructed by first drawing a vector ${\bf u}_i$ from the density $p_{i}({\bf u}_i)$, and then obtaining $({\bf x}_j,{\bf u}_j) = \phi_{ij}({\bf x}_i, {\bf u}_i)$.  The inverse mapping $\left({\bf x}_j, {\bf u}_j\right) \to \left({\bf x}_i, {\bf u}_i\right)$ can be accomplished by using the transformation $\phi_{ji}$.  For simplicity, one generally sets $m_j=0$ for all $(i,j)$ such that $n_j\geq n_i$.   

The resulting transdimensional Metropolis-Hastings random walk ABC sampler is described in Algorithm \ref{alg:rjmcmcABC} and produces a Markov chain whose stationary distribution has the desired ABC posterior density (\ref{eqn:ABCposterior}; proof is provided in the Appendix) .

\begin{algorithm}
\caption{transdimensional ABC}
\label{alg:rjmcmcABC}
\begin{spacing}{1.5}
\begin{algorithmic} 
\STATE at iteration $\ell=0$ initialize $(\boldsymbol\theta, k,{\bf x}_k, D)^{(\ell)}$ from an area of positive posterior probability (e.g. via rejection ABC)
\FOR{ $\ell = 1$ {\bf to} $L$}
\STATE propose model parameter $\boldsymbol\theta^\star \sim q(\boldsymbol\theta^\star \mid \boldsymbol\theta)$ and move type from $k$ to $k^\star \sim q(k^\star \mid k)$
\STATE sample ${\bf u}_{k} \sim p_{kk^\star}({\bf u}_{k})$ 
\STATE obtain $({\bf x}_{k^\star}^\star, {\bf u}_{k^\star}^\star) = \phi_{kk^\star}({\bf x}_k, {\bf u}_k)$
\STATE generate synthetic data $D^\star$ from  $p(D^\star  \mid  {\bf x}^\star_{k^\star},\boldsymbol\theta^\star )$
\STATE calculate:
\[
\alpha =
\frac{ p\left( s(Y) \mid D^\star\right)}
{ p\left( s(Y) \mid D\right)}
\times
\frac{ p({\bf x}^\star_{k^\star} \mid k^\star,\boldsymbol\theta^\star)
p(k^\star \mid \boldsymbol\theta^\star)\pi(\boldsymbol\theta^\star) }
{p({\bf x}_k \mid k,\boldsymbol\theta)
p(k \mid \boldsymbol\theta) 
\pi(\boldsymbol\theta)} 
\times
\frac{q(\boldsymbol\theta^\star \mid \boldsymbol\theta)  
q(k^\star \mid k)
p_{k^\star k}({\bf u}^\star_{k^\star})}
{q(\boldsymbol\theta \mid \boldsymbol\theta^\star)  
q(k \mid k^\star)
p_{kk^\star}({\bf u}_k)
} 
\;|J_{kk^\star}|
\]
\STATE  with probability $\min\{1,\alpha\}$ set $(\boldsymbol\theta, k,{\bf x}_k, D)^{(\ell)} \leftarrow(\boldsymbol\theta^\star,k^\star,{\bf x}_{k^\star}^\star,D^\star)$, \\ otherwise return $(\boldsymbol\theta, k,{\bf x}_k, D)^{(\ell)}\leftarrow(\boldsymbol\theta,k,{\bf x}_k,D)^{(\ell-1)}$ 
\ENDFOR
\end{algorithmic}
\end{spacing}
\end{algorithm}



\section{Motivating application}\label{sec:wormstudy}

\subsection{Field methods}

\begin{figure}
        \centering
        \includegraphics[trim= 0cm 4cm 0cm 0cm, width = 4in]{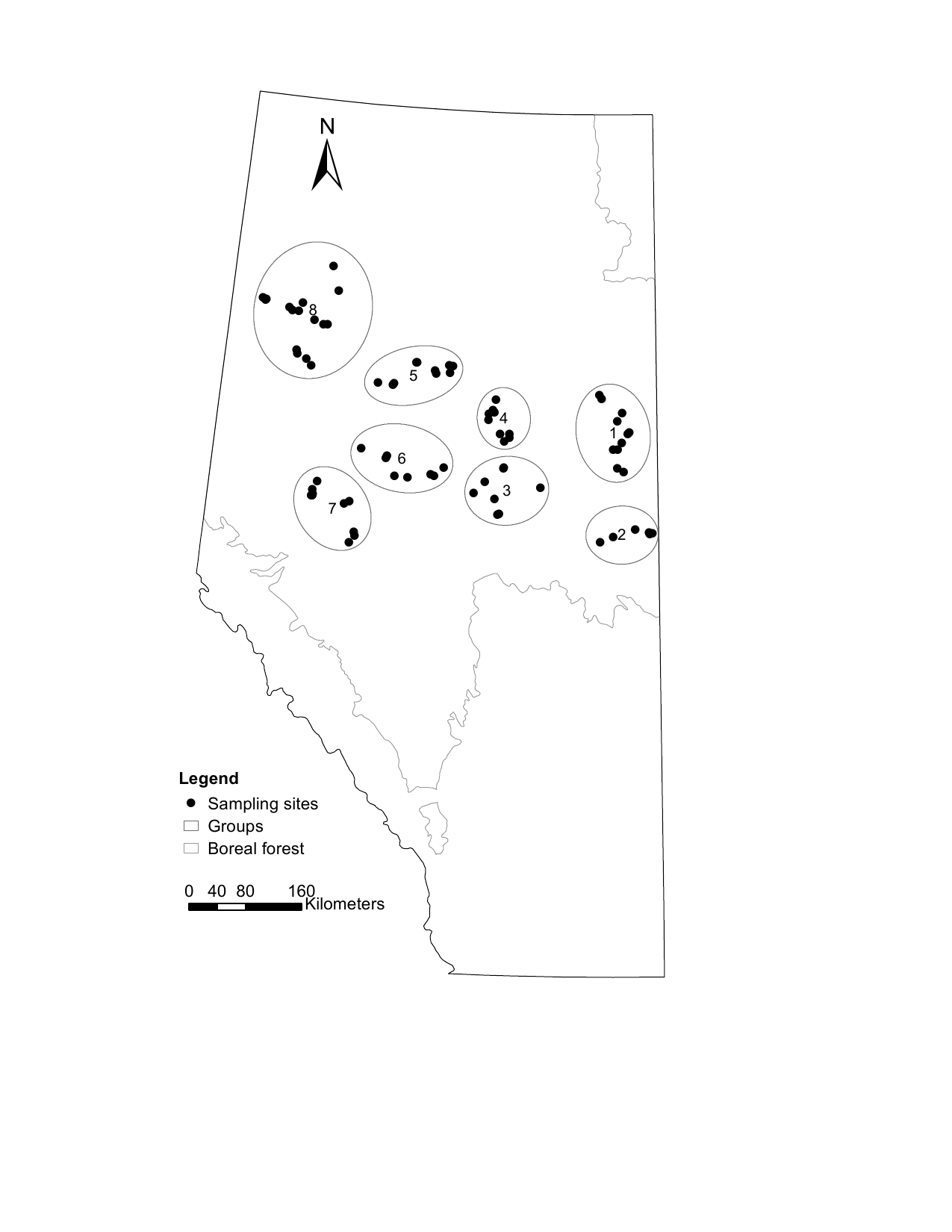}
        \caption{Survey locations (n=78), represented by black circles, within the boreal forest of northern Alberta. Sites were clustered into 8 groups according to spatial location, represented by ovals. }\label{fig:map}
\end{figure}

Data on earthworm occurrence were collected in the boreal forest of northern Alberta, Canada between 54.4$^\circ$N and 57.8$^\circ$N latitude and 110.1$^\circ$W and 119.8$^\circ$W longitude (Figure \ref{fig:map}; see \cite{CameronBayne2009} for further description).  We focused on the litter-dwelling earthworm \textit{Dendrobaena octaedra} (Savigny), which is introduced from Europe and is the most common earthworm species in northern Alberta \citep{CameronEtAl2007,CameronBayne2009}.  This species is not commonly used as bait and therefore transport by vehicles is the key mechanism involved in passive dispersal \citep{TiunovEtAl2006}.  Earthworms were sampled at roads ranging in age from 6 to 56 years old in 2006.  At each road, sampling occurred along a 50.25 m transect which ran parallel to the road.  Leaf litter was hand-sorted to determine earthworm occurrence in six 0.0625 m$^2$ (25 by 25 cm) quadrats with a 9.75m gap between quadrats on each transect.   Transects were 1-2 m into the forest from its edge, with alternate quadrats located $\sim$5 m farther into the forest interior.  We will hereafter refer to this as the spatial data $Y$.

To examine the spread rate within our study area, additional temporal data on earthworm occurrence were collected along transects perpendicular to roads where earthworms were already established (n = 26).  These transects were 500 m long, with sampling quadrats every 50 m.  Sampling occurred in 2006 \citep{CameronBayne2009} and again in 2012/13.  We subtracted the distance of the farthest quadrat with earthworms present in 2006 from the distance of the farthest occupied quadrat in 2012/13 and divided by 6 years.
We will refer to this as the temporal data $Z$.

\subsection{Model inference for earthworm invasions}
\label{sec:earthwormmodel}

We now propose a model for the invasion of earthworms along one spatial dimension over time.  Our framework is applicable for any model with a latent variable structure, requiring integration with respect to parameters of variable dimension, as long as data can be obtained by forward-simulation from the model.

The 78 observation transects were grouped into eight categories based on their spatial location on the landscape (Figure \ref{fig:map}).  Each of these groups contained young, intermediate-aged, and old roads. Observations $Y_{gr}\in\{0,1\}^6$  for road $r=1,\ldots,n_g$ in group $g = 1,\ldots,8$ consist of binary error-free measurements of presence or absence at the sampled 6 quadrats of the corresponding transect.   For notational clarity, we omit dependence of the introduction locations on $k$ and instead define ${\bf x}_{gr}$ to be the value of ${\bf x}_k $ associated with group $g$ and road $r$.  

For every road in the study, we model earthworm introductions and spread over the spatio-temporal horizon $\mathcal{H}_{gr}$, defined by the largest possible extent of activity that can affect the observed data under our model, as shown in Figure \ref{fig:simhorizon}.
We assume that the introduction rate is constant within each of the 8 selected road groups per unit area of $\mathcal{H}_{gr}$ by modelling the $k$ introductions according to a homogeneous space-time Poisson process with rate $\lambda_g  \times \mbox{Area of }\mathcal{H}_{gr} \in\mathbb{R}^+$, measured in introductions/(m$\times$yr) to account for different road ages.   

\begin{figure}
        \centering
        \includegraphics[trim = 2cm 0cm 0cm 2cm, width = 5in]{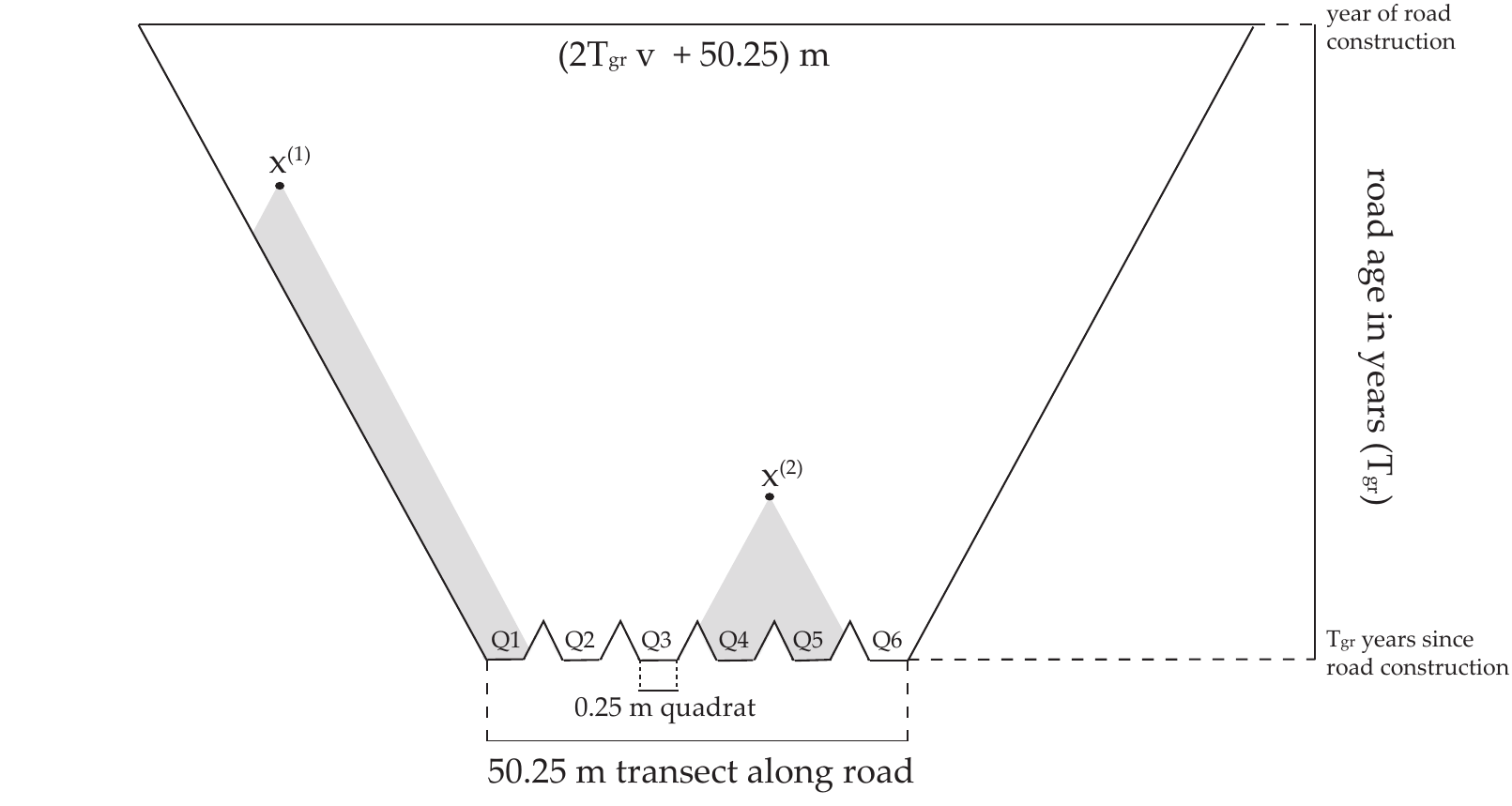}
        \caption{Simulation horizon, $\mathcal{H}_{gr}$, consisting of all spatio-temporal locations where an introduction event will affect the data according to our spread model. Two introductions and subsequent spread are shown shaded in grey.  The data generated by the illustrated invasion is $m\left({\bf x}_{gr}, \nu, T_{gr} \right) = [1, 0, 0, 1, 1, 0].$}\label{fig:simhorizon}
\end{figure}

For a $T_{gr}$-year-old road, earthworms are assumed to spread linearly in time from point-source locations, ${\bf x}_{gr}$, at an unknown constant rate,  $\nu\in\mathbb{R}^+$ (measured in m/yr), defining the geometry of the horizon $\mathcal{H}_{gr} = H(\nu,T_{gr}) \in \mathbb{R}^2$. If earthworms spread far enough from ${\bf x}_{gr}$ to overlap with a sampling quadrat, a presence indicator is recorded as shown in Figure \ref{fig:simhorizon}.  We denote the resulting presence or absence of worms at the measurement locations by $m({\bf x}_{gr},\nu,T_{gr})$. Since we assume that presence or absence of worms in a quadrat is measured without error, the data generating mechanism $D=m({\bf x}_{gr},\nu,T_{gr})$ is conditionally deterministic. For expositional clarity we use the point mass density, $ \delta_{m\left({\bf x}_{gr}, \nu, T_{gr} \right)}$, to highlight our use of a conditionally deterministic data generating mechanism.
The resulting hierarchical model, illustrated in Figure \ref{fig:worms}, is:

\begin{equation}
\begin{array}{r@{}l}
Y_{gr}  \mid  {\bf x}_{gr}, \nu & \; \sim \; \delta_{m\left({\bf x}_{gr}, \nu, T_{gr} \right)},\\
{\bf x}_{gr}  \mid  k_{gr}, \nu & \; \sim \; \mbox{Uniform}\big(H(\nu,T_{gr})\big), \\
 k_{gr}  \mid \lambda_g, \nu  & \; \sim \; \mbox{Poisson}\big(\lambda_g  \times \mbox{Area of }H(\nu,T_{gr}) \big),\\
Z \mid \nu& \; \sim \; \mbox{Normal}(\nu, \sigma^2),\\
\nu & \; \sim \; \mbox{Normal}(10,10^2),\\
\lambda_g & \; \sim \; \mbox{Exponential}(1),\\
\sigma^2 & \; \propto \; 1/\sigma^2.
\end{array}
\label{eqn:hierarchy}
\end{equation}

\noindent  
The prior on $\nu$ is based on related studies of earthworm spread rates \citep{Hale2008} but has an inflated variance to account for differences in soil characteristics and climate between studies, while
the prior on $\lambda_g$ reflects the expectation that introduction events should be relatively rare in the boreal forest.

Under this simple scenario, the space-time horizon $\mathcal{H}_{gr}$ can be subdivided into distinct regions whose shape depends on $\nu$ and road age, and the effect on the data of an introduction event occurring within each region can be computed exactly.  Therefore, we can obtain $p(Y_{gr}  \mid  {\bf x}_{gr}, \nu)$ in closed form using a geometric and combinatorial argument, and subsequently compute the likelihood $p(Y_{gr}  \mid  \lambda, \nu)$ through integration over these distinct regions, as shown in the Supplementary Materials.  In practice, such lengthy likelihood calculations are not feasible for general practitioners and their complexity introduces the potential for errors in implementation.  However, forward simulation is often fast and computationally accessible for practitioners.  As the difficulties with exact implementation increase with all but the simplest models, we illustrate transdimensional ABC as a practical alternative for this problem.
 
Our inference about introduction and spread is based on the marginal ABC posterior density:
\begin{align}
& \pi_{ABC} \left(\lambda_1,\ldots,\lambda_8,\nu, \sigma^2  \mid  Y_{1,1},\ldots Y_{8,n_g} , Z\right)  
 \; \propto \;
p\left(\sigma^2 \mid \nu, Z\right)\,
p\left(\nu \mid Z\right)\;
\label{eqn:wormsABCposterior}\\
& \quad
\prod_{g=1}^{8}\;\prod_{r=1}^{n_g}\;
\left[\pi\left(\lambda_g\right)\,
p\left( s(Y_{gr}) \mid D_{gr}\right) 
  \sum_{k_{gr}=0}^\infty
\int_{H(\nu,T_{gr})}   
 p \left({\bf x}_{gr} \mid k_{gr}, \nu \right)
 p\left(k_{gr}  \mid  \lambda_g, \nu\right)
\mbox{d}{\bf x}_{gr}\right]. \nonumber
\end{align}

\noindent
We use the transdimensional ABC sampler described in Algorithm \ref{alg:birthdeathABC}, which is a variant of Algorithm  \ref{alg:rjmcmcABC}, implemented with a birth-death proposal for the number of introductions for each road.  A birth-death proposal consists of either increasing by one (birth), decreasing by one (death), or maintaining unchanged the number of introduction events with some probability. This sampling algorithm mimics the data-generating process to produce synthetic data efficiently, which leads to the fast mixing desired in an ABC algorithm.
The birth-death proposal is a special case of the general transdimensional proposal with Jacobian of the transformation given by $|J|=1$, because $\phi_{ij}$ is taken to be the identity function of the auxiliary variables, which are sampled uniformly on $\mathcal{H}_{gr}$.

Due to the efficiency of this algorithm, we were able to use an error tolerance of $\epsilon=0$, with a point mass kernel distance metric to define the term,
\[ p\left( s(Y_{gr}) \mid D_{gr}\right) = \mathbb{I}\left\{s(Y_{gr}) = s(D_{gr})\right\}.\] 

\noindent This corresponds to accepting proposed parameters only when the resulting summarized simulated data matches the observed simulated data exactly.
For the summary statistic, $s$, we chose a 2-dimensional vector consisting of: (i) the number of consecutive occupied quadrats (strings of 1s in $Y_{gr}$), and (ii) the total number of occupied quadrats:
\begin{align*}
s( Y_{gr} ) =\left[ \; \mbox{number of strings of consecutive 1s} \; , \;  \Sigma_{i=1}^6 Y_{gr}^{(i)} \; \right]^\top.
\end{align*}

\noindent
Each consecutive sequence of occupied quadrats (strings of 1s) indicates that at least one introduction must have occurred. The length of each consecutive string of occupied quadrats can help distinguish between a recent introduction (e.g. 1 or 2 consecutive occurrences) and one or more old introductions (e.g. 6 consecutive occurrences).

\begin{algorithm}
\caption{birth-death ABC}
\label{alg:birthdeathABC}
\begin{spacing}{1.5}
\begin{algorithmic} 
\STATE at step $\ell=0$, initialize $(\lambda,\nu, k, {\bf x})^{(\ell)}$ from an area of positive posterior probability and generate $D$ from $m({\bf x},\nu,T)$
\FOR{$\ell = 1$ {\bf to} $L$}
\STATE sample $\sigma^{2(\ell)} \sim p(\sigma^2 \mid \nu^{(\ell-1)} \lambda^{(\ell-1)}, {\bf x}^{(\ell-1)}, k^{(\ell-1)},Y,Z)$ 
\STATE sample $\nu^{(\ell)} \sim p(\nu  \mid  \lambda^{(\ell-1)}, {\bf x}^{(\ell-1)}, k^{(\ell-1)},\sigma^{2(\ell)},Y,Z)$ 
\FOR{$g = 1$ {\bf to} $8$}
\STATE sample $\lambda^{(\ell)}_g \sim p(\lambda \mid \nu^{(\ell)}, {\bf x}_{g}^{(\ell-1)}, k^{(\ell-1)},\sigma^{2(\ell)},Y,Z)$ 
\FOR{$r = 1$ {\bf to} $n_g$}
\STATE sample ${\bf x}^{(\ell)}_{gr}$ from $p({\bf x}_{gr} \mid \nu^{(\ell)}, \lambda^{(\ell)}_g, k^{(\ell-1)}_{gr},\sigma^{2(\ell)},Y_{gr},Z)$ 
\STATE generate synthetic data $D_{gr}$ from $m({\bf x}_{gr},\nu^{(\ell)},T_{gr})$ 
\STATE propose $b \sim \mbox{Discrete Uniform}\{-1,0,1\}$ and set $k^\star_{gr}=k^{(\ell-1)}_{gr}+b$
\IF{$b = 1$}
\STATE sample $\mbox{x}\sim U\left[H(\nu^{(\ell)},T_{gr})\right]$
\STATE ${\bf x}^\star_{gr} = [{\bf x}^{(\ell)}_{gr}; \mbox{x}]$
\ELSIF{$b = -1$}
\STATE generate $d \sim \mbox{Uniform}\{1,\ldots,k_{gr}^{(\ell-1)}\}$   
\STATE ${\bf x}^\star_{gr} = ({\bf x}^{(\ell)}_{gr})_{-d} $
\ELSE
\STATE ${\bf x}^\star_{gr} = {\bf x}^{(\ell)}_{gr}$
\ENDIF 
\STATE generate synthetic data $D^\star_{gr}$ from $m({\bf x}^\star_{gr},\nu^{(\ell)},T_{gr} )$ and calculate:
\[
\alpha =
\frac{ p\left( s(Y) \mid D^\star\right)}
{ p\left( s(Y) \mid D\right)}
\times
\frac{ p({\bf x}^\star_{gr}  \mid k_{gr}^\star)p(k_{gr}^\star) }{p({\bf x}^{(\ell)}_{gr} \mid k^{(\ell)}_{gr}) p(k^{(\ell)}_{gr})}\times\frac{q(k_{gr}^\star\mid k_{gr}^{(\ell-1)})}{  q(k_{gr}^{(\ell-1)}\mid k_{gr}^\star)} 
\]
\IF{ $\min\{1,\alpha\} \geq \mbox{Uniform}(0,1)$}
\STATE set $(k_{gr}^{(\ell)},{\bf x}^{(\ell)}_{gr},D^{(\ell)}_{gr}) \leftarrow (k_{gr}^\star,{\bf x}^\star_{gr},D^\star_{gr})$ 
\ELSE 
\STATE set $(k_{gr}^{(\ell)},{\bf x}^{(\ell)}_{gr},D^{(\ell)}_{gr}) \leftarrow (k_{gr}^{(\ell-1)},{\bf x}^{(\ell-1)}_{gr},D^{(\ell-1)}_{gr})$ 
\ENDIF
\ENDFOR
\ENDFOR
\ENDFOR
\end{algorithmic}
\end{spacing}
\end{algorithm}


\section{Inference Results and Simulation Study}\label{sec:results}

\subsection{Inference on introduction and spread rates}

For this analysis we performed 250 000 iterations of Algorithm \ref{alg:birthdeathABC} and discarded the first 25 000 iterations as burn-in.   A Geweke convergence diagnostic was performed comparing the first 50 000 iterations (post burn-in) to the last \\50 000 iterations for each of $\lambda_g, g = 1,\ldots,8$ and $\nu$ assuming a 4\% autocorrelation.  The test statistic returned p-values in the range of 0.17 to 0.65 for each parameter.
Bivariate posterior density heat maps for the introduction and spread rates, $\nu$ and $\lambda_g$, are shown in Figure \ref{fig:bivariate_posterior}.  The marginal posterior densities of $\lambda_g, g=1,\ldots,8$ can be classified into three categories as shown in the top panel of Figure \ref{fig:posterior_introductions_all}.  Groups 4 and 2 have the lowest posterior median rates of introduction ($1.31\times 10^{-5}$ and $2.69\times 10^{-5}$ introductions/(m$\times$yr) respectively).  Groups 1, 3, 5, and 8 have posterior median introduction rates in the mid-range ($4.03\times 10^{-5}$, $4.12\times 10^{-5}$, $3.50\times 10^{-5}$, and $4.09\times 10^{-5}$ introductions/(m$\times$yr) respectively).  Finally, groups 6 and 7 have the highest posterior median introduction rates ($6.53\times 10^{-5}$, $7.38\times 10^{-5}$ introductions/(m$\times$yr) respectively).  The marginal posterior of $\nu$ is shown in the middle panel of Figure \ref{fig:posterior_introductions_all}, where the posterior median was estimated at 13.93 m/yr with a 95\% credible interval between 10.60 and 16.99 m/yr.   The bottom panel of Figure \ref{fig:posterior_introductions_all} shows the marginal density of spatio-temporal introduction locations, $\pi_{ABC}\left({\bf x}_{1,11} \mid  Y_{1,1},\ldots Y_{8,n_g} , Z\right)$, from a 26 year old road in group 1. The associated presence or absence data, $Y_{1,11}=[1,1,0,0,1,0]$, indicates an absence between two consecutive ``split'' observed presences. The split between subsequent presence measurements induces the distinct marginal density of introductions over the spatio-temporal horizon. This posterior non-uniformity suggests that employing any ABC sampling scheme that generates all introduction events, ${\bf x}_{gr}$, from the uniform prior will be extremely inefficient, and supports the use of a transdimensional approach.

\begin{figure}
 \centerline{\includegraphics[trim = 1cm 0cm 1cm 2cm,width=5in]{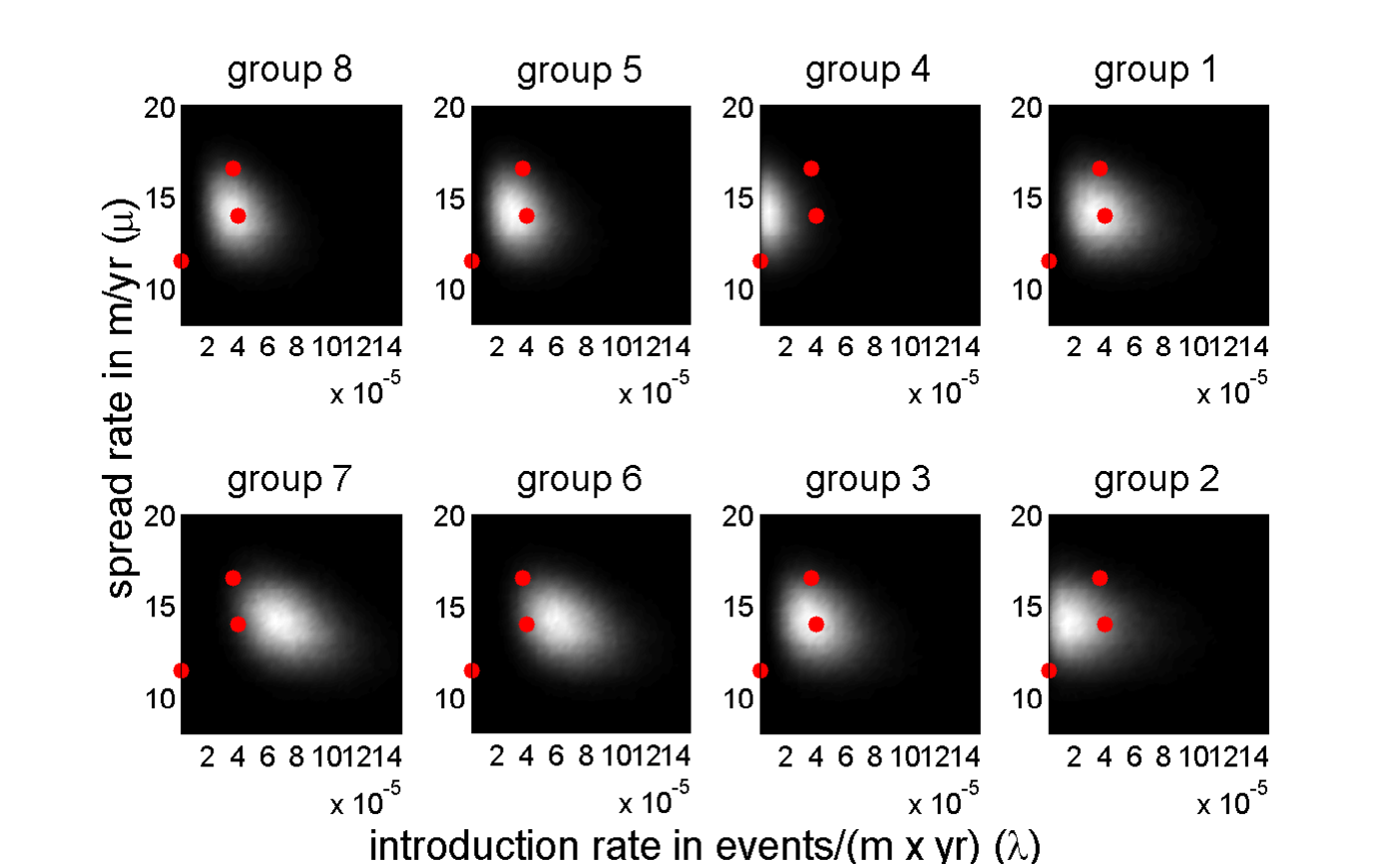}}
\caption{Marginal posterior densities of spread rate (vertical axis) and introduction rate (horizontal axis) by group of roads.  Red dots indicate the values of ($\lambda,\nu$) used for the predictive simulation described in Section \ref{sec:map}.}
\label{fig:bivariate_posterior}
\end{figure}

\begin{figure}
\centering
 \includegraphics[trim = 0cm 0.75cm 0cm 0cm,width = 2.7in]{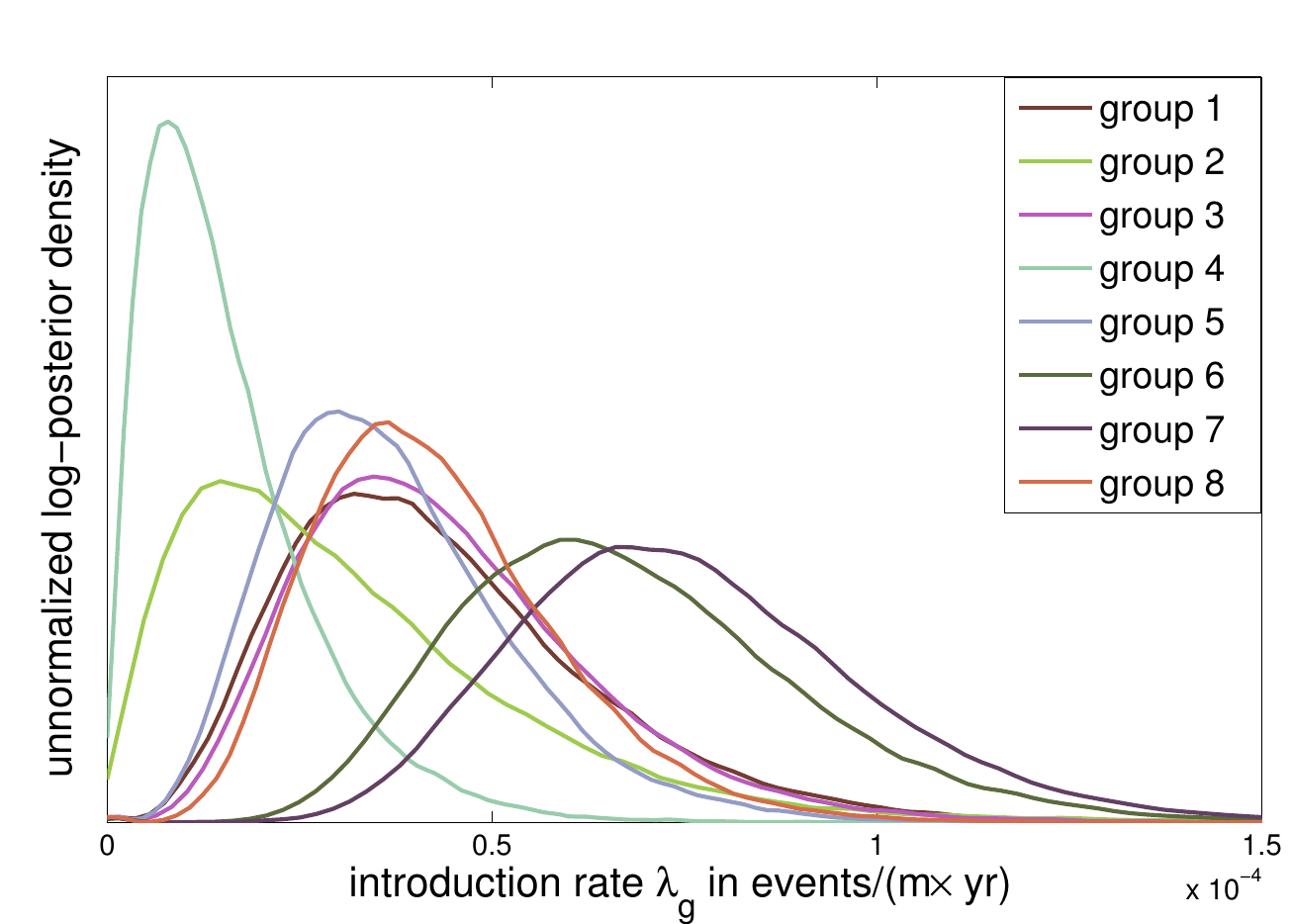}\\
\vspace{0.25cm}
\includegraphics[width = 2.7in]{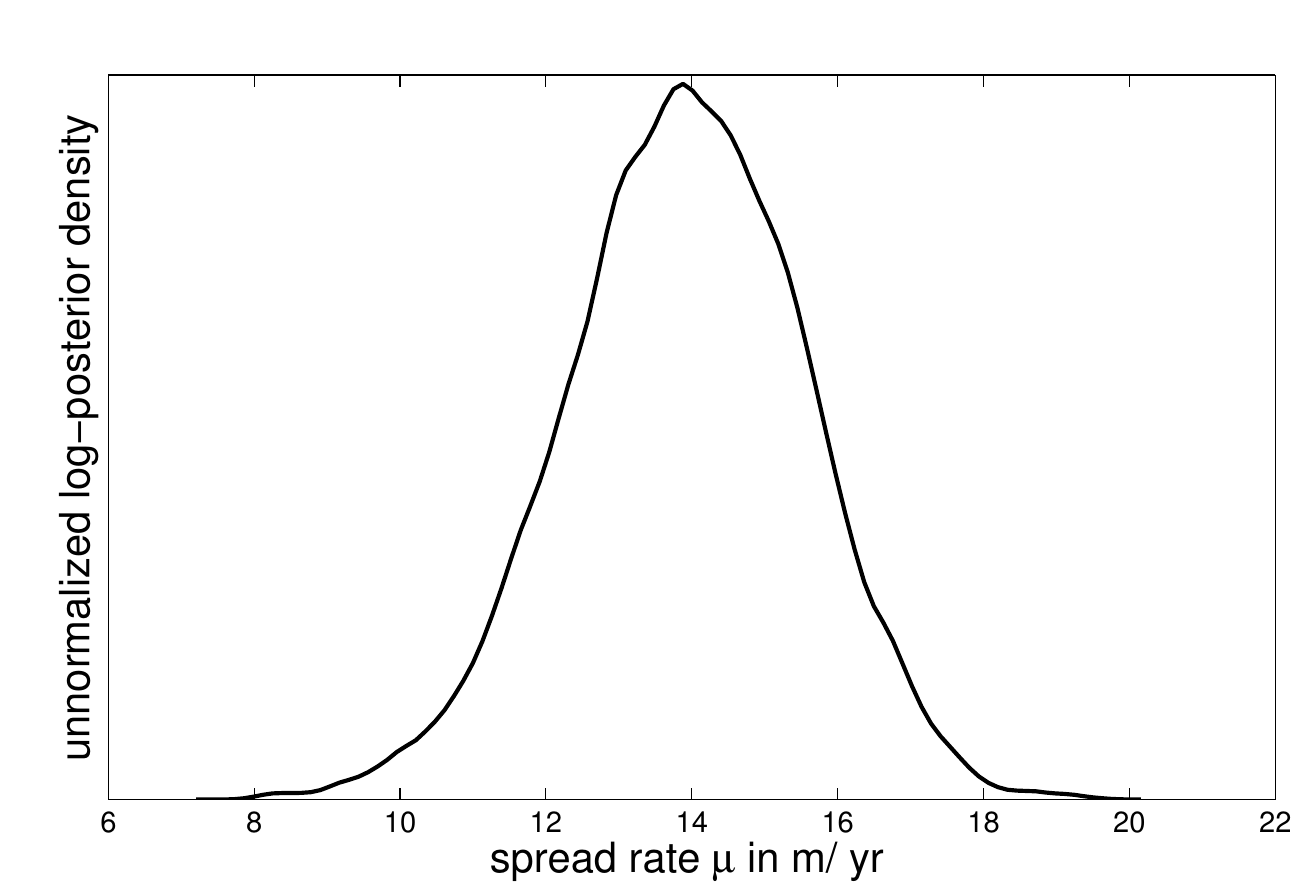}\\       
\vspace{0.25cm}
\includegraphics[width = 2.75in]{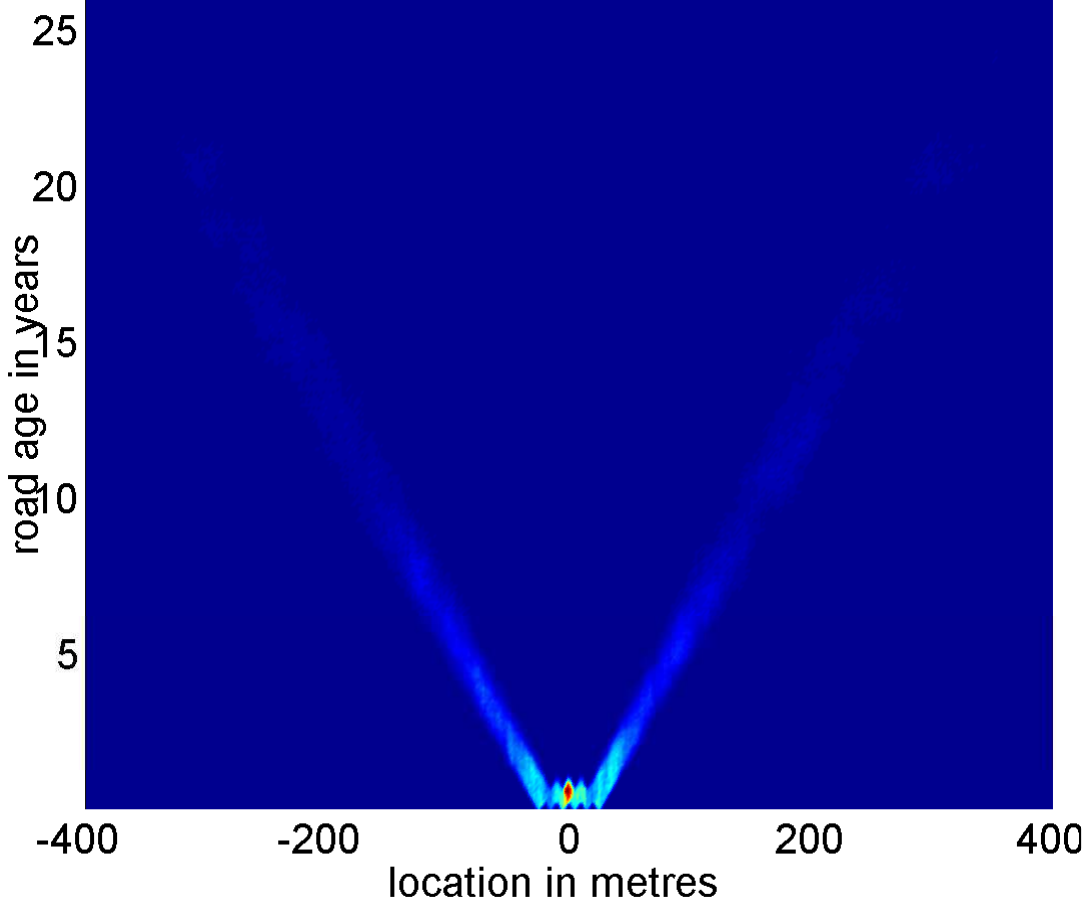}
\caption{Marginal posterior densities for the rate of introduction, $\lambda_g$ (top).  Marginal posterior density for the worm spread rate, $\nu$, (middle). Heatmap showing the marginal posterior of spatio-temporal introductions for the 26 year old road in group 1 (bottom).}
\label{fig:posterior_introductions_all}
\end{figure}

\subsection{Simulation Study}

A simulation study was conducted to examine the degree of posterior approximation resulting from summarizing the data by the statistic $s$, as well as the appropriateness of the model for describing the invasion process.  Although these effects are difficult to distinguish from a simulation study, we are able to attribute low coverage for two of the road groups studied to model specification.

We considered several options for simulation study designs. Generating data according to Model  (\ref{eqn:hierarchy}) given model parameters drawn from the full ABC posterior,  
$\pi_{ABC}\left( \lambda_1, \ldots, \lambda_8, \nu, k_{1,1},\ldots,k_{8,n_g} ,{\bf x}_{1,1},\ldots,{\bf x}_{8,n_g} \mid Y_{1,1},\ldots,Y_{8,n_g} , Z \right)$, would yield data summaries which exactly match the real observed data summaries. Because this would amount to repeating the real data analysis multiple times, we deemed this strategy to be unsuitable.  An alternative strategy consists of sampling introductions and spread rates from the marginal ABC posterior (\ref{eqn:wormsABCposterior}) and then independently sampling the number and locations of introductions from $\pi_{ABC}( k_{1,1},\ldots,k_{8,n_g} ,{\bf x}_{1,1},\ldots,{\bf x}_{8,n_g} \mid Y_{1,1},\ldots,Y_{8,n_g} , Z )$.  This strategy was also found unsuitable because of the downward bias on introduction rates resulting from the possibility of generating introduction events, ${\bf x}_{gr}$,  outside of the horizon, $H(\nu,T_{gr})$.   
In order to avoid the above issues we generated data from Model (\ref{eqn:hierarchy}) using fixed $\lambda_g, g=1\ldots,8$ and $\nu$ chosen to be the posterior median values of  group specific introduction rates and overall spread rate, as reported in the previous section. 

The data generating process and inference were repeated 150 times using the transdimensional birth-death ABC method described in Algorithm \ref{alg:birthdeathABC} implemented with 250 000 iterations, and 95\% credible intervals were computed.    A Geweke diagnostic was performed component-wise for $\lambda_g, g=1,\ldots,8$ and $\nu$, comparing the final 100 000 iterations with the initial 100 000 iterations after discarding 25 000 iterations for burn-in.  In all cases convergence was attained.  Table  \ref{tab:coverage} shows the median of posterior medians, median of posterior standard deviations (however the skew in the posterior densities implies caution should be used when interpreting this value), and the proportion of times that the 95\% credible intervals covered the introduction rate used to generate the data.  The 95\% credible regions for the simulated datasets covered the true spread rate 100\% of the time, and covered the true group specific introduction rates with reasonable frequency for all road groups, except for road groups 6 and 7.  

The large values of the spread rates $\lambda_6$ and $\lambda_7$ that were chosen for the simulation study make multiple introduction events likely, especially for older roads.   However, uniform sampling from $p({\bf x}_{gr}\mid k_{gr},\nu)$ over $H(\nu,T_{gr})$ given $k_{gr}\geq1$ introductions, generates presence in all quadrats with high probability. This is in contrast to the sets of split consecutive presence measurements separated by absences which were observed in the field data.  
Whereas conditioning on multiple consecutive presence measurements in the field data leads to high estimated introduction rates and marginal posterior densities of introduction events such as that illustrated in Figure \ref{fig:posterior_introductions_all}, a systematic lack of such multiple consecutive presence measurements in the simulated data leads to an underestimate of the number and rate of introductions.  From this we conclude that for groups 6 and 7, the model assumption that introductions occur according to a homogeneous space-time Poisson process may not be reasonable. A more likely scenario for these two groups might consist of increased frequency of introductions in recent years leading to multiple consecutive separated strings of presences.  Another possible culprit may be the assumption of a constant spread rate, which in reality may be hindered by obstacles to spread along roads. These possibilities present interesting subjects for further exploration and study of introduction and spread patterns for these individual regions.


\begin{table}
        \centering
	\begin{tabular}{|l|c|c|c|c|c|c|c|c|c|}
\hline	&$\nu$&$\lambda_1$&$\lambda_2$&$\lambda_3$&$\lambda_4$&$\lambda_5$&$\lambda_6$&$\lambda_7$&$\lambda_8$\\\hline
  &&&&&&&&&\\
	True value &14.216 &$4.03$ &    $2.69  $ &   $4.12 $  &  $ 1.31  $&     $3.5  $ &   $6.53 $ &    $7.38   $& $  4.09 $\\ 
		& &$\times10^{-5}$ &    $\times10^{-5}  $ &   $\times10^{-5} $  &  $ \times10^{-5}  $&     $\times10^{-5}  $ &   $\times10^{-5} $ &    $\times10^{-5}   $& $  \times10^{-5} $\\ \hline
	Median of &&&&&&&&&\\
	posterior &15.53&$3.14 $&  $ 2.34 $&   $3.05 $&  $  1.87 $&  $ 2.50 $&$    3.20 $& $  3.90 $&   $2.45 $\\
	 medians
	 		& &$\times10^{-5}$ &    $\times10^{-5}  $ &   $\times10^{-5} $  &  $ \times10^{-5}  $&     $\times10^{-5}  $ &   $\times10^{-5} $ &    $\times10^{-5}   $& $  \times10^{-5} $\\ \hline
    Median of  &&&&&&&&&\\
	posterior   & 1.23& 1.73  &     2.00&       1.52&       1.23&       1.24&       1.42&       1.51&       1.15  \\
	 standard & &$\times10^{-5}$ &    $\times10^{-5}  $ &   $\times10^{-5} $  &  $ \times10^{-5}  $&     $\times10^{-5}  $ &   $\times10^{-5} $ &    $\times10^{-5}   $& $  \times10^{-5} $\\
	  deviations  &&&&&&&&&
	 		\\ \hline
Percent   &&&&&&&&&\\
of Highest   &&&&&&&&&\\
Posterior   &100&100&100&90&100&89.33&60&60.67&89.33\\
Intervals   &&&&&&&&&\\ containing   &&&&&&&&&\\
 True value &&&&&&&&&\\ \hline
	\end{tabular}
\caption{Simulation study results including the median of posterior medians and coverage probability of the 95\% highest posterior credible intervals.}
\label{tab:coverage}
\end{table}

\section{Discussion and Prediction}\label{sec:map}

The highest estimated introduction rates were for groups 6 and 7, which are located in the southwestern part of Alberta's boreal forest, near the Peace River Region. In this area, human settlement and agricultural conversion occurred earlier and more extensively than in the rest of northern Alberta (Schneider 2002). The higher earthworm introduction rates may thus be related to the greater intensity and longer history of human activity. However, it is not clear why groups 2 and 4 had substantially lower introduction rates than other locations, as levels of anthropogenic disturbance are relatively similar to levels for groups 1, 3, 5, and 8. More intensive sampling at sites along a gradient of human activity would be needed to examine effects of anthropogenic disturbances on spatial variability of earthworm introduction rates.  
In all cases, the introduction rates estimated using our approach were lower than previously estimated from the same dataset \cite{CameronBayne2009}.  The analysis in \cite{CameronBayne2009} yielded an estimated introduction rate of 1.03$\times 10^{-3}$ introductions/(m$\times$yr), under the assumption of no active dispersal between sampling sites and a spread rate of $10$ m/yr as obtained from literature from other regions.  Attributing all occurrences to passive dispersal is likely to have produced an over-estimate of the introduction rate in \cite{CameronBayne2009}.


To illustrate how the introduction and spread rate estimates produced by this analysis can be used to model invasive species distributions, we modelled earthworm distribution within the Alberta Pacific Forest Industries Forest Management Area (Al-Pac FMA), a 59 054 km$^2$ area in north-eastern Alberta. We obtained road network and Alberta Vegetation Inventory (AVI) data from Alberta-Pacific Forest Industries and additional data on road ages from the Mistakiis Institute. In ArcGIS 10.1, we generated points every 10 m along the road network, randomly invaded these points using the introduction rate from our analysis, and created buffers around invaded points using the spread rate, as described in \cite{CameronBayne2009}. We produced a map of the predicted areal extent of earthworms for 2006 (the year for which we were able to obtain road age data), which was  intersected with a GIS layer containing forest habitat suitable for invasion. All forest types were considered to be suitable except stands in which black spruce \textit{Picea mariana} or tamarack \textit{Larix laricina} were dominant, as such forests have highly acidic soils and are thus less likely to be colonized by earthworms \citep{Bouche1977,EdwardsBohlen1996}.  
Because we did not have a large number of sites sampled representatively across the Al-Pac FMA, we created the maps in Figure \ref{fig:wormmap}  based on three sets of introduction and spread rates, corresponding to high, middle, and low estimates.  The extreme values (frequent introduction with fast spread vs rare introduction with slow spread) were used to heuristically illustrate the magnitude of the prediction uncertainty for the areal extent of earthworms in 2006, while the middle values were used to illustrate predictions of the extent after 50 years following data collection.
Since the  Al-Pac FMA is located between groups 1-4, a low value of $(\lambda,\nu)$ was selected from the lower end of the bivariate posterior distributions for groups 2 and 4, while a high value of $(\lambda,\nu)$ was selected from the higher end of the distributions for groups 1 and 3. The middle value was taken to be the median of the four group posteriors. All three points are shown in Figure \ref{fig:bivariate_posterior}.

\begin{figure}
\centering
   \includegraphics[trim = 0.5cm 0cm 10cm 4cm,  width=4in]{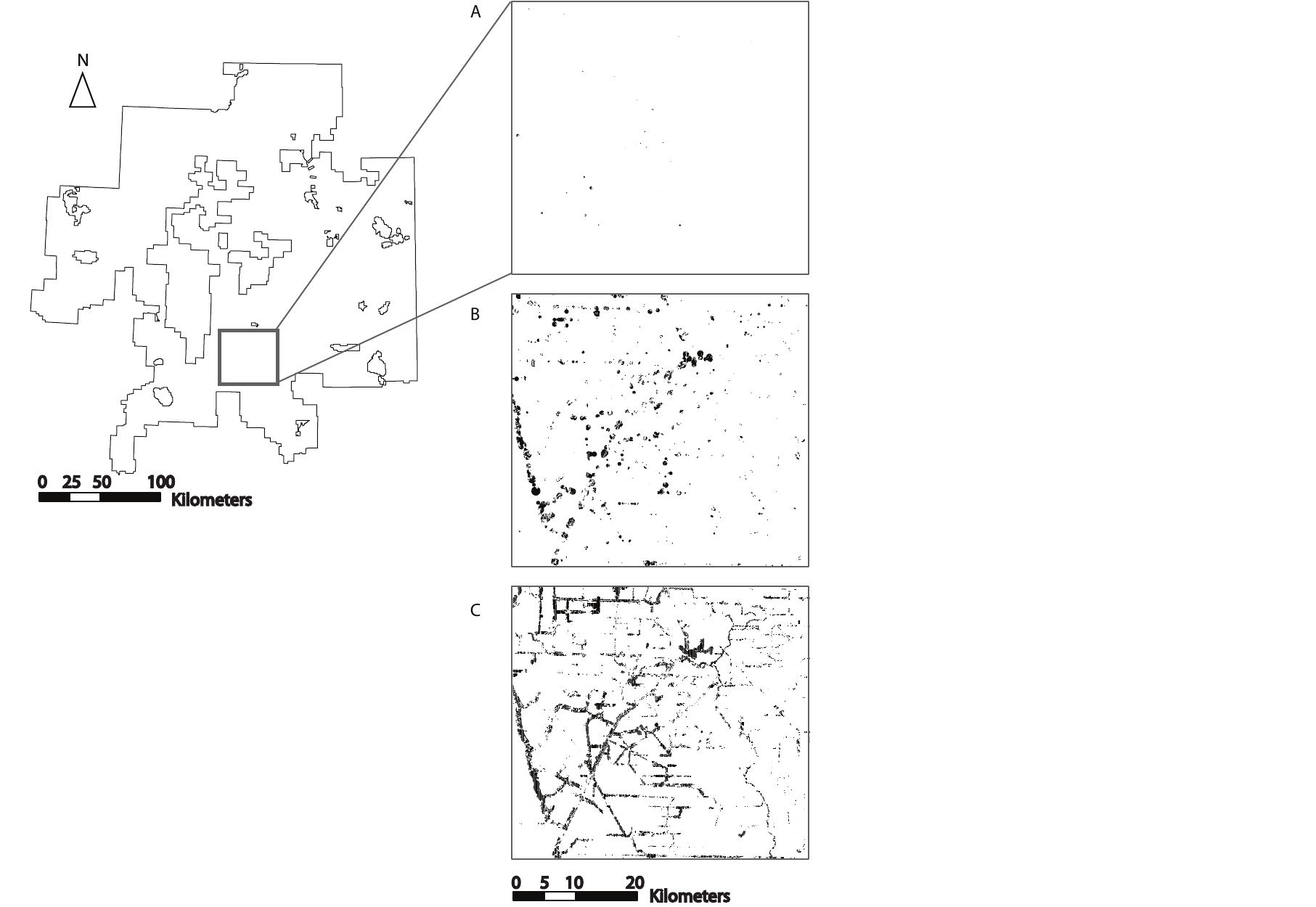}                 
  \caption{Enlarged area within the Al-Pac FMA simulation region (left).  Predicted areal extent of earthworms for 2006 (right) are indicated in black under three different combinations of introduction and spread rate: (A) $1.5915\times 10^{-6}$ introductions per meter of road per year and  11.596 m/yr; and (B) $3.7011\times10^{-5}$ introductions/(m$\times$yr) and 16.655  m/yr; (C) $1.03\times 10^{-3}$ introductions/(m$\times$yr) and 10 m/yr, as reported in \cite{CameronBayne2009}.}
\label{fig:wormmap}
\end{figure}

\begin{figure}
\centering
   \includegraphics[trim = 4cm 0cm 4cm 4cm,width=3.5in]{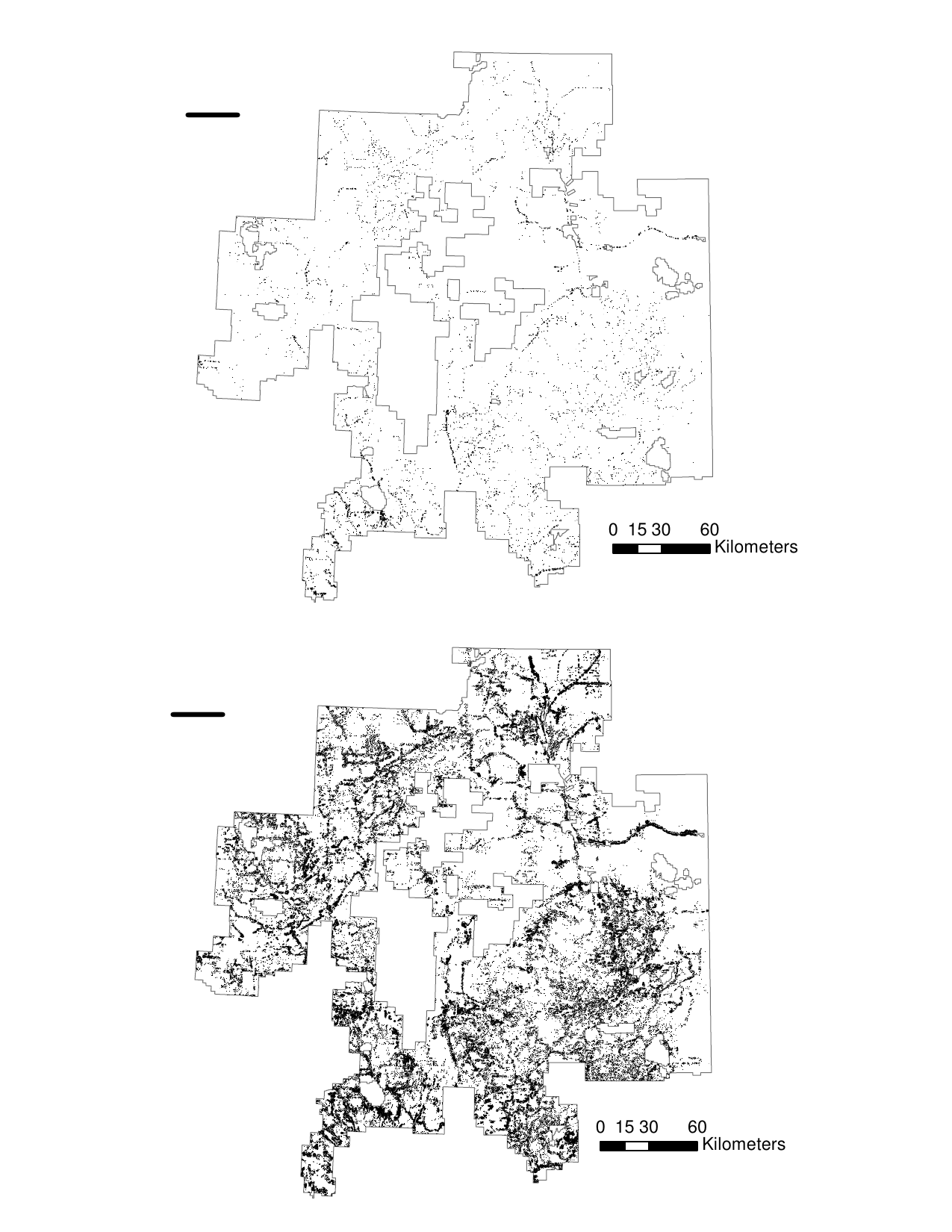}                 
  \caption{Al-Pac FMA simulation region.  Predicted areal extent of earthworms for 2006 (above) and 2056 (below) are indicated in black under an introduction rate of 4.0$\times10^{-5}$ introductions per meter of road per year and a spread rate of 14.0932 m/yr. The area invaded is 691.17 km$^2$ (2.83\% of suitable habitat) in 2006 and 9404.36 km$^2$ (38.46\% of suitable habitat) in 2056.}
\label{fig:wormmap_pred}
\end{figure}

%
%
%

The total area of suitable habitat within the Al-Pac FMA in 2006 was 24 449.3 $\mbox{km}^2$, with a total road length of 22 068 km. Using an introduction rate of $1.5915\times 10^{-6}$ introductions per meter of road per year and a spread rate of 11.596 m/yr, our model predicts that 8.02 $\mbox{km}^2$ (.03\%) of suitable habitat was invaded by 2006.  When using the higher values of $3.7011\times10^{-5}$ introductions/(m$\times$yr) and 16.655  m/yr, the model predicts that $905.65\mbox{km}^2$ was invaded (3.70\% of total suitable habitat).   Both maps are shown in Figure \ref{fig:wormmap}.
In comparison, Cameron and Bayne (2009) predicted 9.09\% of the total area within the Al-Pac FMA would be invaded by earthworms based on their higher estimated rate of introduction of 1.03$\times 10^{-3}$ introductions/(m$\times$yr), and a fixed, non-estimated spread rate of 10 m/yr.   However, our current approach to estimating the spread and introduction rates accounts for active dispersal and should produce more realistic estimates.

The increase in efficiency of the proposed transdimensional ABC algorithm over traditional fixed-dimensional formulations such as Metropolis-Hastings ABC is illustrated in Figure \ref{tab:coverage}.  A heat map of the approximate marginal density of introduction locations for one particular road is shown. Because the location vector changes dimension with the number of introductions, a fixed-dimensional ABC algorithm requires a forward-simulation step (rejection ABC) for the conditional simulation of this parameter. Instead, a transdimensional kernel allows us to make use of dependent proposals, thereby reducing Monte Carlo error when the areas of prior and posterior probabilities do not match.  In this case, the prior on introduction locations is uniform over the event horizon, but the regions of high posterior density are concentrated in a very thin band around the edges of the horizon, rendering conditional simulation extremely inefficient.


\section{Conclusion}\label{sec:conclusion}

Invasions frequently occur through a combination of long-distance jump dispersal events and diffusive spread around invaded sites, resulting in likelihoods that are often intractable. ABC methods provide an approximate inferential framework when the likelihood of the data cannot be evaluated. This paper develops a new efficient ABC sampler for a large class of models with infeasible or fully intractable likelihoods containing variable-dimensional integrals over a set of latent variables. 

This class of models is used in a variety of applications. Spatio-temporal dynamic systems often combine stochastic generating mechanisms with complex time-evolution models, so that evaluation of the likelihood requires integration over a variable number of latent point sources (e.g., \cite{vanLieshoutvanZwet2001}). Kingman's coalescent models \citep{Kingman1982} in genetics are another example where observed data is generated via a process that depends on the number and location of latent branching points defining a genealogical tree. Likelihoods of the genetic data are typically unavailable and consist of an integral over the space of all branches (e.g., \cite{TavareEtAl1997}). 

Latent variable models with intractable likelihoods pose a challenge to existing ABC-MCMC samplers defined on fixed-dimensional probability spaces.   In contrast, we were able to quickly obtain 250 000 MCMC samples per group of roads under each of the conditional values of $\nu$ using the more efficient transdimensional ABC algorithm proposed in this paper, even when requiring exact matches between summaries ($\epsilon = 0$). 

The main structural limitation of our methodology is that the resulting inference is approximate, controlled by the degree of sufficiency of our selected summary statistics and the chosen error tolerance.
However, this limitation is inherent in the problem of inferring parameters under intractable likelihoods. All simulation-based methods suffer from the problem of dimensionality, precluding exact likelihood-free inference.
Compared to established ABC methods, the sampling efficiency from our methodology allows strict error tolerances to be imposed, thereby improving the approximation.  The transdimensional ABC approach proposed in this paper can be applied to problems where the likelihood consists of intractable variable-dimension integrals.




\section*{Acknowledgements}

We would like to thank Dr. Mark Lewis, Dr. Marty Krkosek, and Stephanie Peacock for their very useful suggestions regarding our model, and to the Bamfield Marine Sciences Centre for providing the opportunity to begin this project as part of a Models in Ecology course.  We would also like to acknowledge our many field technicians for assistance with data collection and Charlene Nielsen for her assistance in creating the GIS spread maps.
The authors were funded by the Natural Sciences and Engineering Research Council of Canada (NSERC).  The project was also funded by the Alberta Biodiversity Monitoring Institute, Alberta Innovates Bio Solutions, and the Alberta Livestock and Meat Agency.  O.C. gratefully acknowledges the support of the Pacific Institute for Mathematical Sciences International Graduate Training Centre in Mathematical Biology.

\section*{References}

\bibliography{mybibfile}


\appendix

\section{}

\subsection{Transdimensional ABC algorithm}

\begin{theorem}
The Markov chain generated via Algorithm 2 has invariant distribution:
\begin{align}
 \pi_{ABC} \left(\boldsymbol\theta  \mid  Y \right)  
&  \propto
\pi\left(\boldsymbol\theta\right)\,
{\sum_{k=0}^\infty
\int_{\mathcal{Y}}   \int_{\mathbb{R}^{n_k}}   
p\left(D \mid {\bf x}_k,\boldsymbol\theta \right) \,
p \left({\bf x}_k \mid k, \boldsymbol\theta \right) \,
p\left(k  \mid  \boldsymbol\theta\right) \,
\mbox{d} {\bf x}_k}\,
p\left( s(Y) \mid D\right) \mbox{d}D.
\label{eqn:ABCposteriorAppendix}
\end{align}
\end{theorem}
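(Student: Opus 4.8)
The plan is to recognize that Algorithm 2 does not target the marginal density (\ref{eqn:ABCposteriorAppendix}) directly, but rather an augmented distribution on the full state $(\boldsymbol\theta, k, {\bf x}_k, D)$ whose $\boldsymbol\theta$-marginal is the claimed posterior. I would therefore begin by defining the augmented target
\[
\tilde\pi(\boldsymbol\theta, k, {\bf x}_k, D) \;\propto\; \pi(\boldsymbol\theta)\, p(k \mid \boldsymbol\theta)\, p({\bf x}_k \mid k, \boldsymbol\theta)\, p(D \mid {\bf x}_k, \boldsymbol\theta)\, p(s(Y) \mid D),
\]
and observe that summing over $k$ and integrating out ${\bf x}_k$ and $D$ recovers exactly (\ref{eqn:ABCposteriorAppendix}). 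Hence it suffices to show that the chain admits $\tilde\pi$ as an invariant distribution; the stated result then follows by marginalization.

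Next, I would establish invariance by verifying detailed balance with respect to $\tilde\pi$ within the Metropolis--Hastings--Green framework for transdimensional moves. For a proposed move from $(\boldsymbol\theta, k, {\bf x}_k, D)$ to $(\boldsymbol\theta^\star, k^\star, {\bf x}_{k^\star}^\star, D^\star)$, the joint proposal factorizes as $q(\boldsymbol\theta^\star \mid \boldsymbol\theta)\, q(k^\star \mid k)\, p_{kk^\star}({\bf u}_k)\, p(D^\star \mid {\bf x}_{k^\star}^\star, \boldsymbol\theta^\star)$, with the deterministic map $({\bf x}_{k^\star}^\star, {\bf u}_{k^\star}^\star) = \phi_{kk^\star}({\bf x}_k, {\bf u}_k)$ handled by the change-of-variables formula. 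The dimension-matching constraint $n_k + m_k = n_{k^\star} + m_{k^\star}$ guarantees that the forward and reverse moves live on spaces of equal dimension, so the reversibility identity is well-defined and the diffeomorphism $\phi_{kk^\star}$ contributes the Jacobian factor $|J_{kk^\star}|$. Writing the standard Green acceptance ratio and substituting $\tilde\pi$, I would verify that it collapses to the ratio $\alpha$ appearing in Algorithm 2.

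The decisive step---and the distinctive feature of the ABC setting---is the cancellation of the intractable data-generating density. Because the synthetic data $D^\star$ is proposed from the very conditional $p(D^\star \mid {\bf x}_{k^\star}^\star, \boldsymbol\theta^\star)$ that also appears as a factor in $\tilde\pi$, the terms $p(D^\star \mid {\bf x}_{k^\star}^\star, \boldsymbol\theta^\star)$ and $p(D \mid {\bf x}_k, \boldsymbol\theta)$ cancel between the target ratio and the proposal ratio, leaving only the summary-matching kernel contribution $p(s(Y) \mid D^\star)/p(s(Y) \mid D)$. This is precisely what renders the sampler likelihood-free: the unevaluable density is simulated from but never computed. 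Once this cancellation is performed, the acceptance probability $\min\{1, \alpha\}$ satisfies the detailed balance identity, so $\tilde\pi$ is invariant and the claim follows.

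I expect the main obstacle to be the careful measure-theoretic bookkeeping of the transdimensional step: correctly accounting for the auxiliary-variable densities $p_{kk^\star}$ and $p_{k^\star k}$, the Jacobian of $\phi_{kk^\star}$, and the dimension matching across the differing subspaces, so that the reverse move is properly characterized and the Green ratio is assembled with no spurious factors. Once that bookkeeping is in place, the remaining algebra, including the $p(D \mid \cdot)$ cancellation, is routine.
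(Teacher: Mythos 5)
Your proposal is correct and takes essentially the same route as the paper's proof: both work on an augmented state space including the synthetic data $D$, verify detailed balance via Green's reversible-jump construction with the dimension-matching constraint and Jacobian $|J_{kk^\star}|$, exploit the cancellation of the intractable $p(D \mid {\bf x}_k, \boldsymbol\theta)$ between target and proposal to leave only the kernel ratio $p(s(Y)\mid D^\star)/p(s(Y)\mid D)$, and recover the stated posterior by marginalization (the paper additionally carries ${\bf u}_k$ in the augmented state and integrates it out at the end, where you place it in the proposal --- an equivalent formulation). The measure-theoretic bookkeeping you flag as the main obstacle is precisely what the paper supplies explicitly, by constructing the symmetric measure $\mu$ on $\mathcal{M}_i \times \mathcal{M}_j$ and expressing $\pi(\mbox{d}v)Q(v,\mbox{d}w)$ and $\pi(\mbox{d}w)Q(w,\mbox{d}v)$ as densities $f(v,w)$ and $f(w,v)$ with respect to it.
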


\begin{proof} 
Define the augmented spaces
$\mathcal{M}_k =  \Theta \times {k} \times \mathbb{R}^{n_k}  \times \mathbb{R}^{m_k} \times \mathcal{Y}\times \mathcal{Y}$, for $k\in\mathcal{K}$, and the variable-dimensional parameter space $\mathcal{M} = \cup_{k \in \mathcal{K}} \mathcal{M}_k$. 
Our goal is to construct a Markov chain with invariant distribution,
\begin{align*}
\pi(\mbox{d}z) = \pi(z) \mbox{d}z 
 \propto 
p\left( s(Y) \mid D_k\right)\,
p\left(D_k \mid {\bf x}_k, \boldsymbol\theta \right) \,
p\left({\bf x}_k \mid  k, \boldsymbol\theta \right) \,
p\left(k  \mid  \boldsymbol\theta \right) \,
\pi(\boldsymbol\theta)\;
\pi(\boldsymbol{u}_k)
\mbox{d}\lambda, \; z\in \mathcal{M}_k,\, k\in\mathcal{K}.
\end{align*}

\noindent on the measurable space $\big(\mathcal{M}, \sigma(\mathcal{M})\big)$, where $\sigma(\mathcal{M})$ is the sigma algebra generated by subsets of $\mathcal{M}$ and $\lambda$ is the Lebesgue measure on $\sigma(\mathcal{M})$.  

We use an argument similar to Green (1995) 
to define transitions on this parameter space.  First, we restrict attention to moves between any two model spaces $\mathcal{M}_i$ and $\mathcal{M}_j$. Consider elements $v = (\boldsymbol\theta_i, i,{\bf x}_i,{\bf u}_i,D_i) \in \mathcal{M}_i$ and $w = (\boldsymbol\theta_j, j,{\bf x}_j,{\bf u}_j,D_j) \in \mathcal{M}_j$, where $\boldsymbol{x}_k\in\mathbb{R}^{n_k}$, $\boldsymbol{u}_k\in\mathbb{R}^{m_k}$, $k=i, j$.  The constraint $m_i + n_i = m_j + n_j$ ensures that the dimension of $v$ and $w$ match.  Next, define the following proposal distributions for a move from $v$ to $w$ and back:
\begin{align*}
 & Q\left( v , \mbox{d}w \right) = 
q(\boldsymbol\theta_j \mid \boldsymbol\theta_i)\,
q(j \mid i)\,
q({\bf x}_j,{\bf u}_j \mid {\bf x}_i,{\bf u}_i)\,
p(D_j  \mid  {\bf x}_j)\;
\mbox{d}\lambda,\\
& Q\left( w , \mbox{d}v \right) = 
q(\boldsymbol\theta_i \mid \boldsymbol\theta_j)\,
q(i \mid j)\,
q({\bf x}_i,{\bf u}_i \mid {\bf x}_j, {\bf u}_j )\,
p(D_i  \mid  {\bf x}_i )\;
\mbox{d}\lambda,
\end{align*}
where $\lambda$ is the Lebesque measure on $\mathcal{B}=\sigma(\mathcal{M}_k)$, $k =1,2$.  Now let $\alpha: \mathcal{M}\times\mathcal{M} \to [0,1]$ be an acceptance probability, and define $\delta_{a}$ to be the Dirac delta measure on $\mathcal{B}$ centered at $a$.  Then, the corresponding transition kernels are: 
\begin{align*}
& P\left(v , \mbox{d}w\right) 
= 
\alpha\left(v , w \right) 
Q\left( v , \mbox{d}w \right) 
+
\delta_{v}\left(\mbox{d}w\right) 
\int_{\mathcal{M}_j}  
\left [ 1 -  \alpha\left( v , w \right) \right ]
 Q\left( v , \mbox{d}m \right),\\
& P\left(w , \mbox{d}v\right) 
= 
\alpha\left(w , v \right) 
Q\left( w , \mbox{d}v \right) 
+
\delta_{w}\left(\mbox{d}v\right) 
\int_{\mathcal{M}_i}  
\left [ 1 -  \alpha\left( w , v \right) \right ]
 Q\left( w , \mbox{d}m \right),
\end{align*}
on the measurable space $\left( \mathcal{M}_i \times \mathcal{M}_j, \mathcal{B}\times\mathcal{B}\right)$, and $\left( \mathcal{M}_j \times \mathcal{M}_i, \mathcal{B}\times\mathcal{B}\right)$, respectively.  We assume that the Markov chain associated with this transition kernel is aperiodic and irreducible (true if proposal distribution $Q\left( v , \mbox{d}w \right)$ generates an aperiodic and irreducible chain).  
The detailed balance condition,
\begin{align*}
& \pi \left( \mbox{d}v \right)P\left( v , \mbox{d}w \right)  
 = \pi \left( \mbox{d}w \right)P\left( w , \mbox{d}v \right),
\end{align*} 
is satisfied iff for $A,B \in \mathcal{B}$,
\begin{align*}
& \pi \left(\mbox{d}v \right)Q\left( v , \mbox{d}w \right)
\alpha\left( v , w \right) 
 = \pi \left( \mbox{d} w \right)Q\left( w , \mbox{d}v \right)
\alpha\left( w , v \right), \quad v\in A,\, w \in B.
\end{align*}

Next, define a diffeomorphic transformation, $\phi_{ij}:\mathbb{R}^{n_i}\times\mathbb{R}^{m_i}\to\mathbb{R}^{n_j}\times\mathbb{R}^{m_j}$.  
Using Lebesgue measure $\lambda$ on $\left( \mathcal{M}_i \times \mathcal{M}_j, \mathcal{B}\times\mathcal{B}\right)$ define the new measure,
\begin{align*}
\mu\left( A \times B \right) & \equiv 
\lambda\big[ \{ v \in A \cap \mathcal{M}_i \, , \, \left(\boldsymbol\theta, j, \phi_{ij}(\boldsymbol{x}_i, \boldsymbol{u}_i), D\right) \in B \cap \mathcal{M}_{j} \} \\
& \quad\quad \cup \{  v \in A \cap \mathcal{M}_{j} \, , \, \left(\boldsymbol\theta, i, \phi_{ji}(\boldsymbol{x}_j,\boldsymbol{u}_j), D\right) \in B \cap \mathcal{M}_{i} \}  \big]\\
&=\lambda\big[ \{ v \in A \cap \mathcal{M}_i \, , \, \left(\boldsymbol\theta, j, \phi_{ij}(\boldsymbol{x}_i, \boldsymbol{u}_i), D\right) \in B \cap \mathcal{M}_{j} \} \\
& \quad\quad + \{  v \in A \cap \mathcal{M}_{j} \, , \, \left(\boldsymbol\theta, i, \phi_{ji}(\boldsymbol{x}_j,\boldsymbol{u}_j), D\right) \in B \cap \mathcal{M}_{i} \}  \big]\\
&= \lambda\big[ \{w \in B \cap \mathcal{M}_j \, , \, \left(\boldsymbol\theta, i, \phi_{ji}(\boldsymbol{x}_j,\boldsymbol{u}_j), D\right) \in A \cap \mathcal{M}_{i} \}\big] \\
& \quad\quad+ \lambda\big[ \{  w \in B \cap \mathcal{M}_i 
\, , \, \left(\boldsymbol\theta, j, \phi_{ij}(\boldsymbol{x}_i,\boldsymbol{u}_i), D\right) \in A \cap \mathcal{M}_j \}  \big]\\
&=\mu\left( B \times A \right) 
\end{align*} 
which is symmetric on $\left( \mathcal{M}_j \times \mathcal{M}_i, \mathcal{B}\times\mathcal{B}\right)$.  Then $\pi \left(\mbox{d}v \right)Q\left( v , \mbox{d}w \right)$ and $\pi \left(\mbox{d}w \right)Q\left( w , \mbox{d}v \right)$ have densities with respect to $\mu$.  These are given by,
\begin{align*}
f(v,w) &= \pi\left(\mbox{d}v \right)  p(D_j  \mid  {\bf x}_j ) q(\boldsymbol\theta_j \mid \boldsymbol\theta_i) q(j \mid i) g_{ij}({\bf u}_j \mid {\bf u}_i) |J_{ij}|, \\
f(w,v) &= \pi\left(\mbox{d}w \right)  p(D_i  \mid  {\bf x}_i )  q(\boldsymbol\theta_i \mid \boldsymbol\theta_j) q(i \mid j) g_{ji}({\bf u}_i \mid {\bf u}_j),
\end{align*}

\noindent respectively.
Then reversibility across all moves is guaranteed 
\cite{Green1995,Tierney1998}  
if $\alpha(v,w) = \min \{1, f(v,w) / f(w,v) \}$ for all $i,j \in \mathcal{K}$, and so detailed balance is satisfied under
\begin{align*}
\alpha\left( v , w \right) &= 
\min\left\{ 1, 
\frac{ p\left( s(Y) \mid D_j \right)}{ p\left( s(Y) \mid D_i \right)}
\frac{ p({\bf x}_j \mid j)p(j \mid \boldsymbol\theta_j)\pi(\boldsymbol\theta_j) }{p({\bf x}_i \mid i)p(i \mid \boldsymbol\theta_i)\pi(\boldsymbol\theta_i)} 
 \frac{q(\boldsymbol\theta_j \mid \boldsymbol\theta_i)q(j \mid i) g_{ij}({\bf u}_j \mid {\bf u}_i)}
{q(\boldsymbol\theta_i \mid \boldsymbol\theta_j)q(i \mid j) g_{ji}({\bf u}_i \mid {\bf u}_j) |J_{ij}| } \right\},
\end{align*}
for all move types.  As Algorithm 2 does not return the values of the auxiliary vectors, ${\bf u}_k$, the resulting marginal invariant distribution is (\ref{eqn:ABCposteriorAppendix}).
\end{proof}

\end{document}


\begin{frontmatter}

\title{Supplementary Materials for \\
\textbf{Transdimensional Approximate Bayesian Computation for Inference on Invasive Species Models with Latent Variables of Unknown Dimension}}

\author{Oksana A. Chkrebtii\fnref{myemail}}
\address{Department of Statistics, The Ohio State University, Columbus, OH, USA}
\fntext[myemail]{email: oksana@stat.osu.edu}

\author{Erin K. Cameron\fnref{currentaddress}}
\address{Department of Biological Sciences, University of Alberta, Edmonton, AB, Canada}
\fntext[currentaddress]{Current address: Metapopulation Research Group, Department of Biological and Environmental Sciences, 
PO Box 65 (Viikinkaari 1), 00014 University of Helsinki, Finland}

\author{David A. Campbell}
\address{Department of Statistics \& Actuarial Science, Simon Fraser University, Surrey, Canada}

\author{Erin M. Bayne}
\address{Department of Biological Sciences, University of Alberta, Edmonton, AB, Canada}


\end{frontmatter}



\section{Projected spread}

Figure \ref{fig:wormmap_small} shows predicted areal extent of earthworm presence for 2006 (the year of data collection) and after a further 50 years, under the average introduction and spread rate obtained over all groups in our analysis for an enlarged area within the Al-Pac FMA simulation region.

\begin{figure}
\centering
   \includegraphics[trim = 3cm 10cm 3cm 0cm, width=4in]{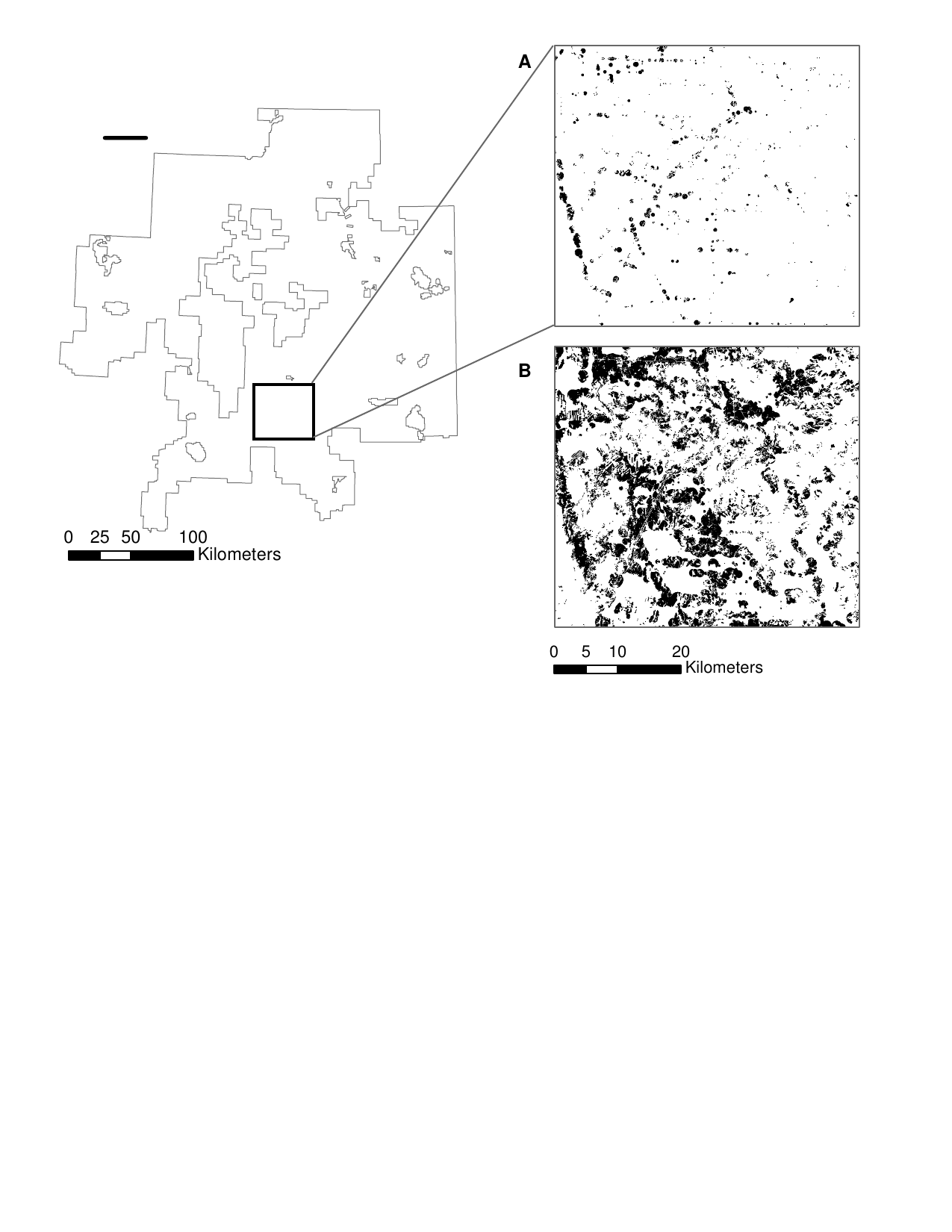}       
  \caption{Enlarged area within the Al-Pac FMA simulation region (left).  Predicted areal extent of earthworms for 2006 (right, above) and 2056 (right, below) are indicated in black under an introduction rate of 4.0$\times10^{-5}$ introductions per meter of road per year and a spread rate of 14.0932 m/yr.}
\label{fig:wormmap_small}
\end{figure}

\section{Likelihood calculations}

Exact inference on introduction and spread requires a closed-form representation of $p\left( Y_{rg} \mid {\bf x}_{rg}, \nu, T_{rg}\right)$ in order to compute the data likelihood,
\begin{align}
& p\left( Y_{rg} \mid \lambda_{g}, \nu, T_{rg}\right)
 = \quad 
\\
& \quad
  \sum_{k_{rg}=0}^\infty
\int_{H(\nu,T_{rg})}   
p\left( Y_{rg} \mid {\bf x}_{rg}, \nu, T_{rg}\right) 
 p \left({\bf x}_{rg} \mid k_{rg}, \nu, T_{rg} \right)
 p\left(k_{rg}  \mid  \lambda_g, \nu\right)
\mbox{d}{\bf x}_{rg} , \nonumber
\end{align}

\noindent 
For our simple model of earthworm introduction and spread, this likelihood can indeed be obtained analytically via a lengthly geometric and combinatorial computation.  Our motivating example serves to highlight the difficulty of obtaining an analogous closed form for more complex models, e.g. in the case of nonlinear time-dependent spread, or the presence of obstacles space and over time.


Figure \ref{fig:polygons} shows a partition of the space-time horizon $\mathcal{H}_{gr}$, obtained by identifying and grouping possible introduction locations by their effect on the data. 
The shapes of the partitions depend on the model, the spread rate $\nu$, and the road age $T_{rg}$.  Note that a more complicated spread mechanism or horizon topography will necessarily change the shape of these regions and their effect on the data, making the combinatorial argument presented below invalid for the general case.

\begin{figure}
\centering
  \includegraphics[trim = 6cm 5cm 6cm 8cm, width=0.8\textwidth]{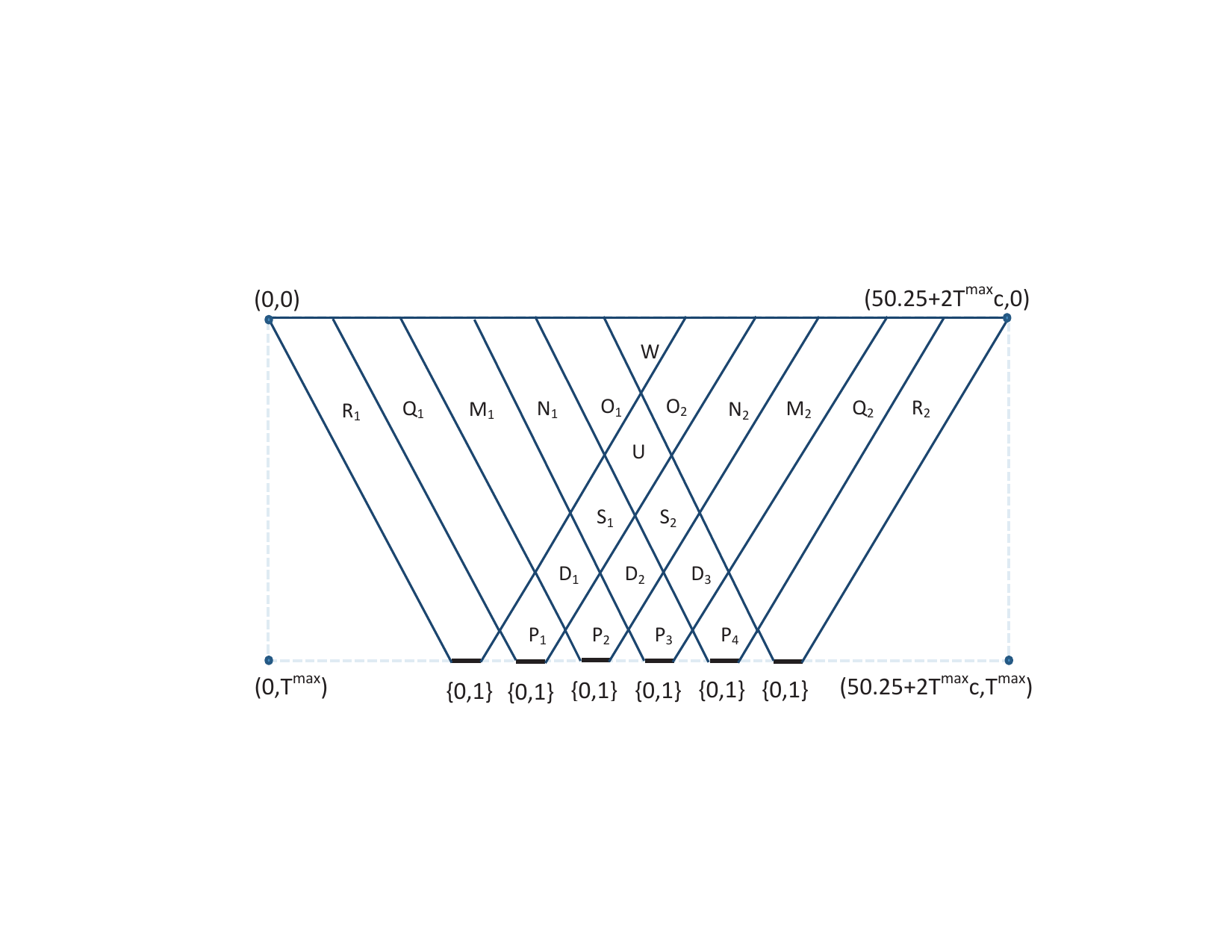}
  \caption{Spatio-temporal horizon partitioned by grouping possible introduction locations by their effect on the simulated data.}
\label{fig:polygons}
\end{figure}

Each case considered below corresponds to a possible combination of introductions that could have led to the observed data under our deterministic spread model.  For example, we can generate the data $Y_{rg} = (0,1,1,1,1,1)$ with at least one introduction occurring in each of regions $U$ and $R_2$ with any number of remaining introductions in $D_1,D_2,D_3,P_1,P_2,P_3,P_4,S_1,S_2$.  Alternatively we can generate the same data with at least one introduction occurring in each of $R_2, P_1,P_2,P_3,P_4$.  In other words, we can generate the data by any one of a large number of mutually exclusive ways.

The following subsections are grouped by equal posterior probability. 
Denote by $\mathcal{I}$ the set $\{1,\ldots,k\}$.  Let the notation $\mbox{x}_k^{(j)} \in S$, $j\in\{1,\ldots,k\}$ describes the case when the $j$th introduction occurs within the spatio-temporal region $S \subset \mathcal{H}_{gr}$.  


\subsection{000000}

\begin{align*}
p\left(000000 \mid {\bf x}_k,\nu, T\right) &= \mathbb{I}\{k=0\}
 \end{align*}
%
%


\subsection{100000,000001}

\begin{align*}
p\left(100000 \mid {\bf x}_k,\nu, T\right)
&= \mathbb{I}\{k>0; \;\forall i\in\mathcal{I} ,\; \mbox{x}^{(i)} \in R_1 \}\line
p\left(000001 \mid {\bf x}_k,\nu, T\right)
&= \mathbb{I}\{k>0; \;\forall i\in\mathcal{I} ,\; \mbox{x}^{(i)} \in R_2 \} 
\end{align*}



\subsection{010000,001000,000100,000010}

\begin{align*}
p\left(010000 \mid {\bf x}_k,\nu, T\right)
&= \mathbb{I}\{k>0;\;\forall i\in\mathcal{I},\; \mbox{x}^{(i)} \in P_1 \}\line
p\left(001000 \mid {\bf x}_k,\nu, T\right)
&= \mathbb{I}\{k>0;\;\forall i\in\mathcal{I},\; \mbox{x}^{(i)} \in P_2 \}\line
p\left(000100 \mid {\bf x}_k,\nu, T\right)
&= \mathbb{I}\{k>0;\;\forall i\in\mathcal{I},\; \mbox{x}^{(i)} \in P_3 \}\line
p\left(000010 \mid {\bf x}_k,\nu, T\right)
&= \mathbb{I}\{k>0;\;\forall i\in\mathcal{I},\; \mbox{x}^{(i)} \in P_4 \}
\end{align*}



\subsection{010100,010010,001010}

\begin{align*}
p\left(010100 \mid {\bf x}_k,\nu, T\right)
&= \mathbb{I}\{k>1; \;\exists \; j_1, j_2\in\mathcal{I} :  \mbox{x}^{(j_1)} \in P_1 , \mbox{x}^{(j_2)} \in P_3 ;\\
& \hspace{1cm}  \;\forall i\neq j_1,j_2\in\mathcal{I},\;  \mbox{x}^{(i)} \in P_1 \cup P_3\}\line
p\left(010010 \mid {\bf x}_k,\nu, T\right)
&= \mathbb{I}\{k>1; \;\exists \; j_1, j_2\in\mathcal{I} :  \mbox{x}^{(j_1)} \in P_1 , \mbox{x}^{(j_2)} \in P_4 ;\\
& \hspace{1cm}  \;\forall i\neq j_1,j_2\in\mathcal{I},\;  \mbox{x}^{(i)} \in P_1 \cup P_4\}\line
p\left(001010 \mid {\bf x}_k,\nu, T\right)
&= \mathbb{I}\{k>1; \;\exists \; j_1, j_2\in\mathcal{I} :  \mbox{x}^{(j_1)} \in P_2 , \mbox{x}^{(j_2)} \in P_4 ;\\
& \hspace{1cm}  \;\forall i\neq j_1,j_2\in\mathcal{I},\;  \mbox{x}^{(i)} \in P_2 \cup P_4\}
\end{align*}

%


\subsection{100001}
\begin{align*}
p\left(100001 \mid {\bf x}_k,\nu, T\right)
&= \mathbb{I}\{k>1; \;\exists \; j_1, j_2\in\mathcal{I} :  \mbox{x}^{(j_1)} \in R_1 , \mbox{x}^{(j_2)} \in R_2 ;\\
& \hspace{1cm}  \;\forall i\neq j_1,j_2\in\mathcal{I},\;  \mbox{x}^{(i)} \in R_1 \cup R_2\}
\end{align*}


\subsection{101001,100101}
\begin{align*}
&p\left(101001 \mid {\bf x}_k,\nu, T\right)\\
&= \mathbb{I}\{k>2; 
 \exists \; j_1, j_2 , j_3\in\mathcal{I} :  \mbox{x}^{(j_1)} \in R_1 , \mbox{x}^{(j_2)} \in R_2 , \mbox{x}^{(j_3)} \in P_3 ;\\
& \hspace{1cm}  \;\forall i\neq j_1,j_2,j_3\in\mathcal{I},\;  \mbox{x}^{(i)} \in R_1 \cup R_2 \cup P_3\}\line
&p\left(100101 \mid {\bf x}_k,\nu, T\right)\\
&= \mathbb{I}\{k>2; 
 \exists \; j_1, j_2 , j_3\in\mathcal{I} :  \mbox{x}^{(j_1)} \in R_1 , \mbox{x}^{(j_2)} \in R_2 , \mbox{x}^{(j_3)} \in P_4 ;\\
& \hspace{1cm}  \;\forall i\neq j_1,j_2,j_3\in\mathcal{I},\;  \mbox{x}^{(i)} \in R_1 \cup R_2 \cup P_4\}
\end{align*}


\subsection{101000,100100,100010,000101,0001001,010001}
\begin{align*}
p\left(101000 \mid {\bf x}_k,\nu, T\right)
&= \mathbb{I}\{k>1; \;\exists \; j_1, j_2\in\mathcal{I} :  \mbox{x}^{(j_1)} \in R_1 , \mbox{x}^{(j_2)} \in P_2 ;\\
& \hspace{1cm}  \;\forall i\neq j_1,j_2\in\mathcal{I},\;  \mbox{x}^{(i)} \in R_1 \cup P_2\}\line
p\left(100100 \mid {\bf x}_k,\nu, T\right)
&= \mathbb{I}\{k>1; \;\exists \; j_1, j_2\in\mathcal{I} :  \mbox{x}^{(j_1)} \in R_1 , \mbox{x}^{(j_2)} \in P_3 ;\\
& \hspace{1cm}  \;\forall i\neq j_1,j_2\in\mathcal{I},\;  \mbox{x}^{(i)} \in R_1 \cup P_3\}\line
p\left(100010 \mid {\bf x}_k,\nu, T\right)
&= \mathbb{I}\{k>1; \;\exists \; j_1, j_2\in\mathcal{I} :  \mbox{x}^{(j_1)} \in R_1 , \mbox{x}^{(j_2)} \in P_4 ;\\
& \hspace{1cm}  \;\forall i\neq j_1,j_2\in\mathcal{I},\;  \mbox{x}^{(i)} \in R_1 \cup P_4\}\line
p\left(010001 \mid {\bf x}_k,\nu, T\right)
&= \mathbb{I}\{k>1; \;\exists \; j_1, j_2\in\mathcal{I} :  \mbox{x}^{(j_1)} \in R_2 , \mbox{x}^{(j_2)} \in P_1 ;\\
& \hspace{1cm}  \;\forall i\neq j_1,j_2\in\mathcal{I},\;  \mbox{x}^{(i)} \in R_2 \cup P_1\}\line
p\left(001001 \mid {\bf x}_k,\nu, T\right)
&= \mathbb{I}\{k>1; \;\exists \; j_1, j_2\in\mathcal{I} :  \mbox{x}^{(j_1)} \in R_2 , \mbox{x}^{(j_2)} \in P_2 ;\\
& \hspace{1cm}  \;\forall i\neq j_1,j_2\in\mathcal{I},\;  \mbox{x}^{(i)} \in R_2 \cup P_2\}\line
p\left(000101 \mid {\bf x}_k,\nu, T\right)
&= \mathbb{I}\{k>1; \;\exists \; j_1, j_2\in\mathcal{I} :  \mbox{x}^{(j_1)} \in R_2 , \mbox{x}^{(j_2)} \in P_3 ;\\
& \hspace{1cm}  \;\forall i\neq j_1,j_2\in\mathcal{I},\;  \mbox{x}^{(i)} \in R_2 \cup P_3\}
\end{align*}


\subsection{101010,010101}
\begin{align*}
&p\left(101010 \mid {\bf x}_k,\nu, T\right)\\
&= \mathbb{I}\{k>2; 
 \exists \; j_1, j_2 , j_3\in\mathcal{I} :  \mbox{x}^{(j_1)} \in R_1 , \mbox{x}^{(j_2)} \in P_3 , \mbox{x}^{(j_3)} \in P_4 ;\\
& \hspace{1cm}  \;\forall i\neq j_1,j_2,j_3\in\mathcal{I},\;  \mbox{x}^{(i)} \in R_1 \cup P_3 \cup P_4\}\line
&p\left(010101 \mid {\bf x}_k,\nu, T\right)\\
&= \mathbb{I}\{k>2; 
 \exists \; j_1, j_2 , j_3\in\mathcal{I} :  \mbox{x}^{(j_1)} \in P_1 , \mbox{x}^{(j_2)} \in P_3 , \mbox{x}^{(j_3)} \in R_2 ;\\
& \hspace{1cm}  \;\forall i\neq j_1,j_2,j_3\in\mathcal{I},\;  \mbox{x}^{(i)} \in P_1 \cup P_3 \cup R_2\}
\end{align*}


\subsection{110000,000011}
\begin{align*}
&p\left(110000 \mid {\bf x}_k,\nu, T\right)\\
&= \mathbb{I}\{k>0; \;\exists \; j_1\in\mathcal{I} :  \mbox{x}^{(j_1)} \in Q_1;
\;\forall i\neq j_1\in\mathcal{I},\;  \mbox{x}^{(i)} \in P_1 \cup Q_1 \cup R_1\}\liney
& + \mathbb{I}\{k>1; \;\exists \; j_1, j_2\in\mathcal{I} :  \mbox{x}^{(j_1)} \in P_1, \mbox{x}^{(j_2)} \in R_1;
 \;\forall i\neq j_1,j_2\in\mathcal{I},\;  \mbox{x}^{(i)} \in P_1 \cup R_1\}\line
&p\left(000011 \mid {\bf x}_k,\nu, T\right)\\
&= \mathbb{I}\{k>0; \;\exists \; j_1\in\mathcal{I} :  \mbox{x}^{(j_1)} \in Q_2;
  \;\forall i\neq j_1\in\mathcal{I},\;  \mbox{x}^{(i)} \in P_4 \cup Q_2 \cup R_2\}\liney
& + \mathbb{I}\{k>1; \;\exists \; j_1, j_2\in\mathcal{I} :  \mbox{x}^{(j_1)} \in P_4, \mbox{x}^{(j_2)} \in R_2;
 \;\forall i\neq j_1,j_2\in\mathcal{I},\;  \mbox{x}^{(i)} \in P_4 \cup R_2\}
\end{align*}


\subsection{110001,100011}
\begin{align*}
&p\left(110001 \mid {\bf x}_k,\nu, T\right)\\
&= \mathbb{I}\{k>1; \;\exists \; j_1, j_2\in\mathcal{I} :  \mbox{x}^{(j_1)} \in Q_1,\mbox{x}^{(j_2)} \in R_2;
 \;\forall i\neq j_1,j_2\in\mathcal{I},\;  \mbox{x}^{(i)} \in P_1 \cup Q_1 \cup R_1 \cup R_2\} \liney
& + \mathbb{I}\{k>2; \;\exists \; j_1, j_2, j_3\in\mathcal{I} :  \mbox{x}^{(j_1)} \in P_1, \mbox{x}^{(j_2)} \in R_1, \mbox{x}^{(j_3)} \in R_2;
 \;\forall i\neq j_1,j_2,j_3\in\mathcal{I},\;  \mbox{x}^{(i)} \in P_1 \cup R_1 \cup R_2\}\line
&p\left(100011 \mid {\bf x}_k,\nu, T\right)\\
&= \mathbb{I}\{k>1; \;\exists \; j_1, j_2\in\mathcal{I} :  \mbox{x}^{(j_1)} \in Q_2, \mbox{x}^{(j_2)} \in R_1;
  \;\forall i\neq j_1,j_2\in\mathcal{I},\;  \mbox{x}^{(i)} \in P_4 \cup Q_2 \cup R_1 \cup R_2\}\liney
& + \mathbb{I}\{k>2; \;\exists \; j_1, j_2 , j_3\in\mathcal{I} :  \mbox{x}^{(j_1)} \in P_4, \mbox{x}^{(j_2)} \in R_1,  \mbox{x}^{(j_3)} \in R_2;
 \;\forall i\neq j_1,j_2,j_3\in\mathcal{I},\;  \mbox{x}^{(i)} \in P_4 \cup R_1 \cup R_2\}
\end{align*}


\subsection{110010,110100,010011,001011}
\begin{align*}
&p\left(110010 \mid {\bf x}_k,\nu, T\right)\\
&= \mathbb{I}\{k>1; \;\exists \; j_1, j_2\in\mathcal{I} :  \mbox{x}^{(j_1)} \in Q_1,\mbox{x}^{(j_2)} \in P_4;
 \;\forall i\neq j_1,j_2\in\mathcal{I},\;  \mbox{x}^{(i)} \in P_1  \cup  P_4 \cup Q_1 \cup R_1\}\liney
& + \mathbb{I}\{k>2; \;\exists \; j_1, j_2, j_3\in\mathcal{I} :  \mbox{x}^{(j_1)} \in P_1, \mbox{x}^{(j_2)} \in R_1, \mbox{x}^{(j_3)} \in P_4;
\;\forall i\neq j_1,j_2,j_3\in\mathcal{I},\;  \mbox{x}^{(i)} \in P_1 \cup P_4 \cup R_1\}\line
&p\left(010011 \mid {\bf x}_k,\nu, T\right)\\
&= \mathbb{I}\{k>1; \;\exists \; j_1, j_2\in\mathcal{I} :  \mbox{x}^{(j_1)} \in Q_2, \mbox{x}^{(j_2)} \in P_1;
 \;\forall i\neq j_1,j_2\in\mathcal{I},\;  \mbox{x}^{(i)} \in P_1 \cup P_4 \cup Q_2 \cup R_2\}\liney
& + \mathbb{I}\{k>2; \;\exists \; j_1, j_2 , j_3\in\mathcal{I} :  \mbox{x}^{(j_1)} \in P_4, \mbox{x}^{(j_2)} \in R_1,  \mbox{x}^{(j_3)} \in P_1;
  \;\forall i\neq j_1,j_2,j_3\in\mathcal{I},\;  \mbox{x}^{(i)} \in P_1 \cup P_4 \cup R_2\}\line
&p\left(010011 \mid {\bf x}_k,\nu, T\right)\\
&= \mathbb{I}\{k>1; \;\exists \; j_1, j_2\in\mathcal{I} :  \mbox{x}^{(j_1)} \in Q_2, \mbox{x}^{(j_2)} \in P_1;
 \;\forall i\neq j_1,j_2\in\mathcal{I},\;  \mbox{x}^{(i)} \in P_1 \cup P_4 \cup Q_2 \cup R_2\}\liney
& + \mathbb{I}\{k>2; \;\exists \; j_1, j_2 , j_3\in\mathcal{I} :  \mbox{x}^{(j_1)} \in P_1, \mbox{x}^{(j_2)} \in P_4,  \mbox{x}^{(j_3)} \in R_2;
 \;\forall i\neq j_1,j_2,j_3\in\mathcal{I},\;  \mbox{x}^{(i)} \in P_1 \cup P_4 \cup R_2\}\line
&p\left(001011 \mid {\bf x}_k,\nu, T\right)\\
&= \mathbb{I}\{k>1; \;\exists \; j_1, j_2\in\mathcal{I} :  \mbox{x}^{(j_1)} \in Q_2, \mbox{x}^{(j_2)} \in P_2;
 \;\forall i\neq j_1,j_2\in\mathcal{I},\;  \mbox{x}^{(i)} \in P_2 \cup P_4 \cup Q_2 \cup R_2\}\liney
& + \mathbb{I}\{k>2; \;\exists \; j_1, j_2 , j_3\in\mathcal{I} :  \mbox{x}^{(j_1)} \in P_2, \mbox{x}^{(j_2)} \in P_4,  \mbox{x}^{(j_3)} \in R_2;
\;\forall i \neq j_1,j_2,j_3\in\mathcal{I},\;  \mbox{x}^{(i)} \in P_2 \cup P_4 \cup R_2\}
\end{align*}


\subsection{011000,001100,000110}
\begin{align*}
&p\left(011000 \mid {\bf x}_k,\nu, T\right)\\
&= \mathbb{I}\{k>0; \;\exists \; j_1\in\mathcal{I} :  \mbox{x}^{(j_1)} \in D_1;
  \;\forall i\neq j_1\in\mathcal{I},\;  \mbox{x}^{(i)} \in D_1 \cup P_1 \cup P_2\}\liney
& + \mathbb{I}\{k>1; \;\exists \; j_1, j_2\in\mathcal{I} :  \mbox{x}^{(j_1)} \in P_1 , \mbox{x}^{(j_2)} \in P_2 ;
\;\forall i\neq j_1,j_2\in\mathcal{I},\;  \mbox{x}^{(i)} \in P_1 \cup P_2\}\line
&p\left(001100 \mid {\bf x}_k,\nu, T\right)\\
&= \mathbb{I}\{k>0; \;\exists \; j_1\in\mathcal{I} :  \mbox{x}^{(j_1)} \in D_2;
\;\forall i\neq j_1\in\mathcal{I},\;  \mbox{x}^{(i)} \in D_2 \cup P_2 \cup P_3\}\liney
& + \mathbb{I}\{k>1; \;\exists \; j_1, j_2\in\mathcal{I} :  \mbox{x}^{(j_1)} \in P_2 , \mbox{x}^{(j_2)} \in P_3 ;
 \;\forall i\neq j_1,j_2\in\mathcal{I},\;  \mbox{x}^{(i)} \in P_2 \cup P_3\}\line
&p\left(000110 \mid {\bf x}_k,\nu, T\right)\\
&= \mathbb{I}\{k>0; \;\exists \; j_1\in\mathcal{I} :  \mbox{x}^{(j_1)} \in D_3;
 \;\forall i\neq j_1\in\mathcal{I},\;  \mbox{x}^{(i)} \in D_3 \cup P_3 \cup P_4\}\liney
& + \mathbb{I}\{k>1; \;\exists \; j_1, j_2\in\mathcal{I} :  \mbox{x}^{(j_1)} \in P_3 , \mbox{x}^{(j_2)} \in P_4 ;
 \;\forall i\neq j_1,j_2\in\mathcal{I},\;  \mbox{x}^{(i)} \in P_3 \cup P_4\}
\end{align*}


\subsection{101100,100110,001101,011001}
\begin{align*}
&p\left(101100 \mid {\bf x}_k,\nu, T\right)\\
&= \mathbb{I}\{k>1; \;\exists \; j_1,j_2\in\mathcal{I} :  \mbox{x}^{(j_1)} \in D_2, \mbox{x}^{(j_2)} \in R_1;
\;\forall i\neq j_1,j_2\in\mathcal{I},\;  \mbox{x}^{(i)} \in D_2 \cup P_2 \cup P_3 \cup R_1\}\liney
& + \mathbb{I}\{k>2; \;\exists \; j_1, j_2 , j_3\in\mathcal{I} :  \mbox{x}^{(j_1)} \in P_2 , \mbox{x}^{(j_2)} \in P_3,\mbox{x}^{(j_3)} \in R_1 ;
  \;\forall i\neq j_1,j_2,j_3\in\mathcal{I},\;  \mbox{x}^{(i)} \in P_2 \cup P_3 \cup R_1\}\line
&p\left(100110 \mid {\bf x}_k,\nu, T\right)\\
&= \mathbb{I}\{k>1; \;\exists \; j_1, j_2\in\mathcal{I} :  \mbox{x}^{(j_1)} \in D_3, \mbox{x}^{(j_2)} \in R_1 ;
  \;\forall i\neq j_1,j_2\in\mathcal{I},\;  \mbox{x}^{(i)} \in D_3 \cup P_3 \cup P_4 \cup R_1\}\liney
& + \mathbb{I}\{k>2; \;\exists \; j_1, j_2 , j_3\in\mathcal{I} :  \mbox{x}^{(j_1)} \in P_3 , \mbox{x}^{(j_2)} \in P_4, \mbox{x}^{(j_3)} \in R_1 ;
\;\forall i\neq j_1,j_2,j_3\in\mathcal{I},\;  \mbox{x}^{(i)} \in P_3 \cup P_4 \cup R_1\}\line
&p\left(001101 \mid {\bf x}_k,\nu, T\right)\\
&= \mathbb{I}\{k>1; \;\exists \; j_1,j_2\in\mathcal{I} :  \mbox{x}^{(j_1)} \in D_2, \mbox{x}^{(j_2)} \in R_2;
 \;\forall i\neq j_1,j_2\in\mathcal{I},\;  \mbox{x}^{(i)} \in D_2 \cup P_2 \cup P_3 \cup R_2\}\liney
& + \mathbb{I}\{k>2; \;\exists \; j_1, j_2 , j_3\in\mathcal{I} :  \mbox{x}^{(j_1)} \in P_2 , \mbox{x}^{(j_2)} \in P_3,\mbox{x}^{(j_3)} \in R_2 ;
 \;\forall i\neq j_1,j_2,j_3\in\mathcal{I},\;  \mbox{x}^{(i)} \in P_2 \cup P_3 \cup R_2\}\line
&p\left(011001 \mid {\bf x}_k,\nu, T\right)\\
&= \mathbb{I}\{k>1; \;\exists \; j_1,j_2\in\mathcal{I} :  \mbox{x}^{(j_1)} \in D_1, \mbox{x}^{(j_2)} \in R_2;
 \;\forall i\neq j_1,j_2\in\mathcal{I},\;  \mbox{x}^{(i)} \in D_1 \cup P_1 \cup P_2 \cup R_2\}\liney
& + \mathbb{I}\{k>2; \;\exists \; j_1, j_2,j_3\in\mathcal{I} :  \mbox{x}^{(j_1)} \in P_1 , \mbox{x}^{(j_2)} \in P_2 , \mbox{x}^{(j_3)} \in R_2 ;
\;\forall i\neq j_1,j_2,j_3\in\mathcal{I},\;  \mbox{x}^{(i)} \in P_1 \cup P_2 \cup R_2\}
\end{align*}


\subsection{011010,010110}
\begin{align*}
&p\left(011010 \mid {\bf x}_k,\nu, T\right)\\
&= \mathbb{I}\{k>1; \;\exists \; j_1,j_2\in\mathcal{I} :  \mbox{x}^{(j_1)} \in D_1, \mbox{x}^{(j_2)} \in P_4;
\;\forall i\neq j_1,j_2\in\mathcal{I},\;  \mbox{x}^{(i)} \in D_1 \cup P_1 \cup P_2 \cup P_4\}\liney
& + \mathbb{I}\{k>2; \;\exists \; j_1, j_2,j_3\in\mathcal{I} :  \mbox{x}^{(j_1)} \in P_1 , \mbox{x}^{(j_2)} \in P_2, \mbox{x}^{(j_3)} \in P_4 ;
  \;\forall i\neq j_1,j_2,j_3\in\mathcal{I},\;  \mbox{x}^{(i)} \in P_1 \cup P_2 \cup P_4\}\line
&p\left(010110 \mid {\bf x}_k,\nu, T\right)\\
&= \mathbb{I}\{k>1; \;\exists \; j_1,j_2\in\mathcal{I} :  \mbox{x}^{(j_1)} \in D_3,\mbox{x}^{(j_2)} \in P_1;
  \;\forall i\neq j_1,j_2\in\mathcal{I},\;  \mbox{x}^{(i)} \in D_3 \cup P_1 \cup P_3 \cup P_4\}\liney
& + \mathbb{I}\{k>2; \;\exists \; j_1, j_2,j_3\in\mathcal{I} :  \mbox{x}^{(j_1)} \in P_3 , \mbox{x}^{(j_2)} \in P_1, \mbox{x}^{(j_3)} \in P_4 ;
  \;\forall i\neq j_1,j_2,j_3\in\mathcal{I},\;  \mbox{x}^{(i)} \in P_1 \cup P_3 \cup P_4\}
\end{align*}


\subsection{011011,110110}
\begin{align*}
&p\left(011011 \mid {\bf x}_k,\nu, T\right)\\
&= \mathbb{I}\{k>1; \;\exists \; j_1,j_2\in\mathcal{I} :  \mbox{x}^{(j_1)} \in D_1, \mbox{x}^{(j_2)} \in Q_2;\\
& \hspace{1cm}  \;\forall i\neq j_1,j_2\in\mathcal{I},\;  \mbox{x}^{(i)} \in D_1 \cup P_1 \cup P_2 \cup P_4 \cup Q_2 \cup R_2\}\liney
& + \mathbb{I}\{k>2; \;\exists \; j_1,j_2,j_3\in\mathcal{I} :  \mbox{x}^{(j_1)} \in D_1, \mbox{x}^{(j_2)} \in P_4, \mbox{x}^{(j_3)} \in R_2;\\
& \hspace{1cm}  \;\forall i\neq j_1,j_2,j_3\in\mathcal{I},\;  \mbox{x}^{(i)} \in D_1 \cup P_1 \cup P_2 \cup P_4 \cup R_2\}\liney
& + \mathbb{I}\{k>2; \;\exists \; j_1,j_2,j_3\in\mathcal{I} :  \mbox{x}^{(j_1)} \in P_1, \mbox{x}^{(j_2)} \in P_2, \mbox{x}^{(j_3)} \in Q_2;\\
& \hspace{1cm}  \;\forall i\neq j_1,j_2,j_3\in\mathcal{I},\;  \mbox{x}^{(i)} \in  P_1 \cup P_2 \cup P_4 \cup Q_2 \cup R_2\}\liney
& + \mathbb{I}\{k>3; \;\exists \; j_1,j_2,j_3,j_4\in\mathcal{I} :  \mbox{x}^{(j_1)} \in P_1, \mbox{x}^{(j_2)} \in P_2, \mbox{x}^{(j_3)} \in P_4, \mbox{x}^{(j_4)} \in R_2;\\
& \hspace{1cm}  \;\forall i\neq j_1,j_2,j_3,j_4\in\mathcal{I},\;  \mbox{x}^{(i)} \in P_1 \cup P_2 \cup P_4 \cup R_2\}\line
&p\left(110110 \mid {\bf x}_k,\nu, T\right)\\
&= \mathbb{I}\{k>1; \;\exists \; j_1,j_2\in\mathcal{I} :  \mbox{x}^{(j_1)} \in D_3, \mbox{x}^{(j_2)} \in Q_1;\\
& \hspace{1cm}  \;\forall i\neq j_1,j_2\in\mathcal{I},\;  \mbox{x}^{(i)} \in D3 \cup P_1 \cup P_3 \cup P_4 \cup Q_1 \cup R_1\}\liney
& + \mathbb{I}\{k>2; \;\exists \; j_1,j_2,j_3\in\mathcal{I} :  \mbox{x}^{(j_1)} \in D_3, \mbox{x}^{(j_2)} \in P_1, \mbox{x}^{(j_3)} \in R_1;\\
& \hspace{1cm}  \;\forall i\neq j_1,j_2,j_3\in\mathcal{I},\;  \mbox{x}^{(i)} \in D_3 \cup P_1 \cup P_3 \cup P_4 \cup R_1\}\liney
& + \mathbb{I}\{k>2; \;\exists \; j_1,j_2,j_3\in\mathcal{I} :  \mbox{x}^{(j_1)} \in P_3, \mbox{x}^{(j_2)} \in P_4, \mbox{x}^{(j_3)} \in Q_1;\\
& \hspace{1cm}  \;\forall i\neq j_1,j_2,j_3\in\mathcal{I},\;  \mbox{x}^{(i)} \in  P_1 \cup P_3 \cup P_4 \cup Q_1 \cup R_1\}\liney
& + \mathbb{I}\{k>3; \;\exists \; j_1,j_2,j_3,j_4\in\mathcal{I} :  \mbox{x}^{(j_1)} \in P1, \mbox{x}^{(j_2)} \in P_3, \mbox{x}^{(j_3)} \in P_4, \mbox{x}^{(j_4)} \in R_1;\\
& \hspace{1cm}  \;\forall i\neq j_1,j_2,j_3,j_4\in\mathcal{I},\;  \mbox{x}^{(i)} \in P_1 \cup P_3 \cup P_4 \cup R_1\}
\end{align*}


\subsection{110101,101011}
\begin{align*}
&p\left(110101 \mid {\bf x}_k,\nu, T\right)\\
&= \mathbb{I}\{k>2; \;\exists \; j_1,j_2,j_3\in\mathcal{I} :  \mbox{x}^{(j_1)} \in P_3,\mbox{x}^{(j_2)} \in Q_1, \mbox{x}^{(j_3)} \in R_2;\\
& \hspace{1cm}  \;\forall i\neq j_1,j_2,j_3\in\mathcal{I},\;  \mbox{x}^{(i)} \in P_1 \cup P_3 \cup Q_1 \cup R_1 \cup R_2\}\liney
& + \mathbb{I}\{k>3; \;\exists \; j_1, j_2,j_3,j_4\in\mathcal{I} :  \mbox{x}^{(j_1)} \in P_1, \mbox{x}^{(j_2)} \in P_3, \mbox{x}^{(j_3)} \in R_1, \mbox{x}^{(j_4)} \in R_2;\\
& \hspace{1cm}  \;\forall i\neq j_1,j_2,j_3,j_4\in\mathcal{I},\;  \mbox{x}^{(i)} \in P_1 \cup P_3 \cup R_1 \cup R_2\}\line
&p\left(101011 \mid {\bf x}_k,\nu, T\right)\\
&= \mathbb{I}\{k>2; \;\exists \; j_1,j_2,j_3\in\mathcal{I} :  \mbox{x}^{(j_1)} \in P_2,\mbox{x}^{(j_1)} \in Q_2,\mbox{x}^{(j_3)} \in R_1;\\
& \hspace{1cm}  \;\forall i\neq j_1,j_2,j_3\in\mathcal{I},\;  \mbox{x}^{(i)} \in P_2 \cup P_4 \cup Q_2 \cup R_1 \cup  R_2\}\liney
& + \mathbb{I}\{k>3; \;\exists \; j_1, j_2,j_3,j_4\in\mathcal{I} :  \mbox{x}^{(j_1)} \in P_2, \mbox{x}^{(j_2)} \in P_4, \mbox{x}^{(j_3)} \in R_1, \mbox{x}^{(j_4)} \in R_2;\\
& \hspace{1cm}  \;\forall i\neq j_1,j_2,j_3,j_4\in\mathcal{I},\;  \mbox{x}^{(i)} \in P_2 \cup P_4 \cup R_1\cup R_2\}
\end{align*}


\subsection{101101}
\begin{align*}
&p\left(101101 \mid {\bf x}_k,\nu, T\right)\\
&= \mathbb{I}\{k>2; \;\exists \; j_1,j_2,j_3\in\mathcal{I} :  \mbox{x}^{(j_1)} \in D_2,\mbox{x}^{(j_2)} \in R_1,\mbox{x}^{(j_3)} \in R_2;\\
& \hspace{1cm}  \;\forall i\neq j_1,j_2,j_3\in\mathcal{I},\;  \mbox{x}^{(i)} \in D_2 \cup P_2 \cup P_3 \cup R_1 \cup R_3\}\liney
& + \mathbb{I}\{k>3; \;\exists \; j_1, j_2,j_3,j_4\in\mathcal{I} :  \mbox{x}^{(j_1)} \in P_2 , \mbox{x}^{(j_2)} \in P_3, \mbox{x}^{(j_3)} \in R_1, \mbox{x}^{(j_4)} \in R_2 ;\\
& \hspace{1cm}  \;\forall i\neq j_1,j_2,j_3,j_4\in\mathcal{I},\;  \mbox{x}^{(i)} \in P_2 \cup P_3 \cup R_1 \cup R_2\}\line
\end{align*}


\subsection{011100,001110}
\begin{align*}
&p\left(011100 \mid {\bf x}_k,\nu, T\right)\\
&= \mathbb{I}\{k>0; \;\exists \; j_1\in\mathcal{I} :  \mbox{x}^{(j_1)} \in S_1;\\
& \hspace{1cm}  \;\forall i\neq j_1\in\mathcal{I},\;  \mbox{x}^{(i)} \in D_1 \cup D_2 \cup P_1 \cup P_2 \cup P_3 \cup S_1 \}\liney
& + \mathbb{I}\{k>1; \;\exists \; j_1,j_2\in\mathcal{I} :  \mbox{x}^{(j_1)} \in D_1,\mbox{x}^{(j_2)} \in D_2;\\
& \hspace{1cm}  \;\forall i\neq j_1,j_2\in\mathcal{I},\;  \mbox{x}^{(i)} \in D_1 \cup D_2 \cup P_1 \cup P_2 \cup P_3  \}\liney
& + \mathbb{I}\{k>1; \;\exists \; j_1,j_2\in\mathcal{I} :  \mbox{x}^{(j_1)} \in D_1,\mbox{x}^{(j_2)} \in P_3;\\
& \hspace{1cm}  \;\forall i\neq j_1,j_2\in\mathcal{I},\;  \mbox{x}^{(i)} \in D_1 \cup P_1 \cup P_2 \cup P_3 \}\liney
& + \mathbb{I}\{k>1; \;\exists \; j_1,j_2\in\mathcal{I} :  \mbox{x}^{(j_1)} \in D_2, \mbox{x}^{(j_2)} \in P_1;\\
& \hspace{1cm}  \;\forall i\neq j_1,j_2\in\mathcal{I},\;  \mbox{x}^{(i)} \in  D_2 \cup P_1 \cup P_2 \cup P_3 \}\liney
& + \mathbb{I}\{k>2; \;\exists \; j_1,j_2,j_3\in\mathcal{I} :  \mbox{x}^{(j_1)} \in P_1,\mbox{x}^{(j_2)} \in P_2,\mbox{x}^{(j_3)} \in P_3;\\
& \hspace{1cm}  \;\forall i\neq j_1,j_2,j_3\in\mathcal{I},\;  \mbox{x}^{(i)} \in  P_1 \cup P_2 \cup P_3\}\line
&p\left(001110 \mid {\bf x}_k,\nu, T\right)\\
&= \mathbb{I}\{k>0; \;\exists \; j_1\in\mathcal{I} :  \mbox{x}^{(j_1)} \in S_2;\\
& \hspace{1cm}  \;\forall i\neq j_1\in\mathcal{I},\;  \mbox{x}^{(i)} \in D_2 \cup D_3 \cup P_2 \cup P_3 \cup P_4 \cup S_2 \}\liney
& + \mathbb{I}\{k>1; \;\exists \; j_1,j_2\in\mathcal{I} :  \mbox{x}^{(j_1)} \in D_2,\mbox{x}^{(j_2)} \in D_3;\\
& \hspace{1cm}  \;\forall i\neq j_1,j_2\in\mathcal{I},\;  \mbox{x}^{(i)} \in D_2 \cup D_3 \cup P_3 \cup P_3 \cup P_4  \}\liney
& + \mathbb{I}\{k>1; \;\exists \; j_1,j_2\in\mathcal{I} :  \mbox{x}^{(j_1)} \in D_2,\mbox{x}^{(j_2)} \in P_4;\\
& \hspace{1cm}  \;\forall i\neq j_1,j_2\in\mathcal{I},\;  \mbox{x}^{(i)} \in D_2 \cup P_2 \cup P_3 \cup P_4 \}\liney
& + \mathbb{I}\{k>1; \;\exists \; j_1,j_2\in\mathcal{I} :  \mbox{x}^{(j_1)} \in D_3, \mbox{x}^{(j_2)} \in P_2;\\
& \hspace{1cm}  \;\forall i\neq j_1,j_2\in\mathcal{I},\;  \mbox{x}^{(i)} \in  D_3 \cup P_2 \cup P_3 \cup P_4 \}\liney
& + \mathbb{I}\{k>2; \;\exists \; j_1,j_2,j_3\in\mathcal{I} :  \mbox{x}^{(j_1)} \in P_2,\mbox{x}^{(j_2)} \in P_3,\mbox{x}^{(j_3)} \in P_4;\\
& \hspace{1cm}  \;\forall i\neq j_1,j_2,j_3\in\mathcal{I},\;  \mbox{x}^{(i)} \in  P_2 \cup P_3 \cup P_4\}\line
\end{align*}


\subsection{011101,101110}
\begin{align*}
&p\left(011101 \mid {\bf x}_k,\nu, T\right)\\
&= \mathbb{I}\{k>1; \;\exists \; j_1,j_2\in\mathcal{I} :  \mbox{x}^{(j_1)} \in S_1, \mbox{x}^{(j_2)} \in R_2;\\
& \hspace{1cm}  \;\forall i\neq j_1,j_2\in\mathcal{I},\;  \mbox{x}^{(i)} \in D_1 \cup D_2 \cup P_1 \cup P_2 \cup P_3 \cup R_2 \cup S_1 \}\liney
& + \mathbb{I}\{k>2; \;\exists \; j_1,j_2,j_3\in\mathcal{I} :  \mbox{x}^{(j_1)} \in D_1,\mbox{x}^{(j_2)} \in D_2,\mbox{x}^{(j_3)} \in R_2;\\
& \hspace{1cm}  \;\forall i\neq j_1,j_2,j_3\in\mathcal{I},\;  \mbox{x}^{(i)} \in D_1 \cup D_2 \cup P_1 \cup P_2 \cup P_3 \cup R_2  \}\liney
& + \mathbb{I}\{k>2; \;\exists \; j_1,j_2,j_3\in\mathcal{I} :  \mbox{x}^{(j_1)} \in D_1,\mbox{x}^{(j_2)} \in P_3,\mbox{x}^{(j_3)} \in R_2;\\
& \hspace{1cm}  \;\forall i\neq j_1,j_2,j_3\in\mathcal{I},\;  \mbox{x}^{(i)} \in D_1 \cup P_1 \cup P_2 \cup P_3 \cup R_2 \}\liney
& + \mathbb{I}\{k>2; \;\exists \; j_1,j_2,j_3\in\mathcal{I} :  \mbox{x}^{(j_1)} \in D_2, \mbox{x}^{(j_2)} \in P_1, \mbox{x}^{(j_3)} \in R_2;\\
& \hspace{1cm}  \;\forall i\neq j_1,j_2,j_3\in\mathcal{I},\;  \mbox{x}^{(i)} \in  D_2 \cup P_1 \cup P_2 \cup P_3 \cup R_2 \}\liney
& + \mathbb{I}\{k>3; \;\exists \; j_1,j_2,j_3,j_4\in\mathcal{I} :  \mbox{x}^{(j_1)} \in P_1,\mbox{x}^{(j_2)} \in P_2,\mbox{x}^{(j_3)} \in P_3,\mbox{x}^{(j_4)} \in R_2;\\
& \hspace{1cm}  \;\forall i\neq j_1,j_2,j_3,j_4\in\mathcal{I},\;  \mbox{x}^{(i)} \in  P_1 \cup P_2 \cup P_3 \cup R_2\}\line
&p\left(101110 \mid {\bf x}_k,\nu, T\right)\\
&= \mathbb{I}\{k>1; \;\exists \; j_1,j_2\in\mathcal{I} :  \mbox{x}^{(j_1)} \in S_2,\mbox{x}^{(j_2)} \in R_1;\\
& \hspace{1cm}  \;\forall i\neq j_1,j_2\in\mathcal{I},\;  \mbox{x}^{(i)} \in D_2 \cup D_3 \cup P_2 \cup P_3 \cup P_4 \cup R_1 \cup S_2 \}\liney
& + \mathbb{I}\{k>2; \;\exists \; j_1,j_2,j_3\in\mathcal{I} :  \mbox{x}^{(j_1)} \in D_2,\mbox{x}^{(j_2)} \in D_3,\mbox{x}^{(j_3)} \in R_1;\\
& \hspace{1cm}  \;\forall i\neq j_1,j_2,j_3\in\mathcal{I},\;  \mbox{x}^{(i)} \in D_2 \cup D_3 \cup P_3 \cup P_3 \cup P_4 \cup R_1  \}\liney
& + \mathbb{I}\{k>2; \;\exists \; j_1,j_2,j_3\in\mathcal{I} :  \mbox{x}^{(j_1)} \in D_2,\mbox{x}^{(j_2)} \in P_4,\mbox{x}^{(j_3)} \in R_1;\\
& \hspace{1cm}  \;\forall i\neq j_1,j_2,j_3\in\mathcal{I},\;  \mbox{x}^{(i)} \in D_2 \cup P_2 \cup P_3 \cup P_4 \cup R_1\}\liney
& + \mathbb{I}\{k>2; \;\exists \; j_1,j_2,j_3\in\mathcal{I} :  \mbox{x}^{(j_1)} \in D_3, \mbox{x}^{(j_2)} \in P_2, \mbox{x}^{(j_3)} \in R_1;\\
& \hspace{1cm}  \;\forall i\neq j_1,j_2,j_3\in\mathcal{I},\;  \mbox{x}^{(i)} \in  D_3 \cup P_2 \cup P_3 \cup P_4 \cup R_1 \}\liney
& + \mathbb{I}\{k>3; \;\exists \; j_1,j_2,j_3,j_4\in\mathcal{I} :  \mbox{x}^{(j_1)} \in P_2,\mbox{x}^{(j_2)} \in P_3,\mbox{x}^{(j_3)} \in P_4,\mbox{x}^{(j_4)} \in R_1;\\
& \hspace{1cm}  \;\forall i\neq j_1,j_2,j_3,j_4\in\mathcal{I},\;  \mbox{x}^{(i)} \in  P_2 \cup P_3 \cup P_4 \cup  R_1\}\line
\end{align*}


\subsection{111000,000111}
\begin{align*}
&p\left(111000 \mid {\bf x}_k,\nu, T\right)\\
&= \mathbb{I}\{k>0; \;\exists \; j_1\in\mathcal{I} :  \mbox{x}^{(j_1)} \in M_1;\\
& \hspace{1cm}  \;\forall i\neq j_1\in\mathcal{I},\;  \mbox{x}^{(i)} \in D_1 \cup M_1 \cup P_1 \cup P_2  \cup Q_1 \cup R_1 \}\liney
& + \mathbb{I}\{k>1; \;\exists \; j_1,j_2\in\mathcal{I} :  \mbox{x}^{(j_1)} \in D_1,\mbox{x}^{(j_2)} \in Q_1;\\
& \hspace{1cm}  \;\forall i\neq j_1,j_2\in\mathcal{I},\;  \mbox{x}^{(i)} \in D_1 \cup P_1 \cup P_2  \cup Q_1 \cup R_1 \}\liney  
& + \mathbb{I}\{k>1; \;\exists \; j_1,j_2\in\mathcal{I} :  \mbox{x}^{(j_1)} \in P_2, \mbox{x}^{(j_2)} \in Q_1;\\
& \hspace{1cm}  \;\forall i\neq j_1,j_2\in\mathcal{I},\;  \mbox{x}^{(i)} \in P_1 \cup P_2  \cup Q_1 \cup R_1 \}\liney
& + \mathbb{I}\{k>1; \;\exists \; j_1,j_2\in\mathcal{I} :  \mbox{x}^{(j_1)} \in D_1,\mbox{x}^{(j_2)} \in R_1;\\
& \hspace{1cm}  \;\forall i\neq j_1,j_2\in\mathcal{I},\;  \mbox{x}^{(i)} \in P_1 \cup P_2  \cup R_1 \}\liney
& + \mathbb{I}\{k>2; \;\exists \; j_1,j_2,j_3\in\mathcal{I} :  \mbox{x}^{(j_1)} \in P_1,\mbox{x}^{(j_2)} \in P_2,\mbox{x}^{(j_3)} \in R_1;\\
& \hspace{1cm}  \;\forall i\neq j_1,j_2,j_3\in\mathcal{I},\;  \mbox{x}^{(i)} \in P_1 \cup P_2  \cup R_1 \}\line
&p\left(000111 \mid {\bf x}_k,\nu, T\right)\\
&= \mathbb{I}\{k>0; \;\exists \; j_1\in\mathcal{I} :  \mbox{x}^{(j_1)} \in M_2;\\
& \hspace{1cm}  \;\forall i\neq j_1\in\mathcal{I},\;  \mbox{x}^{(i)} \in D_3 \cup M_2 \cup P_3 \cup P_4  \cup Q_2 \cup R_2 \}\liney
& + \mathbb{I}\{k>1; \;\exists \; j_1,j_2\in\mathcal{I} :  \mbox{x}^{(j_1)} \in D_3,\mbox{x}^{(j_2)} \in Q_2;\\
& \hspace{1cm}  \;\forall i\neq j_1,j_2\in\mathcal{I},\;  \mbox{x}^{(i)} \in D_3 \cup P_3 \cup P_4  \cup Q_2 \cup R_2 \}\liney  
& + \mathbb{I}\{k>1; \;\exists \; j_1,j_2\in\mathcal{I} :  \mbox{x}^{(j_1)} \in P_3, \mbox{x}^{(j_2)} \in Q_2;\\
& \hspace{1cm}  \;\forall i\neq j_1,j_2\in\mathcal{I},\;  \mbox{x}^{(i)} \in P_3 \cup P_4  \cup Q_2 \cup R_2 \}\liney
& + \mathbb{I}\{k>1; \;\exists \; j_1,j_2\in\mathcal{I} :  \mbox{x}^{(j_1)} \in D_3,\mbox{x}^{(j_2)} \in R_2;\\
& \hspace{1cm}  \;\forall i\neq j_1,j_2\in\mathcal{I},\;  \mbox{x}^{(i)} \in P_3 \cup P_4  \cup R_2 \}\liney
& + \mathbb{I}\{k>2; \;\exists \; j_1,j_2,j_3\in\mathcal{I} :  \mbox{x}^{(j_1)} \in P_3,\mbox{x}^{(j_2)} \in P_4,\mbox{x}^{(j_3)} \in R_2;\\
& \hspace{1cm}  \;\forall i\neq j_1,j_2,j_3\in\mathcal{I},\;  \mbox{x}^{(i)} \in P_3 \cup P_4  \cup R_2 \}
\end{align*}


\subsection{111010,010111}
\begin{align*}
&p\left(111010 \mid {\bf x}_k,\nu, T\right)\\
&= \mathbb{I}\{k>1; \;\exists \; j_1,j_2\in\mathcal{I} :  \mbox{x}^{(j_1)} \in M_1,\mbox{x}^{(j_2)} \in P_4;\\
& \hspace{1cm}  \;\forall i\neq j_1,j_2\in\mathcal{I},\;  \mbox{x}^{(i)} \in D_1 \cup M_1 \cup P_1 \cup P_2  \cup P_4 \cup Q_1 \cup R_1 \}\liney
& + \mathbb{I}\{k>2; \;\exists \; j_1,j_2,j_3\in\mathcal{I} :  \mbox{x}^{(j_1)} \in D_1,\mbox{x}^{(j_2)} \in P_4,\mbox{x}^{(j_3)} \in Q_1;\\
& \hspace{1cm}  \;\forall i\neq j_1,j_2,j_3\in\mathcal{I},\;  \mbox{x}^{(i)} \in D_1 \cup P_1 \cup P_2  \cup P_4 \cup Q_1 \cup R_1 \}\liney  
& + \mathbb{I}\{k>2; \;\exists \; j_1,j_2,j_3\in\mathcal{I} :  \mbox{x}^{(j_1)} \in P_2, \mbox{x}^{(j_2)} \in P_4, \mbox{x}^{(j_3)} \in Q_1;\\
& \hspace{1cm}  \;\forall i\neq j_1,j_2,j_3\in\mathcal{I},\;  \mbox{x}^{(i)} \in P_1 \cup P_2  \cup P_4 \cup Q_1 \cup R_1 \}\liney
& + \mathbb{I}\{k>2; \;\exists \; j_1,j_2,j_3\in\mathcal{I} :  \mbox{x}^{(j_1)} \in D_1,\mbox{x}^{(j_2)} \in P_4,\mbox{x}^{(j_3)} \in R_1;\\
& \hspace{1cm}  \;\forall i\neq j_1,j_2,j_3\in\mathcal{I},\;  \mbox{x}^{(i)} \in P_1 \cup P_2  \cup P_4 \cup R_1 \}\liney
& + \mathbb{I}\{k>3; \;\exists \; j_1,j_2,j_3,j_4\in\mathcal{I} :  \mbox{x}^{(j_1)} \in P_1,\mbox{x}^{(j_2)} \in P_2,\mbox{x}^{(j_3)} \in P_4,\mbox{x}^{(j_4)} \in R_1;\\
& \hspace{1cm}  \;\forall i\neq j_1,j_2,j_3,j_4\in\mathcal{I},\;  \mbox{x}^{(i)} \in P_1 \cup P_2 \cup P_4 \cup R_1 \}\line
&p\left(010111 \mid {\bf x}_k,\nu, T\right)\\
&= \mathbb{I}\{k>1; \;\exists \; j_1,j_2\in\mathcal{I} :  \mbox{x}^{(j_1)} \in M_2, \mbox{x}^{(j_2)} \in P_1;\\
& \hspace{1cm}  \;\forall i\neq j_1,j_2\in\mathcal{I},\;  \mbox{x}^{(i)} \in D_3 \cup M_2 \cup P_1 \cup P_3 \cup P_4  \cup Q_2 \cup R_2 \}\liney
& + \mathbb{I}\{k>2; \;\exists \; j_1,j_2,j_3\in\mathcal{I} :  \mbox{x}^{(j_1)} \in D_3,\mbox{x}^{(j_2)} \in P_1,\mbox{x}^{(j_3)} \in Q_2;\\
& \hspace{1cm}  \;\forall i\neq j_1,j_2,j_3\in\mathcal{I},\;  \mbox{x}^{(i)} \in D_3 \cup P_1\cup P_3 \cup P_4  \cup Q_2 \cup R_2 \}\liney  
& + \mathbb{I}\{k>2; \;\exists \; j_1,j_2,j_3\in\mathcal{I} :  \mbox{x}^{(j_1)} \in P_3, \mbox{x}^{(j_2)} \in P_1, \mbox{x}^{(j_3)} \in Q_2;\\
& \hspace{1cm}  \;\forall i\neq j_1,j_2,j_3\in\mathcal{I},\;  \mbox{x}^{(i)} \in P_1 \cup P_3 \cup P_4  \cup Q_2 \cup R_2 \}\liney
& + \mathbb{I}\{k>2; \;\exists \; j_1,j_2,j_3\in\mathcal{I} :  \mbox{x}^{(j_1)} \in D_3,\mbox{x}^{(j_2)} \in P_1,\mbox{x}^{(j_3)} \in R_2;\\
& \hspace{1cm}  \;\forall i\neq j_1,j_2,j_3\in\mathcal{I},\;  \mbox{x}^{(i)} \in P_1\cup P_3 \cup P_4  \cup R_2 \}\liney
& + \mathbb{I}\{k>3; \;\exists \; j_1,j_2,j_3,j_4\in\mathcal{I} :  \mbox{x}^{(j_1)} \in P_1,\mbox{x}^{(j_2)} \in P_3,\mbox{x}^{(j_3)} \in P_4,\mbox{x}^{(j_4)} \in R_2;\\
& \hspace{1cm}  \;\forall i\neq j_1,j_2,j_3,j_4\in\mathcal{I},\;  \mbox{x}^{(i)} \in P_1 \cup P_3 \cup P_4  \cup R_2 \}
\end{align*}

\subsection{ 111001,100111}
\begin{align*}
&p\left(111001 \mid {\bf x}_k,\nu, T\right)\\
&= \mathbb{I}\{k>1; \;\exists \; j_1,j_2\in\mathcal{I} :  \mbox{x}^{(j_1)} \in M_1,\mbox{x}^{(j_2)} \in R_2;\\
& \hspace{1cm}  \;\forall i\neq j_1,j_2\in\mathcal{I},\;  \mbox{x}^{(i)} \in D_1 \cup M_1 \cup P_1 \cup P_2  \cup Q_1 \cup R_1 \cup R_2 \}\liney
& + \mathbb{I}\{k>2; \;\exists \; j_1,j_2,j_3\in\mathcal{I} :  \mbox{x}^{(j_1)} \in D_1,\mbox{x}^{(j_2)} \in Q_1,\mbox{x}^{(j_3)} \in R_2;\\
& \hspace{1cm}  \;\forall i\neq j_1,j_2,j_3\in\mathcal{I},\;  \mbox{x}^{(i)} \in D_1 \cup P_1 \cup P_2  \cup Q_1 \cup R_1  \cup R_2 \}\liney  
& + \mathbb{I}\{k>2; \;\exists \; j_1,j_2,j_3\in\mathcal{I} :  \mbox{x}^{(j_1)} \in P_2, \mbox{x}^{(j_2)} \in Q_1, \mbox{x}^{(j_3)} \in R_2;\\
& \hspace{1cm}  \;\forall i\neq j_1,j_2,j_3\in\mathcal{I},\;  \mbox{x}^{(i)} \in P_1 \cup P_2  \cup Q_1 \cup R_1 \cup R_2 \}\liney
& + \mathbb{I}\{k>2; \;\exists \; j_1,j_2,j_3\in\mathcal{I} :  \mbox{x}^{(j_1)} \in D_1,\mbox{x}^{(j_2)} \in R_1,\mbox{x}^{(j_3)} \in R_2;\\
& \hspace{1cm}  \;\forall i\neq j_1,j_2,j_3\in\mathcal{I},\;  \mbox{x}^{(i)} \in P_1 \cup P_2  \cup R_1 \cup R_2 \}\liney
& + \mathbb{I}\{k>3; \;\exists \; j_1,j_2,j_3,j_4\in\mathcal{I} :  \mbox{x}^{(j_1)} \in P_1,\mbox{x}^{(j_2)} \in P_2,\mbox{x}^{(j_3)} \in R_1,\mbox{x}^{(j_4)} \in R_2;\\
& \hspace{1cm}  \;\forall i\neq j_1,j_2,j_3,j_4\in\mathcal{I},\;  \mbox{x}^{(i)} \in P_1 \cup P_2 \cup R_1 \cup R_2 \}\line
&p\left(100111 \mid {\bf x}_k,\nu, T\right)\\
&= \mathbb{I}\{k>1; \;\exists \; j_1,j_2\in\mathcal{I} :  \mbox{x}^{(j_1)} \in M_2, \mbox{x}^{(j_2)} \in R_1;\\
& \hspace{1cm}  \;\forall i\neq j_1,j_2\in\mathcal{I},\;  \mbox{x}^{(i)} \in D_3 \cup M_2 \cup P_3 \cup P_4  \cup Q_2 \cup R_1 \cup R_2 \}\liney
& + \mathbb{I}\{k>2; \;\exists \; j_1,j_2,j_3\in\mathcal{I} :  \mbox{x}^{(j_1)} \in D_3,\mbox{x}^{(j_2)} \in Q_2,\mbox{x}^{(j_3)} \in R_1;\\
& \hspace{1cm}  \;\forall i\neq j_1,j_2,j_3\in\mathcal{I},\;  \mbox{x}^{(i)} \in D_3\cup P_3 \cup P_4  \cup Q_2 \cup R_1 \cup R_2 \}\liney  
& + \mathbb{I}\{k>2; \;\exists \; j_1,j_2,j_3\in\mathcal{I} :  \mbox{x}^{(j_1)} \in P_3, \mbox{x}^{(j_2)} \in Q_2, \mbox{x}^{(j_3)} \in R_1;\\
& \hspace{1cm}  \;\forall i\neq j_1,j_2,j_3\in\mathcal{I},\;  \mbox{x}^{(i)} \in P_3 \cup P_4  \cup Q_2 \cup R_1 \cup  R_2 \}\liney
& + \mathbb{I}\{k>2; \;\exists \; j_1,j_2,j_3\in\mathcal{I} :  \mbox{x}^{(j_1)} \in D_3,\mbox{x}^{(j_2)} \in R_1,\mbox{x}^{(j_3)} \in R_2;\\
& \hspace{1cm}  \;\forall i\neq j_1,j_2,j_3\in\mathcal{I},\;  \mbox{x}^{(i)} \in P_3 \cup P_4  \cup R_1\cup R_2 \}\liney
& + \mathbb{I}\{k>3; \;\exists \; j_1,j_2,j_3,j_4\in\mathcal{I} :  \mbox{x}^{(j_1)} \in P_3,\mbox{x}^{(j_2)} \in P_4,\mbox{x}^{(j_3)} \in R_1,\mbox{x}^{(j_4)} \in R_2;\\
& \hspace{1cm}  \;\forall i\neq j_1,j_2,j_3,j_4\in\mathcal{I},\;  \mbox{x}^{(i)} \in P_3 \cup P_4  \cup R_1 \cup R_2 \}
\end{align*}


\subsection{ 111011,110111}
\begin{align*}
&p\left(111011 \mid {\bf x}_k,\nu, T\right)\\
&= \mathbb{I}\{k>1; \;\exists \; j_1,j_2\in\mathcal{I} :  \mbox{x}^{(j_1)} \in M_1,\mbox{x}^{(j_2)} \in Q_2;\\
& \hspace{1cm}  \;\forall i\neq j_1,j_2\in\mathcal{I},\;  \mbox{x}^{(i)} \in D_1 \cup M_1 \cup P_1 \cup P_2 \cup P_4   \cup Q_1\cup Q_2  \cup R_1 \cup R_2 \}\liney
& + \mathbb{I}\{k>2; \;\exists \; j_1,j_2,j_3\in\mathcal{I} :  \mbox{x}^{(j_1)} \in M_1,\mbox{x}^{(j_2)} \in P_4,\mbox{x}^{(j_3)} \in R_2;\\
& \hspace{1cm}  \;\forall i\neq j_1,j_2,j_3\in\mathcal{I},\;  \mbox{x}^{(i)} \in D_1 \cup M_1 \cup P_1 \cup P_2 \cup P_4   \cup Q_1  \cup R_1 \cup R_2 \}\liney
& + \mathbb{I}\{k>2; \;\exists \; j_1,j_2,j_3\in\mathcal{I} :  \mbox{x}^{(j_1)} \in D_1,\mbox{x}^{(j_2)} \in Q_1,\mbox{x}^{(j_3)} \in Q_2;\\
& \hspace{1cm}  \;\forall i\neq j_1,j_2,j_3\in\mathcal{I},\;  \mbox{x}^{(i)} \in D_1 \cup P_1 \cup P_2 \cup P_4   \cup Q_1\cup Q_2  \cup R_1 \cup R_2 \}\liney
& + \mathbb{I}\{k>3; \;\exists \; j_1,j_2,j_3,j_4\in\mathcal{I} :  \mbox{x}^{(j_1)} \in D_1,\mbox{x}^{(j_2)} \in P_4,\mbox{x}^{(j_3)} \in Q_1,\mbox{x}^{(j_4)} \in R_2;\\
& \hspace{1cm}  \;\forall i\neq j_1,j_2,j_3,j_4\in\mathcal{I},\;  \mbox{x}^{(i)} \in D_1 \cup P_1 \cup P_2 \cup P_4   \cup Q_1 \cup R_1 \cup R_2 \}\liney
& + \mathbb{I}\{k>2; \;\exists \; j_1,j_2,j_3\in\mathcal{I} :  \mbox{x}^{(j_1)} \in P_2,\mbox{x}^{(j_2)} \in Q_1,\mbox{x}^{(j_3)} \in Q_2;\\
& \hspace{1cm}  \;\forall i\neq j_1,j_2,j_3\in\mathcal{I},\;  \mbox{x}^{(i)} \in  P_1 \cup P_2 \cup P_4   \cup Q_1\cup Q_2  \cup R_1 \cup R_2 \}\liney
& + \mathbb{I}\{k>3; \;\exists \; j_1,j_2,j_3,j_4\in\mathcal{I} :  \mbox{x}^{(j_1)} \in P_2,\mbox{x}^{(j_2)} \in P_4,\mbox{x}^{(j_3)} \in Q_1,\mbox{x}^{(j_4)} \in R_2;\\
& \hspace{1cm}  \;\forall i\neq j_1,j_2,j_3,j_4\in\mathcal{I},\;  \mbox{x}^{(i)} \in P_1 \cup P_2 \cup P_4   \cup Q_1 \cup R_1 \cup R_2 \}\liney
& + \mathbb{I}\{k>2; \;\exists \; j_1,j_2,j_3\in\mathcal{I} :  \mbox{x}^{(j_1)} \in D_1,\mbox{x}^{(j_2)} \in Q_2,\mbox{x}^{(j_3)} \in R_1;\\
& \hspace{1cm}  \;\forall i\neq j_1,j_2,j_3\in\mathcal{I},\;  \mbox{x}^{(i)} \in D_1 \cup P_1 \cup P_2 \cup P_4  \cup Q_2  \cup R_1 \cup R_2 \}\liney
& + \mathbb{I}\{k>3; \;\exists \; j_1,j_2,j_3,j_4\in\mathcal{I} :  \mbox{x}^{(j_1)} \in D_1,\mbox{x}^{(j_2)} \in P_4, \mbox{x}^{(j_3)} \in R_1, \mbox{x}^{(j_4)} \in R_2;\\
& \hspace{1cm}  \;\forall i\neq j_1,j_2,j_3,j_4\in\mathcal{I},\;  \mbox{x}^{(i)} \in D_1 \cup P_1 \cup P_2 \cup P_4   \cup R_1 \cup R_2 \}\liney
& + \mathbb{I}\{k>3; \;\exists \; j_1,j_2,j_3,j_4\in\mathcal{I} :  \mbox{x}^{(j_1)} \in P_1,\mbox{x}^{(j_2)} \in P_2,\mbox{x}^{(j_3)} \in Q_2,\mbox{x}^{(j_4)} \in R_1;\\
& \hspace{1cm}  \;\forall i\neq j_1,j_2,j_3,j_4\in\mathcal{I},\;  \mbox{x}^{(i)} \in  P_1 \cup P_2 \cup P_4  \cup Q_2  \cup R_1 \cup R_2 \}\liney
& + \mathbb{I}\{k>4; \;\exists \; j_1,j_2,j_3,j_4,j_5\in\mathcal{I} :  \mbox{x}^{(j_1)} \in P_1,\mbox{x}^{(j_2)} \in P_2, \\
& \hspace{2cm}\mbox{x}^{(j_3)} \in P_4,\mbox{x}^{(j_4)} \in R_1,\mbox{x}^{(j_5)} \in R_2;\\
& \hspace{1cm}  \;\forall i\neq j_1,j_2\in\mathcal{I},\;  \mbox{x}^{(i)} \in P_1 \cup P_2 \cup P_4  \cup R_1 \cup R_2 \}\line
&p\left(110111 \mid {\bf x}_k,\nu, T\right)\\
&= \mathbb{I}\{k>1; \;\exists \; j_1,j_2\in\mathcal{I} :  \mbox{x}^{(j_1)} \in M_2,\mbox{x}^{(j_2)} \in Q_1;\\
& \hspace{1cm}  \;\forall i\neq j_1,j_2\in\mathcal{I},\;  \mbox{x}^{(i)} \in D_3 \cup M_2 \cup P_1 \cup P_3 \cup P_4   \cup Q_1\cup Q_2  \cup R_1 \cup R_2 \}\liney
& + \mathbb{I}\{k>2; \;\exists \; j_1,j_2,j_3\in\mathcal{I} :  \mbox{x}^{(j_1)} \in M_2,\mbox{x}^{(j_2)} \in P_1,\mbox{x}^{(j_3)} \in R_1;\\
& \hspace{1cm}  \;\forall i\neq j_1,j_2,j_3\in\mathcal{I},\;  \mbox{x}^{(i)} \in  \{D_3 \cup M_2 \cup P_1 \cup P_3 \cup P_4  \cup Q_2  \cup R_1 \cup R_2 \}\liney
& + \mathbb{I}\{k>2; \;\exists \; j_1,j_2,j_3\in\mathcal{I} :  \mbox{x}^{(j_1)} \in D_3,\mbox{x}^{(j_2)} \in Q_1,\mbox{x}^{(j_3)} \in Q_2;\\
& \hspace{1cm}  \;\forall i\neq j_1,j_2,j_3\in\mathcal{I},\;  \mbox{x}^{(i)} \in  \{D_3  \cup P_1 \cup P_3 \cup P_4   \cup Q_1\cup Q_2  \cup R_1 \cup R_2 \}\liney
& + \mathbb{I}\{k>3; \;\exists \; j_1,j_2,j_3,j_4\in\mathcal{I} :  \mbox{x}^{(j_1)} \in D_3,\mbox{x}^{(j_2)} \in P_1,\mbox{x}^{(j_3)} \in Q_2,\mbox{x}^{(j_4)} \in R_1;\\
& \hspace{1cm}  \;\forall i\neq j_1,j_2,j_3,j_4\in\mathcal{I},\;  \mbox{x}^{(i)} \in  \{D_3 \cup P_1 \cup P_3 \cup P_4 \cup Q_2  \cup R_1 \cup R_2 \}\liney
& + \mathbb{I}\{k>2; \;\exists \; j_1,j_2,j_3\in\mathcal{I} :  \mbox{x}^{(j_1)} \in P_3,\mbox{x}^{(j_2)} \in Q_1,\mbox{x}^{(j_3)} \in Q_2;\\
& \hspace{1cm}  \;\forall i\neq j_1,j_2,j_3\in\mathcal{I},\;  \mbox{x}^{(i)} \in  \{P_1 \cup P_3 \cup P_4   \cup Q_1\cup Q_2  \cup R_1 \cup R_2 \}\liney
& + \mathbb{I}\{k>3; \;\exists \; j_1,j_2,j_3,j_4\in\mathcal{I} :  \mbox{x}^{(j_1)} \in P_1,\mbox{x}^{(j_2)} \in P_3,\mbox{x}^{(j_3)} \in Q_2,\mbox{x}^{(j_4)} \in R_1;\\
& \hspace{1cm}  \;\forall i\neq j_1,j_2,j_3,j_4\in\mathcal{I},\;  \mbox{x}^{(i)} \in  \{ P_1 \cup P_3 \cup P_4  \cup Q_2  \cup R_1 \cup R_2 \}\liney
& + \mathbb{I}\{k>2; \;\exists \; j_1,j_2,j_3\in\mathcal{I} :  \mbox{x}^{(j_1)} \in D_3,\mbox{x}^{(j_2)} \in Q_1,\mbox{x}^{(j_3)} \in R_2;\\
& \hspace{1cm}  \;\forall i\neq j_1,j_2,j_3\in\mathcal{I},\;  \mbox{x}^{(i)} \in  \{D_3 \cup P_1 \cup P_3 \cup P_4   \cup Q_1 \cup R_1 \cup R_2 \}\liney
& + \mathbb{I}\{k>3; \;\exists \; j_1,j_2,j_3,j_4\in\mathcal{I} :  \mbox{x}^{(j_1)} \in D_3,\mbox{x}^{(j_2)} \in P_1, \mbox{x}^{(j_3)} \in R_1, \mbox{x}^{(j_4)} \in R_2;\\
& \hspace{1cm}  \;\forall i\neq j_1,j_2,j_3,j_4\in\mathcal{I},\;  \mbox{x}^{(i)} \in  \{D_3 \cup P_1 \cup P_3 \cup P_4  \cup R_1 \cup R_2 \}\liney
& + \mathbb{I}\{k>3; \;\exists \; j_1,j_2,j_3,j_4\in\mathcal{I} :  \mbox{x}^{(j_1)} \in P_3,\mbox{x}^{(j_2)} \in P_4,\mbox{x}^{(j_3)} \in Q_1,\mbox{x}^{(j_4)} \in R_2;\\
& \hspace{1cm}  \;\forall i\neq j_1,j_2,j_3,j_4\in\mathcal{I},\;  \mbox{x}^{(i)} \in  \{ P_1 \cup P_3 \cup P_4   \cup Q_1\cup R_1 \cup R_2 \}\liney
& + \mathbb{I}\{k>4; \;\exists \; j_1,j_2,j_3,j_4,j_5\in\mathcal{I} :  \mbox{x}^{(j_1)} \in P_1,\mbox{x}^{(j_2)} \in P_3, \\
& \hspace{2cm}\mbox{x}^{(j_3)} \in P_4,\mbox{x}^{(j_4)} \in R_1,\mbox{x}^{(j_5)} \in R_2;\\
& \hspace{1cm}  \;\forall i\neq j_1,j_2\in\mathcal{I},\;  \mbox{x}^{(i)} \in  \{ P_1 \cup P_3 \cup P_4  \cup R_1 \cup R_2 \}
\end{align*}


\subsection{011110}
\begin{align*}
&p\left(011110 \mid {\bf x}_k,\nu, T\right)\\
&= \mathbb{I}\{k>0; \;\exists \; j_1\in\mathcal{I} :  \mbox{x}^{(j_1)} \in U;\\
& \hspace{1cm}  \;\forall i\neq j_1\in\mathcal{I},\;  \mbox{x}^{(i)} \in D_1 \cup D_2 \cup D_3 \cup P_1 \cup P_2 \cup P_3 \cup P_4 \cup S_1\cup S_2  \cup U \}\liney
& + \mathbb{I}\{k>1; \;\exists \; j_1,j_2\in\mathcal{I} :  \mbox{x}^{(j_1)} \in S_1,\mbox{x}^{(j_2)} \in S_2;\\
& \hspace{1cm}  \;\forall i\neq j_1,j_2\in\mathcal{I},\;  \mbox{x}^{(i)} \in D_1 \cup D_2 \cup D_3 \cup P_1 \cup P_2 \cup P_3 \cup P_4 \cup S_1\cup S_2  \}\liney
& + \mathbb{I}\{k>1; \;\exists \; j_1,j_2\in\mathcal{I} : \mbox{x}^{(j_1)} \in D_1, \mbox{x}^{(j_2)} \in S_1;\\
& \hspace{1cm}  \;\forall i\neq j_1,j_2\in\mathcal{I},\;  \mbox{x}^{(i)} \in D_1 \cup D_2 \cup D_3 \cup P_1 \cup P_2 \cup P_3 \cup P_4 \cup S_1 \}\liney
& + \mathbb{I}\{k>1; \;\exists \; j_1,j_2\in\mathcal{I} :  \mbox{x}^{(j_1)} \in P_4,\mbox{x}^{(j_2)} \in S_1;\\
& \hspace{1cm}  \;\forall i\neq j_1,j_2\in\mathcal{I},\;  \mbox{x}^{(i)} \in D_2 \cup D_3 \cup P_1 \cup P_2 \cup P_3 \cup P_4 \cup S_1 \}\liney
& + \mathbb{I}\{k>1; \;\exists \; j_1,j_2\in\mathcal{I} :  \mbox{x}^{(j_1)} \in D_1, \mbox{x}^{(j_2)} \in S_2;\\
& \hspace{1cm}  \;\forall i\neq j_1,j_2\in\mathcal{I},\;  \mbox{x}^{(i)} \in D_1 \cup D_2 \cup D_3 \cup P_1 \cup P_2 \cup P_3 \cup P_4\cup S_2  \}\liney
& + \mathbb{I}\{k>1; \;\exists \; j_1,j_2\in\mathcal{I} :  \mbox{x}^{(j_1)} \in P_1,\mbox{x}^{(j_2)} \in S_2;\\
& \hspace{1cm}  \;\forall i\neq j_1,j_2\in\mathcal{I},\;  \mbox{x}^{(i)} \in D_2 \cup D_3 \cup P_1 \cup P_2 \cup P_3 \cup P_4 \cup S_2  \}\liney
& + \mathbb{I}\{k>1; \;\exists \; j_1,j_2\in\mathcal{I} :   \mbox{x}^{(j_1)} \in D1, \mbox{x}^{(j_2)} \in D3;\\
& \hspace{1cm}  \;\forall i\neq j_1,j_2\in\mathcal{I},\;  \mbox{x}^{(i)} \in D_1 \cup D_2 \cup D_3 \cup P_1 \cup P_2 \cup P_3 \cup P_4 \}\liney
& + \mathbb{I}\{k>2; \;\exists \; j_1,j_2,j_3\in\mathcal{I} :  \mbox{x}^{(j_1)} \in D_1,\mbox{x}^{(j_2)} \in D_2, \mbox{x}^{(j_3)} \in P_4;\\
& \hspace{1cm}  \;\forall i\neq j_1,j_2,j_3\in\mathcal{I},\;  \mbox{x}^{(i)} \in D_1 \cup D_2 \cup P_1 \cup P_2 \cup P_3 \cup P_4\}\liney
& + \mathbb{I}\{k>2; \;\exists \; j_1,j_2,j_3\in\mathcal{I} :  \mbox{x}^{(j_1)} \in D_2,\mbox{x}^{(j_2)} \in D_3,\mbox{x}^{(j_3)} \in P_1;\\
& \hspace{1cm}  \;\forall i\neq j_1,j_2,j_3\in\mathcal{I},\;  \mbox{x}^{(i)} \in D_2 \cup D_3 \cup P_1 \cup P_2 \cup P_3 \cup P_4 \}\liney
& + \mathbb{I}\{k>2; \;\exists \; j_1,j_2,j_3\in\mathcal{I} :  \mbox{x}^{(j_1)} \in D_1,\mbox{x}^{(j_2)} \in P_3,\mbox{x}^{(j_3)} \in P_4;\\
& \hspace{1cm}  \;\forall i\neq j_1,j_2,j_3\in\mathcal{I},\;  \mbox{x}^{(i)} \in D_1 \cup P_1 \cup P_2 \cup P_3 \cup P_4\}\liney
& + \mathbb{I}\{k>2; \;\exists \; j_1,j_2,j_3\in\mathcal{I} :  \mbox{x}^{(j_1)} \in D_2,\mbox{x}^{(j_2)} \in P_1,\mbox{x}^{(j_3)} \in P_4;\\
& \hspace{1cm}  \;\forall i\neq j_1,j_2,j_3\in\mathcal{I},\;  \mbox{x}^{(i)} \in D_2 \cup P_1 \cup P_2 \cup P_3 \cup P_4 \}\liney
& + \mathbb{I}\{k>2; \;\exists \; j_1,j_2,j_3\in\mathcal{I} :  \mbox{x}^{(j_1)} \in D_3,\mbox{x}^{(j_2)} \in P_1,\mbox{x}^{(j_3)} \in P_2;\\
& \hspace{1cm}  \;\forall i\neq j_1,j_2,j_3\in\mathcal{I},\;  \mbox{x}^{(i)} \in D_3 \cup P_1 \cup P_2 \cup P_3 \cup P_4 \}\liney
& + \mathbb{I}\{k>3; \;\exists \; j_1,j_2,j_3,j_4\in\mathcal{I} :  \mbox{x}^{(j_1)} \in P_1,\mbox{x}^{(j_2)} \in P_2,\mbox{x}^{(j_3)} \in P_3,\mbox{x}^{(j_4)} \in P_4;\\
& \hspace{1cm}  \;\forall i\neq j_1,j_3,j_3,j_4\in\mathcal{I},\;  \mbox{x}^{(i)} \in P_1 \cup P_2 \cup P_3 \cup P_4 \}
\end{align*}


\subsection{111100,001111}
\begin{align*}
&p\left(111100 \mid {\bf x}_k,\nu, T\right)\\
&= \mathbb{I}\{k>0; \;\exists \; j_1\in\mathcal{I} :  \mbox{x}^{(j_1)} \in k_1;\\
& \hspace{1cm}  \;\forall i\neq j_1\in\mathcal{I},\;  \mbox{x}^{(i)} \in D_1 \cup D_2 \cup M_1 \cup k_1 \cup P_1 \cup P_2 \cup P_3 \cup Q_1 \cup R_1\cup S_1 \}\liney
& + \mathbb{I}\{k>1; \;\exists \; j_1,j_2\in\mathcal{I} :  \mbox{x}^{(j_1)} \in M_1, \mbox{x}^{(j_2)} \in S_1;\\
& \hspace{1cm}  \;\forall i\neq j_1,j_2\in\mathcal{I},\;  \mbox{x}^{(i)} \in D_1 \cup D_2 \cup M_1 \cup P_1 \cup P_2 \cup P_3 \cup Q_1 \cup R_1\cup S_1 \}\liney
& + \mathbb{I}\{k>1; \;\exists \; j_1,j_2\in\mathcal{I} :  \mbox{x}^{(j_1)} \in D_2, \mbox{x}^{(j_2)} \in M_1;\\
& \hspace{1cm}  \;\forall i\neq j_1,j_2\in\mathcal{I},\;  \mbox{x}^{(i)} \in D_1 \cup D_2 \cup M_1 \cup P_1 \cup P_2 \cup P_3 \cup Q_1 \cup R_1 \}\liney
& + \mathbb{I}\{k>1; \;\exists \; j_1,j_2\in\mathcal{I} :  \mbox{x}^{(j_1)} \in M_1, \mbox{x}^{(j_2)} \in P_3;\\
& \hspace{1cm}  \;\forall i\neq j_1,j_2\in\mathcal{I},\;  \mbox{x}^{(i)} \in D_1  \cup M_1 \cup P_1 \cup P_2 \cup P_3 \cup Q_1 \cup R_1\}\liney
& + \mathbb{I}\{k>1; \;\exists \; j_1,j_2\in\mathcal{I} :  \mbox{x}^{(j_1)} \in Q_1, \mbox{x}^{(j_2)} \in S_1;\\
& \hspace{1cm}  \;\forall i\neq j_1,j_2\in\mathcal{I},\;  \mbox{x}^{(i)} \in D_1 \cup D_2 \cup P_1 \cup P_2 \cup P_3 \cup Q_1 \cup R_1\cup S_1 \}\liney
& + \mathbb{I}\{k>1; \;\exists \; j_1,j_2\in\mathcal{I} :  \mbox{x}^{(j_1)} \in D_2, \mbox{x}^{(j_2)} \in Q_1;\\
& \hspace{1cm}  \;\forall i\neq j_1,j_2\in\mathcal{I},\;  \mbox{x}^{(i)} \in D_1 \cup D_2 \cup P_1 \cup P_2 \cup P_3 \cup Q_1 \cup R_1 \}\liney
& + \mathbb{I}\{k>2; \;\exists \; j_1,j_2,j_3\in\mathcal{I} :  \mbox{x}^{(j_1)} \in P_2,\mbox{x}^{(j_2)} \in P_3, \mbox{x}^{(j_3)} \in Q_1;\\
& \hspace{1cm}  \;\forall i\neq j_1,j_2,j_3\in\mathcal{I},\;  \mbox{x}^{(i)} \in D_1 \cup P_1 \cup P_2 \cup P_3 \cup Q_1 \cup R_1 \}\liney
& + \mathbb{I}\{k>1; \;\exists \; j_1,j_2\in\mathcal{I} :  \mbox{x}^{(j_1)} \in R_1, \mbox{x}^{(j_2)} \in S_1;\\
& \hspace{1cm}  \;\forall i\neq j_1,j_2\in\mathcal{I},\;  \mbox{x}^{(i)} \in D_1 \cup D_2 \cup P_1 \cup P_2 \cup P_3 \cup R_1\cup S_1 \}\liney
& + \mathbb{I}\{k>2; \;\exists \; j_1,j_2,j_3\in\mathcal{I} :  \mbox{x}^{(j_1)} \in D_1,\mbox{x}^{(j_2)} \in D_2,\mbox{x}^{(j_3)} \in R_1;\\
& \hspace{1cm}  \;\forall i\neq j_1,j_2,j_3\in\mathcal{I},\;  \mbox{x}^{(i)} \in D_1 \cup D_2 \cup P_1 \cup P_2 \cup P_3  \cup R_1 \}\liney
& + \mathbb{I}\{k>2; \;\exists \; j_1,j_2,j_3\in\mathcal{I} :  \mbox{x}^{(j_1)} \in D_1,\mbox{x}^{(j_2)} \in P_3,\mbox{x}^{(j_3)} \in R_1;\\
& \hspace{1cm}  \;\forall i\neq j_1,j_2,j_3\in\mathcal{I},\;  \mbox{x}^{(i)} \in D_1 \cup P_1 \cup P_2 \cup P_3 \cup R_1 \}\liney
& + \mathbb{I}\{k>2; \;\exists \; j_1,j_2,j_3\in\mathcal{I} :  \mbox{x}^{(j_1)} \in D_2,\mbox{x}^{(j_2)} \in P_1,\mbox{x}^{(j_3)} \in R_1;\\
& \hspace{1cm}  \;\forall i\neq j_1,j_2,j_3\in\mathcal{I},\;  \mbox{x}^{(i)} \in D_2 \cup P_1 \cup P_2 \cup P_3 \cup R_1 \}\liney
& + \mathbb{I}\{k>3; \;\exists \; j_1,j_2,j_3,j_4\in\mathcal{I} :  \mbox{x}^{(j_1)} \in P_1,\mbox{x}^{(j_2)} \in P_2,\mbox{x}^{(j_3)} \in P_3,\mbox{x}^{(j_4)} \in R_1;\\
& \hspace{1cm}  \;\forall i\neq j_1,j_2,j_3,j_4\in\mathcal{I},\;  \mbox{x}^{(i)} \in P_1 \cup P_2 \cup P_3 \cup R_1 \}\line
&p\left(001111 \mid {\bf x}_k,\nu, T\right)\\
&= \mathbb{I}\{k>0; \;\exists \; j_1\in\mathcal{I} :  \mbox{x}^{(j_1)} \in k_2;\\
& \hspace{1cm}  \;\forall i\neq j_1\in\mathcal{I},\;  \mbox{x}^{(i)} \in D_2 \cup D_3 \cup M_2 \cup k_2 \cup P_2 \cup P_3 \cup P_4 \cup Q_2 \cup R_2\cup S_2 \}\liney
& + \mathbb{I}\{k>1; \;\exists \; j_1,j_2\in\mathcal{I} :  \mbox{x}^{(j_1)} \in M_2, \mbox{x}^{(j_2)} \in S_2;\\
& \hspace{1cm}  \;\forall i\neq j_1,j_2\in\mathcal{I},\;  \mbox{x}^{(i)} \in D_2 \cup D_3 \cup M_2 \cup P_2 \cup P_3 \cup P_4 \cup Q_2 \cup R_2\cup S_2 \}\liney
& + \mathbb{I}\{k>1; \;\exists \; j_1,j_2\in\mathcal{I} :  \mbox{x}^{(j_1)} \in D_2, \mbox{x}^{(j_2)} \in M_2;\\
& \hspace{1cm}  \;\forall i\neq j_1,j_2\in\mathcal{I},\;  \mbox{x}^{(i)} \in D_2 \cup D_3 \cup M_2 \cup P_2 \cup P_3 \cup P_4 \cup Q_2 \cup R_2 \}\liney
& + \mathbb{I}\{k>1; \;\exists \; j_1,j_2\in\mathcal{I} :  \mbox{x}^{(j_1)} \in M_2, \mbox{x}^{(j_2)} \in P_2;\\
& \hspace{1cm}  \;\forall i\neq j_1,j_2\in\mathcal{I},\;  \mbox{x}^{(i)} \in D_3  \cup M_2 \cup P_2 \cup P_3 \cup P_4 \cup Q_2 \cup R_2\}\liney
& + \mathbb{I}\{k>1; \;\exists \; j_1,j_2\in\mathcal{I} :  \mbox{x}^{(j_1)} \in Q_2, \mbox{x}^{(j_2)} \in S_2;\\
& \hspace{1cm}  \;\forall i\neq j_1,j_2\in\mathcal{I},\;  \mbox{x}^{(i)} \in D_2 \cup D_3 \cup P_2 \cup P_3 \cup P_4 \cup Q_2 \cup R_2\cup S_2 \}\liney
& + \mathbb{I}\{k>1; \;\exists \; j_1,j_2\in\mathcal{I} :  \mbox{x}^{(j_1)} \in D_2, \mbox{x}^{(j_2)} \in Q_2;\\
& \hspace{1cm}  \;\forall i\neq j_1,j_2\in\mathcal{I},\;  \mbox{x}^{(i)} \in D_2 \cup D_3 \cup P_2 \cup P_3 \cup P_4 \cup Q_2 \cup R_2 \}\liney
& + \mathbb{I}\{k>2; \;\exists \; j_1,j_2,j_3\in\mathcal{I} :  \mbox{x}^{(j_1)} \in P_2,\mbox{x}^{(j_2)} \in P_3, \mbox{x}^{(j_3)} \in Q_2;\\
& \hspace{1cm}  \;\forall i\neq j_1,j_2,j_3\in\mathcal{I},\;  \mbox{x}^{(i)} \in D_3 \cup P_2 \cup P_3 \cup P_4 \cup Q_2 \cup R_2 \}\liney
& + \mathbb{I}\{k>1; \;\exists \; j_1,j_2\in\mathcal{I} :  \mbox{x}^{(j_1)} \in R_2, \mbox{x}^{(j_2)} \in S_2;\\
& \hspace{1cm}  \;\forall i\neq j_1,j_2\in\mathcal{I},\;  \mbox{x}^{(i)} \in D_2 \cup D_3 \cup P_2 \cup P_3 \cup P_4 \cup R_2\cup S_2 \}\liney
& + \mathbb{I}\{k>2; \;\exists \; j_1,j_2,j_3\in\mathcal{I} :  \mbox{x}^{(j_1)} \in D_2,\mbox{x}^{(j_2)} \in D_3,\mbox{x}^{(j_3)} \in R_2;\\
& \hspace{1cm}  \;\forall i\neq j_1,j_2,j_3\in\mathcal{I},\;  \mbox{x}^{(i)} \in D_2 \cup D_3 \cup P_2 \cup P_3 \cup P_4  \cup R_2 \}\liney
& + \mathbb{I}\{k>2; \;\exists \; j_1,j_2,j_3\in\mathcal{I} :  \mbox{x}^{(j_1)} \in D_3,\mbox{x}^{(j_2)} \in P_2,\mbox{x}^{(j_3)} \in R_2;\\
& \hspace{1cm}  \;\forall i\neq j_1,j_2,j_3\in\mathcal{I},\;  \mbox{x}^{(i)} \in D_3 \cup P_2 \cup P_3 \cup P_4 \cup R_2 \}\liney
& + \mathbb{I}\{k>2; \;\exists \; j_1,j_2,j_3\in\mathcal{I} :  \mbox{x}^{(j_1)} \in D_2,\mbox{x}^{(j_2)} \in P_4,\mbox{x}^{(j_3)} \in R_2;\\
& \hspace{1cm}  \;\forall i\neq j_1,j_2,j_3\in\mathcal{I},\;  \mbox{x}^{(i)} \in D_2 \cup P_2 \cup P_3 \cup P_4 \cup R_2 \}\liney
& + \mathbb{I}\{k>3; \;\exists \; j_1,j_2,j_3,j_4\in\mathcal{I} :  \mbox{x}^{(j_1)} \in P_2,\mbox{x}^{(j_2)} \in P_3,\mbox{x}^{(j_3)} \in P_4,\mbox{x}^{(j_4)} \in R_2;\\
& \hspace{1cm}  \;\forall i\neq j_1,j_2,j_3,j_4\in\mathcal{I},\;  \mbox{x}^{(i)} \in P_2 \cup P_3 \cup P_4 \cup R_2 \}
\end{align*}


\subsection{ 111101,101111}
\begin{align*}
&p\left(111101 \mid {\bf x}_k,\nu, T\right)\\
&= \mathbb{I}\{k>1; \;\exists \; j_1,j_2\in\mathcal{I} :  \mbox{x}^{(j_1)} \in k_1, \mbox{x}^{(j_1)} \in R_2;\\
& \hspace{1cm}  \;\forall i\neq j_1,j_2\in\mathcal{I},\;  \mbox{x}^{(i)} \in D_1 \cup D_2 \cup M_1 \cup k_1 \cup P_1 \cup P_2 \cup P_3 \cup Q_1 \cup R_1 \cup R_2\cup S_1 \}\liney
& + \mathbb{I}\{k>2; \;\exists \; j_1,j_2,j_3\in\mathcal{I} :  \mbox{x}^{(j_1)} \in M_1,\mbox{x}^{(j_2)} \in R_2, \mbox{x}^{(j_3)} \in S_1;\\
& \hspace{1cm}  \;\forall i\neq j_1,j_2,j_3\in\mathcal{I},\;  \mbox{x}^{(i)} \in D_1 \cup D_2 \cup M_1 \cup P_1 \cup P_2 \cup P_3 \cup Q_1 \cup R_1 \cup R_2\cup S_1 \}\liney
& + \mathbb{I}\{k>2; \;\exists \; j_1,j_2,j_3\in\mathcal{I} :  \mbox{x}^{(j_1)} \in D_2, \mbox{x}^{(j_2)} \in M_1,\mbox{x}^{(j_3)} \in R_2;\\
& \hspace{1cm}  \;\forall i\neq j_1,j_2,j_3\in\mathcal{I},\;  \mbox{x}^{(i)} \in D_1 \cup D_2 \cup M_1 \cup P_1 \cup P_2 \cup P_3 \cup Q_1 \cup R_1 \cup R_2 \}\liney
& + \mathbb{I}\{k>2; \;\exists \; j_1,j_2,j_3\in\mathcal{I} :  \mbox{x}^{(j_1)} \in M_1, \mbox{x}^{(j_2)} \in P_3,\mbox{x}^{(j_3)} \in R_2;\\
& \hspace{1cm}  \;\forall i\neq j_1,j_2,j_3\in\mathcal{I},\;  \mbox{x}^{(i)} \in D_1  \cup M_1 \cup P_1 \cup P_2 \cup P_3 \cup Q_1 \cup R_1 \cup R_2\}\liney
& + \mathbb{I}\{k>2; \;\exists \; j_1,j_2,j_3\in\mathcal{I} :  \mbox{x}^{(j_1)} \in Q_1,\mbox{x}^{(j_2)} \in R_2, \mbox{x}^{(j_3)} \in S_1;\\
& \hspace{1cm}  \;\forall i\neq j_1,j_2,j_3\in\mathcal{I},\;  \mbox{x}^{(i)} \in D_1 \cup D_2 \cup P_1 \cup P_2 \cup P_3 \cup Q_1 \cup R_1 \cup R_2 \cup S_1 \}\liney
& + \mathbb{I}\{k>2; \;\exists \; j_1,j_2,j_3\in\mathcal{I} :  \mbox{x}^{(j_1)} \in D_2, \mbox{x}^{(j_2)} \in Q_1,\mbox{x}^{(j_3)} \in R_2;\\
& \hspace{1cm}  \;\forall i\neq j_1,j_2,j_3\in\mathcal{I},\;  \mbox{x}^{(i)} \in D_1 \cup D_2 \cup P_1 \cup P_2 \cup P_3 \cup Q_1 \cup R_1 \cup R_2 \}\liney
& + \mathbb{I}\{k>3; \;\exists \; j_1,j_2,j_3,j_4\in\mathcal{I} :  \mbox{x}^{(j_1)} \in P_2,\mbox{x}^{(j_2)} \in P_3, \mbox{x}^{(j_3)} \in Q_1, \mbox{x}^{(j_4)} \in R_2;\\
& \hspace{1cm}  \;\forall i\neq j_1,j_2,j_3,j_4\in\mathcal{I},\;  \mbox{x}^{(i)} \in D_1 \cup P_1 \cup P_2 \cup P_3 \cup Q_1 \cup R_1 \cup R_2 \}\liney
& + \mathbb{I}\{k>2; \;\exists \; j_1,j_2,j_3\in\mathcal{I} :  \mbox{x}^{(j_1)} \in R_1,\mbox{x}^{(j_2)} \in R_2, \mbox{x}^{(j_3)} \in S_1;\\
& \hspace{1cm}  \;\forall i\neq j_1,j_2,j_3\in\mathcal{I},\;  \mbox{x}^{(i)} \in D_1 \cup D_2 \cup P_1 \cup P_2 \cup P_3 \cup R_1 \cup R_2 \cup S_1 \}\liney
& + \mathbb{I}\{k>3; \;\exists \; j_1,j_2,j_3,j_4\in\mathcal{I} :  \mbox{x}^{(j_1)} \in D_1,\mbox{x}^{(j_2)} \in D_2,\mbox{x}^{(j_3)} \in R_1,\mbox{x}^{(j_4)} \in R_2;\\
& \hspace{1cm}  \;\forall i\neq j_1,j_2,j_3,j_4\in\mathcal{I},\;  \mbox{x}^{(i)} \in D_1 \cup D_2 \cup P_1 \cup P_2 \cup P_3  \cup R_1 \cup R_2\}\liney
& + \mathbb{I}\{k>3; \;\exists \; j_1,j_2,j_3,j_4\in\mathcal{I} :  \mbox{x}^{(j_1)} \in D_1,\mbox{x}^{(j_2)} \in P_3,\mbox{x}^{(j_3)} \in R_1, \mbox{x}^{(j_4)} \in R_2;\\
& \hspace{1cm}  \;\forall i\neq j_1,j_2,j_3,j_4\in\mathcal{I},\;  \mbox{x}^{(i)} \in D_1 \cup P_1 \cup P_2 \cup P_3 \cup R_1 \cup R_2 \}\liney
& + \mathbb{I}\{k>3; \;\exists \; j_1,j_2,j_3,j_4\in\mathcal{I} :  \mbox{x}^{(j_1)} \in D_2,\mbox{x}^{(j_2)} \in P_1, \mbox{x}^{(j_3)} \in R_1,\mbox{x}^{(j_4)} \in R_2;\\
& \hspace{1cm}  \;\forall i\neq j_1,j_2,j_3,j_4\in\mathcal{I},\;  \mbox{x}^{(i)} \in D_2 \cup P_1 \cup P_2 \cup P_3 \cup R_1 \cup R_2\}\liney
& + \mathbb{I}\{k>4; \;\exists \; j_1,j_2,j_3,j_4,j_5\in\mathcal{I} :  \mbox{x}^{(j_1)} \in P_1,\mbox{x}^{(j_2)} \in P_2,\mbox{x}^{(j_3)} \in P_3,\mbox{x}^{(j_4)} \in R_1, \mbox{x}^{(j_5)} \in R_2;\\
& \hspace{1cm}  \;\forall i\neq j_1,j_2,j_3,j_4,j_5\in\mathcal{I},\;  \mbox{x}^{(i)} \in P_1 \cup P_2 \cup P_3 \cup R_1 \cup R_2\}\line
&p\left(101111 \mid {\bf x}_k,\nu, T\right)\\
&= \mathbb{I}\{k>1; \;\exists \; j_1,j_2\in\mathcal{I} :  \mbox{x}^{(j_1)} \in k_2, \mbox{x}^{(j_2)} \in R_1;\\
& \hspace{1cm}  \;\forall i\neq j_1,j_2\in\mathcal{I},\;  \mbox{x}^{(i)} \in D_2 \cup D_3 \cup M_2 \cup k_2 \cup P_2 \cup P_3 \cup P_4 \cup Q_2 \cup R_1 \cup R_2\cup S_2 \}\liney
& + \mathbb{I}\{k>2; \;\exists \; j_1,j_2,j_3\in\mathcal{I} :  \mbox{x}^{(j_1)} \in M_2,\mbox{x}^{(j_2)} \in R_1,  \mbox{x}^{(j_3)} \in S_2;\\
& \hspace{1cm}  \;\forall i\neq j_1,j_2,j_3\in\mathcal{I},\;  \mbox{x}^{(i)} \in D_2 \cup D_3 \cup M_2 \cup P_2 \cup P_3 \cup P_4 \cup Q_2 \cup R_1 \cup R_2\cup S_2 \}\liney
& + \mathbb{I}\{k>2; \;\exists \; j_1,j_2,j_3\in\mathcal{I} :  \mbox{x}^{(j_1)} \in D_2, \mbox{x}^{(j_2)} \in M_2,\mbox{x}^{(j_3)} \in R_1;\\
& \hspace{1cm}  \;\forall i\neq j_1,j_2,j_3\in\mathcal{I},\;  \mbox{x}^{(i)} \in D_2 \cup D_3 \cup M_2 \cup P_2 \cup P_3 \cup P_4 \cup Q_2 \cup R_1 \cup R_2 \}\liney
& + \mathbb{I}\{k>2; \;\exists \; j_1,j_2,j_3\in\mathcal{I} :  \mbox{x}^{(j_1)} \in M_2, \mbox{x}^{(j_2)} \in P_2, \mbox{x}^{(j_3)} \in R_1;\\
& \hspace{1cm}  \;\forall i\neq j_1,j_2,j_3\in\mathcal{I},\;  \mbox{x}^{(i)} \in D_3  \cup M_2 \cup P_2 \cup P_3 \cup P_4 \cup Q_2 \cup R_1 \cup R_2\}\liney
& + \mathbb{I}\{k>2; \;\exists \; j_1,j_2,j_3\in\mathcal{I} :  \mbox{x}^{(j_1)} \in Q_2, \mbox{x}^{(j_2)} \in R_1,  \mbox{x}^{(j_3)} \in S_2;\\
& \hspace{1cm}  \;\forall i\neq j_1,j_2,j_3\in\mathcal{I},\;  \mbox{x}^{(i)} \in D_2 \cup D_3 \cup P_2 \cup P_3 \cup P_4 \cup Q_2 \cup R_1 \cup R_2\cup S_2 \}\liney
& + \mathbb{I}\{k>2; \;\exists \; j_1,j_2,j_3\in\mathcal{I} :  \mbox{x}^{(j_1)} \in D_2, \mbox{x}^{(j_2)} \in Q_2, \mbox{x}^{(j_3)} \in R_1;\\
& \hspace{1cm}  \;\forall i\neq j_1,j_2,j_3\in\mathcal{I},\;  \mbox{x}^{(i)} \in D_2 \cup D_3 \cup P_2 \cup P_3 \cup P_4 \cup Q_2 \cup R_1 \cup R_2 \}\liney
& + \mathbb{I}\{k>3; \;\exists \; j_1,j_2,j_3,j_4\in\mathcal{I} :  \mbox{x}^{(j_1)} \in P_2,\mbox{x}^{(j_2)} \in P_3, \mbox{x}^{(j_3)} \in Q_2, \mbox{x}^{(j_4)} \in R_1;\\
& \hspace{1cm}  \;\forall i\neq j_1,j_2,j_3,j_4\in\mathcal{I},\;  \mbox{x}^{(i)} \in D_3 \cup P_2 \cup P_3 \cup P_4 \cup Q_2 \cup R_1 \cup R_2 \}\liney
& + \mathbb{I}\{k>2; \;\exists \; j_1,j_2,j_3\in\mathcal{I} :  \mbox{x}^{(j_1)} \in R_1, \mbox{x}^{(j_2)} \in R_2, \mbox{x}^{(j_3)} \in S_2;\\
& \hspace{1cm}  \;\forall i\neq j_1,j_2,j_3\in\mathcal{I},\;  \mbox{x}^{(i)} \in D_2 \cup D_3 \cup P_2 \cup P_3 \cup P_4 \cup R_1 \cup R_2\cup S_2 \}\liney
& + \mathbb{I}\{k>3; \;\exists \; j_1,j_2,j_3,j_4\in\mathcal{I} :  \mbox{x}^{(j_1)} \in D_2,\mbox{x}^{(j_2)} \in D_3, \mbox{x}^{(j_3)} \in R_1, \mbox{x}^{(j_4)} \in R_2;\\
& \hspace{1cm}  \;\forall i\neq j_1,j_2,j_3,j_4\in\mathcal{I},\;  \mbox{x}^{(i)} \in D_2 \cup D_3 \cup P_2 \cup P_3 \cup P_4 \cup R_1 \cup R_2 \}\liney
& + \mathbb{I}\{k>3; \;\exists \; j_1,j_2,j_3,j_4\in\mathcal{I} :  \mbox{x}^{(j_1)} \in D_3,\mbox{x}^{(j_2)} \in P_2, \mbox{x}^{(j_3)} \in R_1, \mbox{x}^{(j_4)} \in R_2;\\
& \hspace{1cm}  \;\forall i\neq j_1,j_2,j_3,j_4\in\mathcal{I},\;  \mbox{x}^{(i)} \in D_3 \cup P_2 \cup P_3 \cup P_4 \cup R_1 \cup R_2 \}\liney
& + \mathbb{I}\{k>3; \;\exists \; j_1,j_2,j_3,j_4\in\mathcal{I} :  \mbox{x}^{(j_1)} \in D_2,\mbox{x}^{(j_2)} \in P_4, \mbox{x}^{(j_3)} \in R_1, \mbox{x}^{(j_4)} \in R_2;\\
& \hspace{1cm}  \;\forall i\neq j_1,j_2,j_3,j_4\in\mathcal{I},\;  \mbox{x}^{(i)} \in D_2 \cup P_2 \cup P_3 \cup P_4 \cup R_1 \cup R_2 \}\liney
& + \mathbb{I}\{k>4; \;\exists \; j_1,j_2,j_3,j_4,j_5\in\mathcal{I} :  \mbox{x}^{(j_1)} \in P_2,\mbox{x}^{(j_2)} \in P_3,\mbox{x}^{(j_3)} \in P_4, \mbox{x}^{(j_4)} \in R_1, \mbox{x}^{(j_5)} \in R_2;\\
& \hspace{1cm}  \;\forall i\neq j_1,j_2,j_3,j_4,j_5\in\mathcal{I},\;  \mbox{x}^{(i)} \in P_2 \cup P_3 \cup P_4 \cup R_1 \cup R_2 \}
\end{align*}


\subsection{ 111110,011111}



\subsection{ 111111}
All the combinations of introductions that have not been considered above will generate the data $(111111)$.

This example illustrates the difficulty in obtaining a closed-form expression of the likelihood even for our simple introduction and spread model.  It is important to note that the above strategy for computing the likelihood is a special case and may not be applicable under different model assumptions.

\section{Data}

The following table contains measurements  of presence of worms or cocoons, $Y^{(i)}_{gr}$, for quadrat $1\leq i \leq 6$ along road $1\leq r\leq n_g$ in group $1\leq g \leq 8$, and associated road age $T_{gr}$.  This data is described in the main body of the paper.
\\

\singlespacing{

}

%
%